\newtheorem{theorem}{Theorem}[section]
\newtheorem{corollary}[theorem]{Corollary}
\newtheorem{lemma}[theorem]{Lemma}
\newtheorem{proposition}[theorem]{Proposition}
\theoremstyle{remark}
\theoremstyle{remark}
\newtheorem{example}[theorem]{Example}
\theoremstyle{remark}
\newtheorem{remark}[theorem]{Remark}
\newcommand{\ran}{\operatorname{ran}}
\newcommand{\ls}{\operatorname{l.s.}}
\begin{document}

\begin{center}{\LARGE \bf
Fock representations of multicomponent (particularly non-Abelian anyon)  commutation relations}
\end{center}

{\large Alexei Daletskii}\\ Department of Mathematics, University of York, York YO1 5DD, UK;\\
e-mail: \texttt{alex.daletskii@york.ac.uk}\vspace{2mm}

{\large Alexander Kalyuzhny}\\ Institute of Mathematics, National Academy of Sciences of Ukraine, 3 Tereshenkivs'ka Str., 01601 Kyiv, Ukraine;
e-mail: \texttt{akalyuz@gmail.com}\vspace{2mm}

{\large Eugene Lytvynov}\\ Department of Mathematics,
Swansea University, Singleton Park, Swansea SA2 8PP, U.K.;
e-mail: \texttt{e.lytvynov@swansea.ac.uk}\vspace{2mm}

{\large  Daniil Proskurin}\\ Kyiv Taras Shevchenko University,
Faculty of Computer Science and Cybernetics, Volodymyrska 64,
01601 Kyiv, Ukraine;
e-mail: \texttt{prohor75@gmail.com}\vspace{2mm}

{\small
\begin{center}
{\bf Abstract}
\end{center}
\noindent Let $H$ be a separable Hilbert space and $T$ be a self-adjoint bounded  linear operator on $H^{\otimes 2}$ with norm $\le1$, satisfying  the Yang--Baxter equation. Bo\.zejko and Speicher (1994) proved that the operator $T$ determines a $T$-deformed Fock space $\mathcal F(H)=\bigoplus_{n=0}^\infty\mathcal F_n(H)$. We start with reviewing and extending the known results about the structure of the $n$-particle spaces $\mathcal F_n(H)$ and the commutation relations satisfied by the corresponding creation  and annihilation operators acting on $\mathcal F(H)$. 
We then choose $H=L^2(X\to V)$, the $L^2$-space of $V$-valued functions on $X$. Here $X:=\mathbb R^d$ and $V:=\mathbb C^m$ with $m\ge2$. Furthermore, we assume that the operator $T$ acting on $H^{\otimes 2}=L^2(X^2\to V^{\otimes 2})$ is given by $(Tf^{(2)})(x,y)=C_{x,y}f^{(2)}(y,x)$. Here, for a.a.\ $(x,y)\in X^2$, $C_{x,y}$ is a linear operator on $V^{\otimes 2}$ with norm $\le1$ that satisfies $C_{x,y}^*=C_{y,x}$ and the spectral quantum Yang--Baxter equation. The corresponding creation and annihilation operators describe a multicomponent quantum system. A special choice of the operator-valued function $C_{xy}$ in the case $d=2$ determines non-Abelian anyons (also called plektons). For a multicomponent system, we describe its $T$-deformed Fock space and the available commutation relations satisfied by the corresponding creation  and annihilation operators. Finally, we consider several examples of multicomponent quantum systems.
 } \vspace{2mm}

{\bf Keywords:} Deformed commutation relations; deformed Fock space; multicomponent quantum system; non-Abelian anyons (plektons)
\vspace{2mm}

{\bf 2010 MSC:} 47L90, 	81R10

\section{Introduction}
This paper deals with the deformations of the canonical commutation/anticommutation relations that describe multicomponent quantum systems.

The first rigorous construction of a deformation
 of  the canonical (bosonic) commutation relations (CCR) and the canonical (fermionic) anticommutation relations (CAR) was given by Bo\.zejko and Speicher  \cite{BS}, see also Fivel~\cite{Fivel,Fivel2}, Greenberg~\cite{Greenberg}, Zagier~\cite{Zagier}. Let $H$ be a separable Hilbert space and let $q\in(-1,1)$.  On a $q$-deformed Fock space $\mathcal F(H)$ over $H$, Bo\.zejko and Speicher \cite{BS} constructed creation and annihilation operators $a^{+}(f)$  and  $a^{-}(f):=(a^+(f))^*$, respectively,  for $f\in  H$, that satisfy the $q$-commutation relations:
 \begin{equation}\label{iyt867}
a^-(f)a^+(g)=qa^+(g)a^-(f) + (f, g)_{H}, \quad f,g \in  H.
\end{equation}
Observe that the limiting values $q= 1$ and $q=-1$ correspond to  the CCR and CAR, respectively. In this case, one additionally has the {\it creation-creation} and {\it annihilation-annihilation} commutation relations
\begin{align}
a^+(f)a^+(g)&=qa^+(g)a^+(f),\notag\\
 a^-(f)a^-(g)&=qa^-(g)a^-(f) ,  \quad f,g \in  H,\ q=\pm1.\label{rsessaaa}\end{align}
respectively.

The operators $a^+(f)$, $a^-(f)$ ($f\in H$) from \cite{BS} form the Fock representation of the commutation relation \eqref{iyt867}. This means that there exists a vacuum vector $\Omega\in\mathcal F(H)$ that is cyclic for the operators $a^+(f)$ ($f\in H$) and satisfies
\begin{equation}\label{xrea56e}
a^-(f)\Omega=0\quad \text{for all }f\in H.\end{equation}
In fact, formulas \eqref{iyt867} and \eqref{xrea56e} and the condition of cyclicity of $\Omega$ uniquely identify the inner product on $\mathcal F(H)$. More precisely, the $q$-deformed Fock space has the form $\mathcal F(H)=\bigoplus_{n=0}^\infty\mathcal F_n(H)$ and the inner product on each $n$-particle space $\mathcal F_n(H)$ is determined by a bounded linear operator $\mathcal P_n$ on $H^{\otimes n}$, depending on $q$.
So one of the main achievements of \cite{BS} was the proof of the positivity of the operators $\mathcal P_n$ on $H^{\otimes n}$.  Unlike the case of CCR and CAR, for $q\in(-1,1)$ the kernel of $\mathcal P_n$ contains only zero, and so  $\mathcal F_n(H)$ coincides  as a set with $H^{\otimes n}$. This implies the absence of creation-creation and  annihilation-annihilation commutation relations, compare with \eqref{rsessaaa}. Note also that the creation and annihilation operators are bounded in the case $q\in[-1,1)$.

For studies of the $C^*$-algebras generated by the $q$-commutation relations, see e.g.\ \cite{dykemanicaJReineAngew1993,kennedynicaCMP2011,JSW}. The related von Neumann algebras  were studied e.g.\  in    \cite{nouMathAnn2004,sniadyCMP2004,ricardCMP2004,shlakhtenkoIMRN2004}.
The case $q=0$ corresponds to the creation and annihilation operators acting on the full Fock space; these operators are particularly important for models of free probability, see e.g.\ \cite{NiSp,bozejko-lytvynovCMP2009,BianeSpeicher}. Various aspects of noncommutative probability related to the general $q$-commutation relations \eqref{iyt867} were discussed e.g.\ in \cite{BS,bozejkokummererspeicherCMP1997,anshelevichDocMath2001,DM}.

An important generalization of the main result of \cite{BS} was obtained  in \cite{BozSpe}. Let $T$ be a self-adjoint bounded  linear operator on $H^{\otimes 2}$ with norm $\le1$, and assume that $T$ satisfies  the Yang--Baxter equation on  $H^{\otimes 3}$, see formula \eqref{braid} below. Then, similarly to the $q$~case, Bo\.zejko and Speicher \cite{BozSpe} defined a $T$-deformed Fock space $\mathcal F(H)=\bigoplus_{n=0}^\infty\mathcal F_n(H)$. To this end, they showed that, for each $n\in\mathbb N$, the corresponding operator $\mathcal P_n$ on $H^{\otimes n}$, depending on $T$, is positive. Furthermore,
 in the case $\|T\|<1$, the kernel of $\mathcal P_n$ contains only zero, and so  $\mathcal F_n(H)$ coincides  as a set with $H^{\otimes n}$.
 If the operator $T$ is given by
  $Tf\otimes g=qg\otimes f$ for $f,g\in H$,
 then one recovers  the $q$-deformed Fock space from \cite{BS}.

By using the $T$-deformed Fock space,  Bo\.zejko and  Speicher  \cite{BozSpe}
 constructed a Fock representation of the following discrete commutation relations between creation operators $\partial^\dag_i$ and annihilation operators $\partial_i$:
\begin{equation}\label{gtf67er}
\partial_i\partial_j^\dag=\sum_{k,l}T_{jl}^{ik}\,\partial^\dag_k\partial_l+ \delta_{i,j},\quad i,j\in\mathbb N.
\end{equation}
Here $(T_{ij}^{kl})_{i,j,k,l}$ is the matrix of the operator $T$ in a fixed orthonormal basis\footnote{Note, however, that the question of convergence of the series on the right-hand side of formula~\eqref{gtf67er} was not discussed in \cite{BozSpe}. So formula \eqref{gtf67er} was rigorously proved in \cite{BozSpe} only in the case where, for any fixed $i,j$, only a finite number of $T_{jl}^{ik}$ are not equal to zero}.
 In particular, for complex $q_{ij}$ with $\overline{q_{ij}}=q_{ji}$ and  $\sup_{i,j}|q_{ij}|\le 1$, one obtains the Fock representation of  the $q_{ij}$-commutation relations:
 \begin{equation}\label{ghdtrdeyk}
\partial_i\partial^\dag_j=q_{ij}\partial^\dag_j\partial_i+\delta_{ij},
\end{equation}
see also \cite{speicherLMPh1993}.

J\o rgensen, Schmitt and Werner \cite{jorgensenschmittwernerJFA1995} found sufficient conditions for the existence of the Fock representation of the commutation relations  \eqref{gtf67er} without requiring $T$ to satisfy the Yang--Baxter equation. 
For further results related to the  commutation relations~\eqref{gtf67er} or \eqref{ghdtrdeyk}, see  e.g.\   \cite{lustpiquard1999,Krolak, Krolak2,NelZeng,Meljanac Perica}. In the case  $\|T\|=1$,  J{\o}rgensen, Proskurin, and Samo\v{\i}lenko~\cite{JoProSa} found, for $n\ge2$, the kernel of the  operator  $\mathcal P_n$
that determines the inner product on  $\mathcal F_n(H)$.

Liguori and Mintchev \cite{LM} constructed the Fock representation of quantum fields with generalized statistics. Let $H=L^2(X)$,  the complex $L^2$-space on $X:=\mathbb R^d$.  Fix a function $Q:X^2\to\mathbb C$ satisfying $Q(x,y)=\overline{Q(y,x)}$ and $|Q(x,y)|=1$. Then the Fock representation of the corresponding generalized statistics is the family of the creation and annihilation operators on the $T$-deformed Fock space with the operator $T$ on $H^{\otimes 2}=L^2(X^2)$ given by
\begin{equation}\label{cxews5}
(Tf^{(2)})(x,y)=Q(x,y)f^{(2)}(y,x),\quad f^{(2)}\in H^{\otimes 2}.\end{equation}
Let us formally define creation operators $a^+(x)$ and annihilation operators $a^-(x)$ at points $x\in X$ that satisfy
$$a^+(f)=\int_X f(x)a^+(x)\,dx,\quad a^-(f)=\int_X \overline{f(x)}\,a^-(x)\,dx,\quad f\in H.$$
It is shown in \cite{LM} that these operators satisfy the $Q$-commutation relations
\begin{equation}
a^-(x)a^+(y)=Q(x,y)a^+(y)a^-(x)+\delta(x-y)\label{sseww}
\end{equation}
and
\begin{equation}
a^+(x)a^+(y)=Q(y,x)a^+(y)a^+(x),\quad a^-(x)a^-(y)=Q(y,x)a^-(y)a^-(x)\quad (x\ne y),\label{drsodoirtgu}
\end{equation}
the formulas making rigorous sense after smearing with a function $f(x)g(y)\in H^{\otimes 2}$. Note that, in this construction, the function $Q$ may be defined only  for a.a.\ $(x,y)\in X^2$.

In physics, generalized (intermediate) statistics have been discussed since
Leinass and Myr\-heim~\cite{Leinaas_Myrheim} conjectured their existence.   The first mathematically rigorous prediction of intermediate statistics was done by Goldin, Menikoff and Sharp \cite{GMS1,GMS2}. The name {\it anyon} was given to such statistics by Wilczek \cite{W1,W2}. Anyon statistics were used, in particular, to describe the quantum Hall effect, see e.g.\ \cite{Stern}.

Fix $q\in\mathbb C$ with $|q|=1$. Define a function $Q:X^2\to\mathbb C$ by
\begin{equation}\label{w5wdcftyw}
Q(x,y):=\begin{cases}q&\text{if }x^1<y^1,\\ \bar q,&\text{if }x^1>y^1,\end{cases}\end{equation}
where $x^1$ denotes the first coordinate of $x$. As shown by Goldin and Sharp \cite{Goldin_Sharp}, Goldin and Majid \cite{GoMa},  Liguori and Mintchev \cite{LM}, for $d=2$, the corresponding  commutation relations~\eqref{sseww}, \eqref{drsodoirtgu} describe anyons---particles associated with one-dimensional  unitary representations of the braid group.

Aspects of noncommutative probability related to anyons were discussed in  \cite{bozejkolytvynovwysoczanskiCMP2012,BLR}.
Lytvynov \cite{anyons} constructed a class of non-Fock representations of the anyon commutation relations for which the corresponding vacuum state is gauge-invariant quasi-free.

Note that, for any generalized statistics, the operator $T$ given by \eqref{cxews5} is unitary. In fact, for any operator $T$ that is additionally unitary, the corresponding operator $\mathcal P_n$ on $H^{\otimes n}$ is a multiple of an orthogonal projection.
See Bo\.zejko \cite{BBB} for a much weaker condition on $T$ that is sufficient for each operator $\mathcal P_n$ to be a multiple of an orthogonal projection.

Bo\.zejko, Lytvynov and Wysocza\'nski \cite{BLW-Q} discussed  Fock representations of the deformed commutation relations in the case where the operator $T$ is given by  formula \eqref{cxews5} in which the function $Q$ satisfies $Q(x,y)=\overline{Q(y,x)}$ and $|Q(x,y)|\le 1$. In this work, the $n$-particle subspaces $\mathcal F_n(H)$ were  described explicitly, and it was proved that the corresponding creation and annihilation operators satisfy the commutation relation \eqref{sseww}. Moreover, the creation-creation and annihilation-annihilation commutation relations \eqref{drsodoirtgu} hold for $x\ne y$ such that $|Q(x,y)|=1$:
\begin{align}a^+(x)a^+(y)&=Q(y,x)a^+(y)a^+(x),\notag\\
a^-(x)a^-(y)&=Q(y,x)a^-(y)a^-(x)\quad\text{if $x\ne y$ and $|Q(x,y)|=1$}.\label{des53wrtgu}
\end{align}

In the present paper, by a multicomponent quantum system we  understand a family of creation and annihilation operators  $a^+(f)$, $a^-(f)$ on a $T$-deformed Fock space $\mathcal F(H)$, where $f$ belongs to $H=L^2(X\to V)$, the $L^2$-space of $V$-valued functions on $X$. Here $V:=\mathbb C^m$ with $m\ge2$. Furthermore, we assume that the operator $T$ acting on $H^{\otimes 2}=L^2(X^2\to V^{\otimes 2})$ is given by
\begin{equation}\label{T-C-map}
(Tf^{(2)})(x,y):=C_{x,y}f^{(2)}(y,x),\quad f^{(2)}\in H^{\otimes 2}.
\end{equation}
Here $C_{x,y}$ is a linear operator on $V^{\otimes 2}$  with norm $\le1$, which is defined for a.a.\ $(x,y)\in X^2$ and  satisfies the symmetry relation $C_{x,y}^*=C_{y,x}$ together with the spectral quantum Yang--Baxter equation, see formula~\eqref{YBxyz} below.
Under the assumption that, for a.a.\ $(x,y)\in X^2$, $C_{x,y}$ is a unitary operator on $V^{\otimes 2}$ (or, equivalently, $T$ is a unitary operator on $H^{\otimes 2}$), the multicomponent quantum systems were discussed in \cite{LM}, see also the references therein. 

A multicomponent counterpart of  an anyon system was originally called  {\it plektons}, see e.g.\ \cite{GoMa}. The first publication pointing out the possibility of such a quantum system was the  comment by Menikoff, Sharp, and Goldin \cite{GoldinMenikoffSharp}.  Plektons are  quasiparticles in dimension $d=2$ that are associated with higher-dimensional (non-Abelian) unitary representations of the braid group. In view of this, more recently these quasiparticles have been mostly called {\it non-Abelian anyons}, the term that will be used in the present paper. Non-Abelian anyons form a central tool in topological quantum computation, see e.g.\ \cite{Pachos,Stanescu}.

According to \cite{GoMa}, a non-Abelian anyon system is determined by a unitary operator $C$ on $V^{\otimes 2}$, which defines $C_{x,y}$ in \eqref{T-C-map}  via the formula
\begin{equation}\label{sr4q4w2}
C_{xy}:=\begin{cases}C,& \text{if }x^1<y^1, \\
C^*,&\text{if } x^1>y^1,\end{cases}\end{equation}
compare with \eqref{w5wdcftyw}. The operator $T$ satisfies the Yang--Baxter equation on $H^{\otimes 3}$ if and only if the operator $C$ satisfies the Yang--Baxter equation on $V^{\otimes 3}$, see Lemma~\ref{drs5ywuy} below. In the latter case, the operator $C$ determines, for each $n\ge2$, a (non-Abelian) unitary representation of the braid group $B_n$.

The paper is organized as follows. In Section~\ref{esa4w4qpp}, we review and extend the results of \cite{BozSpe,{JoProSa}} regarding the general deformed commutation relations governed by a bounded linear operator $T$ satisfying the assumptions of the paper \cite{BozSpe}. Our man results in this section are as follows.

\begin{itemize}
\item[(i)] In the case  $\|T\|=1$, we clarify the  structure of the $n$-particle subspaces $\mathcal F_n(H)$ of the $T$-deformed Fock space $\mathcal F(H)$ (Theorem~\ref{se5sw5yw5wq} and Corollary~\ref{ysd6qwe}). Furthermore, we show that the orthogonal projection $\mathbb P_n$ of $H^{\otimes n}$ onto its subspace $\mathcal F_n(H)$ can be represented, for $n\ge3$, as (a multiple of) the parallel sum of two explicitly given orthogonal projections, built with the help of $\mathbb P_2$  (Proposition~\ref{rw5uwu83xx}).

\item[(ii)] We find all possible commutation relations between the operators $a^{\pm}(f)$ and $a^{\pm}(g)$ (Theorem~\ref{rdw6uged}).
\end{itemize}

Note that previously the commutation relations between two creation operators and between two annihilation operators have only been found in the case where the operator $T$ is given by formula~\eqref{cxews5}, see \cite{BLW-Q,GoMa,LM}.

In Section~\ref{e4q4gdwyfd}, we consider the general multicomponent quantum systems. We apply the results of Section~\ref{esa4w4qpp} to the case where the operator $T$ is given by formula~\eqref{T-C-map}. The main results of this section---Theorems~\ref{vytr7i5}, \ref{yjsqrd6uq} and Corollaries~\ref{formalCR}, \ref{formalCRU}---describe the corresponding $T$-deformed Fock space and the available commutation relations between the creation/annihilation operators. In particular, we find a multicomponent counterpart of the commutation relations \eqref{sseww}, \eqref{des53wrtgu}.

Finally, in Section~\ref{sec-examp}, we consider several examples of multicomponent quantum systems. These include examples when the operator-valued function $C_{x,y}$ in formula~\eqref{T-C-map} is constant, i.e., $C_{x,y}=C$ for all $x,y$, examples of non-Abelian anyon quantum systems and other.
In these examples, we give explicit description of the corresponding Fock space $\mathcal F(H)$ and the orthogonal projection $\mathbb P_2$ of $H^{\otimes n}$ onto  $\mathcal F_2(H)$, and calculate the available commutation relations.

\section{General $T$-deformed commutation relations}\label{esa4w4qpp}

\subsection{$T$-deformed tensor power of a Hilbert space}\label{es5w72dc}

For a Hilbert space $\mathcal H$, let $\mathcal L(\mathcal H)$ denote the space of all bounded linear operators on $\mathcal H$.  We will denote by $\mathbf 1_{\mathcal H}$ the identity operator on $\mathcal H$. However, where the Hilbert space in consideration is clear from the context, we will just use $\mathbf 1$ for the identity operator on this space.

Let $H$ be a separable  complex Hilbert space, and let  $T\in\mathcal L(H^{\otimes 2})$.  We assume that $T$ is self-adjoint,  $\|T\|\le 1$, and $T$ satisfies the Yang--Baxter equation on  $H^{\otimes 3}$:
\begin{equation}
(T\otimes \mathbf 1_H)(\mathbf 1_H\otimes T)(T\otimes \mathbf 1_H)=(\mathbf 1_H\otimes T)(T\otimes \mathbf 1_H)(\mathbf 1_H\otimes T).
\label{braid}
\end{equation}%
  For $i\in\mathbb N$, we denote by $T_i$ the operator on $H^{\otimes n}$ with $n\ge i+1$ given by
  $$T_i:=\mathbf 1_{H^{\otimes(i-1)}}\otimes T\otimes\mathbf 1_{H^{\otimes(n-i-1)}}.$$

Let $S_n$ denote the symmetric group of degree $n$. We represent  a permutation $\sigma\in S_n$ as an arbitrary  product of adjacent transpositions,
\begin{equation}\label{representation of permutation}
\sigma = \sigma_{j_1}\dotsm \sigma_{j_m},
\end{equation}
 where
$\sigma_j:=(j, j+1)\in S_n$ for $1\leq j \leq n-1$.
 A permutation $\sigma
\in S_n$ can be represented (not in a unique way, in general) as a
reduced product of a minimal number of adjacent  transpositions,  i.e., in the form  \eqref{representation of permutation} with a minimal $m$. Then the mapping $
\sigma_k\mapsto T_{\sigma_k}:=T_k\in\mathcal L(H^{\otimes n})
$ can be multiplicatively extended
to $S_n$ by setting
\begin{equation}\label{operators Psi_pi}
S_n\ni\sigma \mapsto T_{\sigma}:=T_{j_1}\dotsm T_{j_m}.
\end{equation}
(For the identity permutation $e\in S_n$, $T_e:=\mathbf 1$.)
Although
representation (\ref{representation of permutation}) of $\sigma\in S_n$ in a reduced form is not unique,
formula (\ref{braid}) implies that the extension
(\ref{operators Psi_pi}) is well-defined, i.e., it does not depend on
the representation of $\sigma$.

For each $n\ge 2$, we define  $\mathcal P_n\in\mathcal L(H^{\otimes n})$ by
$$\mathcal P_n:=\sum_{\sigma\in S_n}T_\sigma.$$ By \cite{BozSpe}, the operator $\mathcal P_n$ is positive, i.e., $\mathcal P_n\ge0$, and in the case $\|T\|<1$, it is strictly positive. We denote
\begin{equation}\label{vtrw6u4}
\mathcal F_n(H):=\ker(\mathcal P_n)^\perp=\overline{\operatorname{ran}(\mathcal P_n)},\end{equation}
 i.e., the orthogonal compliment of the kernel of $\mathcal P_n$ in $H^{\otimes n}$, or equivalently the closure of the range of $\mathcal P_n$. As easily seen, the operator $\mathcal P_n$ is strictly positive on $\mathcal F_n(H)$, so one can introduce a new inner product on $\mathcal F_n(H)$ by
$$(f^{(n)},g^{(n)})_{\mathcal F_n(H)}:=(\mathcal P_nf^{(n)},g^{(n)})_{H^{\otimes n}},\quad f^{(n)}, g^{(n)}\in \mathcal F_n(H),$$
which makes $\mathcal F_n(H)$ a Hilbert space. Note that, if $T=\mathbf 0$, the Hilbert spaces $\mathcal F_n(H)$ and $H^{\otimes n}$ coincide.  Thus, a non-zero operator $T$ leads to a deformation  of the Hilbert space $H^{\otimes n}$.

Let $\mathbb P_n$ denote the orthogonal projection of the Hilbert space $H^{\otimes n}$ onto its subspace $\mathcal F_n(H)$.

Assume in addition that the operator $T$ is unitary. Then mapping \eqref{operators Psi_pi} determines a unitary representation of $S_n$, hence in formula \eqref{operators Psi_pi} $\sigma$ should not necessarily be in a reduced form.  This implies the equality $\mathbb P_n=\frac1{n!}\mathcal P_n$, which does not hold in the general case.

As already mentioned before, in the case $\|T\|<1$, $H^{\otimes n}$ and $\mathcal F_n(H)$ coincide as sets. In the case where $\|T\|=1$, the following result shown in \cite{JoProSa} gives a description of $\ker(\mathcal P_n)=\ker(\mathbb P_n)$:
\begin{equation}
\ker(\mathcal{P}_{n})=\overline{\sum_{i=1}^{n-1}\ker (\mathbf{1}+T_{i})},
\label{KerPn}
\end{equation}%
i.e., the kernel of $\mathcal{P}_{n}$ is equal to the closure of the linear
span of the subspaces $\ker(\mathbf{1}+T_{i})$, $i=1,\dots ,n-1$.
Note that formula \eqref{KerPn} remains true when $\|T\|<1$, in which case it gives $\ker(\mathcal{P}_{n})=\{0\}$.
Since $\ker(\mathbf{1}+T_{i})^\perp=\overline{\operatorname{ran}(\mathbf{1}+T_{i})}$, formulas \eqref{vtrw6u4} and \eqref{KerPn} imply
\begin{equation}\label{dr6eu48}
\mathcal F_n(H)=\bigcap_{i=1}^{n-1}\overline{ \operatorname{ran}(\mathbf{1}+T_{i})}.\end{equation}

In view of formula \eqref{dr6eu48}, we can now give a representation of the orthogonal projection $\mathbb P_n$ onto $\mathcal F_n(H)$ by using the notion of a parallel sum of two projections, see e.g.\ \cite{morley}.  Let us first recall this notion.  Let $\mathcal H$ be a complex Hilbert space, let $\mathcal H_1$ and $\mathcal H_2$ be closed subspaces of $\mathcal H$, and let $P_1$ and $P_2$ denote the orthogonal projections of $\mathcal H$ onto $\mathcal H_1$ and $\mathcal H_2$, respectively. The parallel sum of $P_1$ and $P_2$, denoted by $(P_1:P_2)$, is the self-adjoint bounded linear operator on $\mathcal H$ defined by its quadratic form
\begin{equation}\label{re6ewfe}
((P_1:P_2)x,x)_{\mathcal H}=\inf_{y+z=x}\big((P_1y,y)_{\mathcal H}+(P_2z,z)_{\mathcal H}\big),\quad x\in\mathcal H.\end{equation}
The right-hand side of \eqref{re6ewfe} is equal to $\frac12\|x\|_{\mathcal H}^2$
for $x\in\mathcal H_1\cap\mathcal H_2$ and equal to zero for $x\in(\mathcal H_1\cap\mathcal H_2)^\perp$. Hence, $2(P_1:P_2)$ is the orthogonal projection of $\mathcal H$ onto $\mathcal H_1\cap\mathcal H_2$. Observe that $2(P_1:P_2)=P_1P_2$ if and only if $P_1$ and $P_2$ commute, or equivalently $\mathcal H_1\perp\mathcal H_2$.

Denote $\mathfrak P:=\mathbb P_2$ and analogously to operators $T_i$ define operators $\mathfrak P_i$. Then, for $n\ge i+1$,  $\mathfrak P_i$ is the orthogonal projection of $H^{\otimes n}$ onto $\overline{\operatorname{ran}(\mathbf{1}+T_{i})}$.

\begin{proposition}\label{rw5uwu83xx}
Let  $n\ge 3$. Define operators $Q_1$ and $Q_2$ on $H^{\otimes n}$ by
$$Q_{1}:=\prod\limits_{\substack{ i\leq n-1  \\ \text{\rm $i$ odd}}}\mathfrak{P}_{i},\quad Q_{2}:=\prod\limits_{\substack{ i\leq n-1  \\ \text{\rm $i$ even}}}\mathfrak{P}_{i}.$$
Then the operators $Q_1$ and $Q_2$ are orthogonal projections and $\mathbb P_n=2(Q_1:Q_2)$.  Furthermore, for each $m\in\mathbb N$, $2\le m\le n-1$,
\begin{equation}\label{vydr6ei6}
\mathbb P_{n}=\mathbb P_{n}(\mathbb P_m\otimes \mathbf 1_{H^{\otimes(m-n)}})=\mathbb P_{n}(\mathbf 1_{H^{\otimes(m-n)}}\otimes\mathbb P_m).
\end{equation}
\end{proposition}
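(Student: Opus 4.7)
The plan is to exploit the description $\mathcal F_n(H)=\bigcap_{i=1}^{n-1}\overline{\operatorname{ran}(\mathbf 1+T_i)}$ from \eqref{dr6eu48}, together with the elementary fact that $T_iT_j=T_jT_i$ whenever $|i-j|\ge2$. This commutation is immediate from the definition of $T_i$: in that case the nontrivial tensor factors of $T_i$ and $T_j$ occupy disjoint positions in $H^{\otimes n}$.

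For the parallel sum identity, I would first show that $Q_1$ and $Q_2$ are orthogonal projections. Since $T_i$ is self-adjoint with $\|T_i\|\le 1$, the projection $\mathfrak P_i$ onto $\overline{\operatorname{ran}(\mathbf 1+T_i)}$ is the spectral projection of $T_i$ corresponding to the complement of $\{-1\}$ in its spectrum, hence a bounded Borel function of $T_i$. The commutation $T_iT_j=T_jT_i$ for $|i-j|\ge2$ therefore promotes to $\mathfrak P_i\mathfrak P_j=\mathfrak P_j\mathfrak P_i$. Since any two distinct odd indices differ by at least $2$, the factors in $Q_1$ pairwise commute, so $Q_1$ is an orthogonal projection onto $\bigcap_{\text{odd }i\le n-1}\overline{\operatorname{ran}(\mathbf 1+T_i)}$; the same reasoning applies to $Q_2$. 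Invoking the property of the parallel sum recalled right after \eqref{re6ewfe}, $2(Q_1:Q_2)$ is the orthogonal projection onto $\operatorname{ran}(Q_1)\cap\operatorname{ran}(Q_2)=\bigcap_{i=1}^{n-1}\overline{\operatorname{ran}(\mathbf 1+T_i)}=\mathcal F_n(H)$, which is $\mathbb P_n$.

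For the factorization identities, I would observe that for $1\le i\le m-1$ the operator $T_i\in\mathcal L(H^{\otimes n})$ has the form $\widetilde T_i\otimes \mathbf 1_{H^{\otimes(n-m)}}$, where $\widetilde T_i$ denotes the corresponding operator on $H^{\otimes m}$. Consequently
\begin{equation*}
\overline{\operatorname{ran}(\mathbf 1_{H^{\otimes n}}+T_i)}=\overline{\operatorname{ran}(\mathbf 1_{H^{\otimes m}}+\widetilde T_i)}\otimes H^{\otimes(n-m)}.
\end{equation*}
Since intersection of closed subspaces distributes over the Hilbert tensor product with a fixed factor, intersecting over $i=1,\dots,m-1$ yields $\mathcal F_m(H)\otimes H^{\otimes(n-m)}$, which is precisely the range of $\mathbb P_m\otimes\mathbf 1_{H^{\otimes(n-m)}}$. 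As $\mathcal F_n(H)$ is the intersection over \emph{all} $i=1,\dots,n-1$, it is contained in $\mathcal F_m(H)\otimes H^{\otimes(n-m)}$, yielding $\mathbb P_n(\mathbb P_m\otimes\mathbf 1_{H^{\otimes(n-m)}})=\mathbb P_n$. The second equality follows by the same argument with the indices $n-m+1\le i\le n-1$.

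The only point that requires any care is the distributivity of intersection over tensoring by a fixed Hilbert factor; this is standard (equivalent to the fact that $\mathcal K\otimes H^{\otimes(n-m)}$ is the kernel of $P_{\mathcal K^\perp}\otimes\mathbf 1$ for any closed $\mathcal K\subseteq H^{\otimes m}$) and is the only genuinely non-formal ingredient. Everything else is a direct combination of \eqref{dr6eu48}, the spectral calculus for the commuting self-adjoint operators $T_i$ (for $|i-j|\ge 2$), and the defining property of the parallel sum.
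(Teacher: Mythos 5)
Your proof is correct and follows essentially the same route as the paper's: commuting factors make $Q_1,Q_2$ orthogonal projections onto the odd/even intersections, the parallel-sum property plus \eqref{dr6eu48} gives $\mathbb P_n=2(Q_1:Q_2)$, and the factorization identities follow because $\mathbb P_m\otimes\mathbf 1$ projects onto $\bigcap_{i=1}^{m-1}\overline{\operatorname{ran}(\mathbf 1+T_i)}\supseteq\mathcal F_n(H)$. You merely supply details the paper leaves implicit (the spectral-calculus justification of commutativity, which is in fact immediate from the tensor-factor structure of $\mathfrak P_i$, and the distributivity of intersections over tensoring), and you correctly write $n-m$ where the paper's exponent $m-n$ is a typo.
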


\begin{proof}
Observe that the projections $\mathfrak P_i$ with odd (respectively even) $i$ mutually commute. This implies that $Q_1$ and $Q_2$ are orthogonal projections onto
$$\bigcap_{\substack{ i\leq n-1  \\ \text{$i$ odd}}}  \overline{\operatorname{ran}(\mathbf{1}+T_{i})}\quad  \text{and}\quad\bigcap_{\substack{ i\leq n-1  \\ \text{$i$ even}}}  \overline{\operatorname{ran}(\mathbf{1}+T_{i})},$$
respectively. Formula \eqref{dr6eu48} implies $\mathbb P_n=2(Q_1:Q_2)$.

Let us prove the first equality in \eqref{vydr6ei6}, the second one being proved similarly. The operator $\mathbb P_m\otimes \mathbf 1_{H^{\otimes (m-n)}}$ is the orthogonal projection of $H^{\otimes n}$ onto the subspace
$$\bigcap_{i=1}^{m-1}\overline{ \operatorname{ran}(\mathbf{1}+T_{i})}.$$ But $\mathcal F_n(H)$ is a subspace of this space (see \eqref{re6ewfe}), which implies the statement.
\end{proof}

\begin{theorem}\label{se5sw5yw5wq}
For each $n\ge2$, we have
\begin{equation}\label{cx4aq4}
\mathcal F_n(H)=\big\{f^{(n)}\in H^{\otimes n}\mid (\mathbf 1-T_i)f^{(n)}\in \overline{\operatorname{ran}(\mathbf 1-T_i^2)},\ i=1,2,\dots,n-1\big\}. \end{equation}
\end{theorem}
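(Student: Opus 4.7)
The plan is to reduce the statement to formula \eqref{dr6eu48}, which we already have: $\mathcal F_n(H) = \bigcap_{i=1}^{n-1}\overline{\operatorname{ran}(\mathbf 1+T_i)}$. Comparing this with the right-hand side of \eqref{cx4aq4}, it suffices to prove the following single-operator equivalence for each fixed $i\in\{1,\dots,n-1\}$: for $f^{(n)}\in H^{\otimes n}$,
\[
f^{(n)}\in\overline{\operatorname{ran}(\mathbf 1+T_i)}\iff (\mathbf 1-T_i)f^{(n)}\in\overline{\operatorname{ran}(\mathbf 1-T_i^2)}.
\]
Because $T_i$ is self-adjoint with $\|T_i\|\le 1$, this is really a statement about a single bounded self-adjoint contraction on a Hilbert space, with no use of the Yang--Baxter equation or of the other $T_j$. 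So I would prove it at that level of generality and then apply it to each $T_i$ and intersect.

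To establish the equivalence, I would first identify the two closed subspaces in question as orthogonal complements of kernels: $\overline{\operatorname{ran}(\mathbf 1+T_i)}=\ker(\mathbf 1+T_i)^\perp$ and $\overline{\operatorname{ran}(\mathbf 1-T_i^2)}=\ker(\mathbf 1-T_i^2)^\perp$. The key algebraic fact is the orthogonal direct sum decomposition
\[
\ker(\mathbf 1-T_i^2)=\ker(\mathbf 1-T_i)\oplus\ker(\mathbf 1+T_i),
\]
which follows because eigenvectors of a self-adjoint operator for distinct eigenvalues ($+1$ and $-1$) are mutually orthogonal, and every vector $f$ with $T_i^2 f=f$ can be written as $f=\tfrac12(\mathbf 1+T_i)f+\tfrac12(\mathbf 1-T_i)f$, where the two summands lie in $\ker(\mathbf 1-T_i)$ and $\ker(\mathbf 1+T_i)$ respectively.

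With this decomposition in hand, the condition $(\mathbf 1-T_i)f^{(n)}\in\overline{\operatorname{ran}(\mathbf 1-T_i^2)}$ becomes the pair of orthogonality conditions that $(\mathbf 1-T_i)f^{(n)}$ is perpendicular to both $\ker(\mathbf 1-T_i)$ and $\ker(\mathbf 1+T_i)$. Using self-adjointness of $T_i$, for $g\in\ker(\mathbf 1-T_i)$ we have $\bigl((\mathbf 1-T_i)f^{(n)},g\bigr)=\bigl(f^{(n)},(\mathbf 1-T_i)g\bigr)=0$ automatically, so the first condition is free. For $h\in\ker(\mathbf 1+T_i)$ the same manipulation gives $\bigl((\mathbf 1-T_i)f^{(n)},h\bigr)=\bigl(f^{(n)},2h\bigr)$, so the second condition is equivalent to $f^{(n)}\perp\ker(\mathbf 1+T_i)$, i.e.\ to $f^{(n)}\in\overline{\operatorname{ran}(\mathbf 1+T_i)}$. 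This yields the required equivalence, and intersecting over $i=1,\dots,n-1$ together with \eqref{dr6eu48} finishes the proof.

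There is no real obstacle here; the only point requiring a little care is the orthogonal direct sum decomposition of $\ker(\mathbf 1-T_i^2)$, which is where self-adjointness of $T$ (and hence of $T_i$) is used decisively. Everything else is a short symmetric-form computation.
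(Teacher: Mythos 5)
Your proof is correct. It differs from the paper's argument in presentation rather than in substance, but the difference is worth noting. The paper reduces to the case $n=2$ via \eqref{dr6eu48}, then decomposes $H^{\otimes 2}=\ker(\mathbf 1+T)\oplus\ker(\mathbf 1-T)\oplus\overline{\operatorname{ran}(\mathbf 1-T^2)}$, writes $f^{(2)}=f^{(2)}_1+f^{(2)}_2+f^{(2)}_3$, and tracks where $(\mathbf 1-T)$ sends each component using the $T$-invariance of the three subspaces; it also records Lemma~\ref{ker-gen} along the way. You instead work dually: you only decompose $\ker(\mathbf 1-T_i^2)$ into the $\pm1$ eigenspaces and test $(\mathbf 1-T_i)f^{(n)}$ for orthogonality against each, moving $\mathbf 1-T_i$ across the inner product by self-adjointness. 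Both arguments hinge on exactly the same two facts --- that $\mathbf 1-T_i$ annihilates $\ker(\mathbf 1-T_i)$ and acts as $2\cdot\mathbf 1$ on $\ker(\mathbf 1+T_i)$ --- so the mathematics is the same, but your version avoids the explicit three-fold decomposition and the auxiliary Lemma~\ref{ker-gen}, treats each $T_i$ as a standalone self-adjoint contraction so that no separate discussion of the cases $\|T\|<1$ and $\|T\|=1$ is needed, and is arguably the cleaner of the two. The one point you should make explicit when writing it up is the identification $\overline{\operatorname{ran}(A)}=\ker(A)^{\perp}$ for self-adjoint $A$, which you use for both $\mathbf 1+T_i$ and $\mathbf 1-T_i^2$; otherwise the argument is complete.
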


\begin{proof} Note that, when $\|T\|<1$, formula \eqref{cx4aq4} just states the known equality $\mathcal F_n(H)=H^{\otimes n}$. So we only need to prove formula \eqref{cx4aq4} in the case $\|T\|=1$.
We start with the following lemma.

\begin{lemma}\label{ker-gen}
The kernel of the operator $\mathbf{1}+T$ has the
following representation:
\begin{equation}
\ker(\mathbf 1+T)=(\mathbf 1-T)\ker(\mathbf 1-T^2).
\label{KerrT}\end{equation}
\end{lemma}

\begin{proof}
If $f^{(2)}\in \ker(\mathbf 1-T^2)$, then $(\mathbf 1-T)f^{(2)}\in\ker(\mathbf 1+T)$, which implies the inclusion
$$(\mathbf 1-T)\ker(\mathbf 1-T^2)\subset \ker(\mathbf 1+T).$$
To prove the converse inclusion, take any $f^{(2)}\in \ker(\mathbf 1+T)$ (i.e., $f^{(2)}=-Tf^{(2)}$). Then
$\frac12\,f^{(2)}\in  \ker(\mathbf 1-T^2)$ and
$$(\mathbf 1-T){\textstyle\frac12}\,f^{(2)} ={\textstyle\frac12}\,f^{(2)}+{\textstyle\frac12}\,f^{(2)}=f^{(2)}.$$
Thus, formula \eqref{KerrT} is shown.
\end{proof}

In the case $n=2$, formula \eqref{cx4aq4} states
\begin{equation}\label{tesw5a5e}
\mathcal F_2(H)=\big\{f^{(2)}\in H^{\otimes 2}\mid (\mathbf 1-T)f^{(2)}\in \overline{\operatorname{ran}(\mathbf 1-T^2)}\big\}.\end{equation}
Let us now prove this formula. Observe that
\begin{equation}\label{orthdec}
H^{\otimes 2}=\ker(\mathbf 1+T)\oplus
\ker(\mathbf 1-T)\oplus \overline{\operatorname{ran}(\mathbf 1-T^2)}.
\end{equation}
Thus, each  $f^{(2)}\in H^{\otimes 2}$ can be represented as
 $f^{(2)}=f^{(2)}_1+f^{(2)}_2+f^{(2)}_3$, where $f^{(2)}_1\in \ker(\mathbf 1+T)$, $f^{(2)}_2\in \ker(\mathbf 1-T)$, $f^{(2)}_3\in \overline{\operatorname{ran}(\mathbf 1-T^2)}$, and $f^{(2)}\in \mathcal F_2(H)$ if and only if $f^{(2)}_1=0$. Note that the subspaces $\ker(\mathbf 1+T)$, $\ker(\mathbf 1-T)$, and $\overline{\operatorname{ran}(\mathbf 1-T^2)}$ are invariant for the operator $T$, hence also for the operator $\mathbf 1-T$. Therefore, for $f^{(2)}\in H^{\otimes 2}$, we get
 $$(\mathbf 1-T)f^{(2)}=(\mathbf 1-T)f_1^{(2)}+(\mathbf 1-T)f_3^{(2)}$$
 with $(\mathbf 1-T)f_1^{(2)}\in\ker(\mathbf 1-T)$ and $(\mathbf 1-T)f_3^{(2)}\in \overline{\operatorname{ran}(\mathbf 1-T^2)}$. Hence, condition
 \begin{equation}\label{xerwq5u636ui}
(\mathbf 1-T)f^{(2)}\in \overline{\operatorname{ran}(\mathbf 1-T^2)}\end{equation}
  is satisfied if and only if $(\mathbf 1-T)f_1^{(2)}=0$. But since $f_1^{(2)}\in \ker(\mathbf 1+T)$,  we have $(\mathbf 1-T)f_1^{(2)}=0$ if and only if $f^{(2)}_1=0$. Thus, formula~\eqref{tesw5a5e} is proved.

 For $n\ge3$, formula~\eqref{cx4aq4} follows from \eqref{dr6eu48} and~\eqref{tesw5a5e}.
\end{proof}

\begin{corollary}\label{ysd6qwe} Assume additionally that the operator $T$ is unitary. Then, for each $n\ge2$,
$$\mathcal F_n(H)=\big\{f^{(n)}\in H^{\otimes n}\mid T_if^{(n)}=f^{(n)},\ i=1,2,\dots,n-1\big\}. $$
\end{corollary}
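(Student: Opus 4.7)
The plan is to apply Theorem~\ref{se5sw5yw5wq} directly, using the fact that a self-adjoint unitary operator is an involution.

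First I would observe that $T$ is already assumed to be self-adjoint with $\|T\|\le 1$ throughout the section, and the additional unitarity hypothesis gives $T^*T=\mathbf 1$. Combining self-adjointness with unitarity yields $T^2=T^*T=\mathbf 1$, so $\mathbf 1-T^2=0$ on $H^{\otimes 2}$. Tensoring with identities, this immediately gives $\mathbf 1-T_i^2=0$ on $H^{\otimes n}$ for every $i=1,\dots,n-1$, and hence $\overline{\operatorname{ran}(\mathbf 1-T_i^2)}=\{0\}$.

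Next I would substitute this into the characterization~\eqref{cx4aq4} provided by Theorem~\ref{se5sw5yw5wq}: the membership condition $(\mathbf 1-T_i)f^{(n)}\in\overline{\operatorname{ran}(\mathbf 1-T_i^2)}$ collapses to $(\mathbf 1-T_i)f^{(n)}=0$, i.e.\ $T_if^{(n)}=f^{(n)}$, for each $i=1,\dots,n-1$. This gives exactly the stated description of $\mathcal F_n(H)$.

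There is essentially no obstacle here; the corollary is a direct specialization of the preceding theorem once one notices that self-adjoint plus unitary forces $T$ to be an involution. If desired, one could also remark (as a sanity check consistent with the earlier observation $\mathbb P_n=\frac{1}{n!}\mathcal P_n$ in the unitary case) that $T_i$ is itself a self-adjoint unitary on $H^{\otimes n}$, so $H^{\otimes n}$ splits as the orthogonal sum $\ker(\mathbf 1-T_i)\oplus\ker(\mathbf 1+T_i)$, and $\mathcal F_n(H)$ is simply the intersection of the $+1$-eigenspaces of $T_1,\dots,T_{n-1}$, consistent with $T_\sigma$ defining a unitary representation of $S_n$ and $\mathcal F_n(H)$ being its invariants.
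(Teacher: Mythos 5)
Your proposal is correct and follows exactly the paper's own argument: self-adjointness plus unitarity gives $T^2=\mathbf 1$, hence $\overline{\operatorname{ran}(\mathbf 1-T_i^2)}=\{0\}$, and Theorem~\ref{se5sw5yw5wq} then reduces the membership condition to $T_if^{(n)}=f^{(n)}$. The closing remark about eigenspace decompositions is a harmless extra consistency check not present in the paper.
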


\begin{proof}
Since $T$ is both self-adjoint and unitary, we have $T^{-1}=T$. Hence,
$$\operatorname{ran}(\mathbf 1-T^2)=\{0\}.$$
Now the statement follows from Theorem~\ref{se5sw5yw5wq}.
\end{proof}

\begin{remark}
In view of Corollary \ref{ysd6qwe}, in the case where $T$ is additionally unitary, we can interpret $\mathcal F_n(H)$ as the $n$th $T$-symmetric tensor power of $H$.
\end{remark}

\subsection{Creation and annihilation operators on the full Fock space}\label{utkfwedt7}

We will now recall and extend the construction of creation and annihilation operators acting on the full Fock space, compare with e.g.\ \cite[Lecture 7]{NiSp}.

Let $H_{\mathbb R}$ be a real separable Hilbert space, and let  $H$ denote the complex Hilbert space constructed as the complexification of $H_{\mathbb R}$. More precisely, elements of $H$ are of the form $f_1+if_2$ with $f_1,f_2\in H_{\mathbb R}$. For $f=f_1+if_2,\, g=g_1+ig_2\in H$, we denote
$$\langle f,g\rangle:=(f_1,f_2)_{H_{\mathbb R}}-(g_1,g_2)_{H_{\mathbb R}}+i\big((f_1,g_2)_{H_{\mathbb R}}+(f_2,g_1)_{H_{\mathbb R}}\big),$$
i.e., $\langle\cdot,\cdot\rangle$ is the extension of the inner product on $H_{\mathbb R}$ by linearity to $H\times H$. Then the inner product on $H$ is given by
$(f,g)_H:=\langle f,Jg\rangle$,
where
\begin{equation}\label{compconj}
Jg=J(g_1+ig_2):=g_1-ig_2
\end{equation}
is the complex conjugation on the space $H$ considered as the complexification of $H_{\mathbb R}$.

Let
$\Gamma(H)$ denote the full Fock space over $H$:
$$\Gamma(H):=\bigoplus_{n=0}^\infty H^{\otimes n}.$$
Here $H^0:=\mathbb C$. The vector $\Omega:=(1,0,0,0,\dots)\in\Gamma(H)$ is called the vacuum.

For each $f\in H$, we denote by $l^+(f)$ the  operator of left creation by $f$. This is the bounded linear operator on $\Gamma(H)$ satisfying
$l^+(f)\Omega=f$ and
$l^+(f) g^{(n)}=f\otimes g^{(n)}$ for $g^{(n)}\in H^{\otimes n}$, $n\in\mathbb N$. Note that $\|l^+(f)\|=\|f\|$.  The operator of left annihilation by $f$ is defined by
$$l^-(f):=l^+(Jf)^*.$$
This operator satisfies
\begin{align*}
&l^-(f)\Omega=0,\\
&l^-(f)g_1\otimes g_2\otimes\dots\otimes g_n=
\langle f,g_1\rangle g_2\otimes\dots\otimes g_n,\quad g_1,g_2,\dots,g_n\in H,\ n\in\mathbb N.\end{align*}

For $f_1,f_2,g\in H$, we denote
$\langle f_1\otimes f_2,g\rangle_2:=\langle f_2,g\rangle f_1$.
As easily seen, $\langle f_1\otimes f_2,\cdot\rangle_2$ determines a Hilbert--Schmidt operator on $H$. Extending this definition by linearity and continuity, we define, for an arbitrary $f^{(2)}\in H^{\otimes 2}$, a Hilbert--Schmidt operator  $\langle f^{(2)},\cdot\rangle_2$ on $H$ with Hilbert--Schmidt norm
$\|f^{(2)}\|_{H^{\otimes 2}}$.

For $f_1,f_2\in H$ and $g^{(n)}\in H^{\otimes n}$, we have
\begin{equation}\label{tsesw5w5}
l^+(f_1)l^-(f_2)g^{(n)}=\big(\langle f_1\otimes f_2,\cdot\rangle _2\big)\otimes\mathbf 1_{H^{\otimes(n-1)}}g^{(n)}.\end{equation}
Indeed, choosing $g^{(n)}=g_1\otimes g_2\otimes\dots\otimes g_n$ with $g_1,\dots,g_n\in H$, we get
$$l^+(f_1)l^-(f_2)g^{(n)}=\langle f_2,g_1\rangle f_1\otimes g_2\otimes\dots\otimes g_n=\big(\langle f_1\otimes f_2,g_1\rangle _2\big)\otimes g_2\otimes\dots\otimes g_n,$$
which implies \eqref{tsesw5w5}. 
In view of \eqref{tsesw5w5}, for each $f^{(2)}\in H^{\otimes 2}$, we can define a bounded linear operator $l^{+-}(f^{(2)})$ on $\Gamma(H)$ by
\begin{align*}
&l^{+-}(f^{(2)})\Omega:=0,\\
&l^{+-}(f^{(2)})g^{(n)}:= \big(\langle f^{(2)},\cdot\rangle _2\big)\otimes\mathbf 1_{H^{\otimes(n-1)}}\,,\quad g^{(n)}\in H^{\otimes n},\ n\in\mathbb N.
\end{align*}

Let $(e_i)_{i\ge1}$ be an orthonormal basis of $H_{\mathbb R}$, hence also an orthonormal basis of $H$. Then, for each $f^{(2)}\in H^{\otimes 2}$, we easily see that
$$l^{+-}(f^{(2)})=\sum_{i,j}\langle f^{(2)},e_i\otimes e_j\rangle l^+(e_i)l^-(e_j),$$
where the series converges in the operator norm. Here and below, for  $f^{(2)},g^{(2)}\in H^{\otimes 2}$, we use the notation $\langle f^{(2)},g^{(2)}\rangle:=(f^{(2)},Jg^{(2)})_{H^{\otimes 2}}$, where $J$  denotes the complex conjugation on  $H^{\otimes 2}$, cf. \eqref{compconj}.

Similarly, for each $f^{(2)}\in H^{\otimes 2}$, we define bonded linear operators $l^{++}(f^{(2)})$ and $l^{--}(f^{(2)})$ on $\Gamma(H)$ that satisfy
\begin{align*}
l^{++}(f^{(2)})&=\sum_{i,j}\langle f^{(2)},e_i\otimes e_j\rangle l^+(e_i)l^+(e_j),\\
l^{--}(f^{(2)})&=\sum_{i,j}\langle f^{(2)},e_i\otimes e_j\rangle l^-(e_i)l^-(e_j),
\end{align*}
the series converging in the operator norm.
Note that $l^{++}(f^{(2)})^*=l^{--}(\mathbb Sf^{(2)})$, where $\mathbb S$ denotes the continuous antilinear operator on $H^{\otimes 2}$ satisfying
$$\mathbb Sf\otimes g:=J(g\otimes f),\quad f,g\in H.$$

\subsection{Creation and annihilation operators on the $T$-deformed Fock space}\label{uerqdqdf}

Let an  operator $T$ and a Hilbert space $H$ be as in Subsection \ref{es5w72dc} and \ref{utkfwedt7}, respectively.
We define the $T$-deformed Fock space over $H$ by
$$\mathcal F (H):=\bigoplus_{n=0}^\infty \mathcal F_n(H).$$
Here $\mathcal F_0(H):=\mathbb C$ and the vector $\Omega=(1,0,0,0,\dots)$ is still called the vacuum.
Note that the full Fock space $\Gamma(H)$ is the special case of $\mathcal F (H)$ for $T=\mathbf 0$.

Let $\mathcal F_{\mathrm{fin}} (H)$ denote the subspace of $\mathcal F (H)$ that consists of all finite sequences $f=(f^{(0)},f^{(1)},\dots,f^{(n)},0,0,\dots)$ with $f^{(i)}\in\mathcal F_i(H)$. We equip $\mathcal F_{\mathrm{fin}} (H)$ with the topology of the topological direct sum of the $\mathcal F^{(n)}(H)$ spaces. Thus, convergence of a sequence in $\mathcal F_{\mathrm{fin}} (H)$ means uniform finiteness and coordinate-wise convergence of non-zero coordinates. We denote by $\mathcal L(\mathcal F_{\mathrm{fin}} (H))$ the space of all continuous linear operators on $\mathcal F_{\mathrm{fin}} (H)$.

For $f\in H$, we define a creation operator $a^+(f)$
as the linear operator on $\mathcal F_{\mathrm{fin}} (H)$ given by
\begin{align*}
&a^+(f)\Omega:=f,\\
&a^+(f)g^{(n)}:=\mathbb P_{n+1}l^+(f)g^{(n)}=\mathbb P_{n+1}(f\otimes g^{(n)}),\quad g^{(n)}\in\mathcal F_n(H),\ n\in\mathbb N.
\end{align*}
Note that formula \eqref{vydr6ei6} implies that
\begin{equation}\label{yre6ueu}
a^+(f)\mathbb P_ng^{(n)}=\mathbb P_{n+1}l^+(f) g^{(n)},\quad g^{(n)}\in H^{\otimes n},\ n\ge 2.\end{equation}

Next, for $f\in H$, we define an annihilation operator $a^-(f)$
on $\mathcal F_{\mathrm{fin}} (H)$ by
$$a^-(f):=a^+(Jf)^*\restriction\mathcal F_{\mathrm{fin}} (H).$$
By \cite{BozSpe}, one has the following explicit formula for the action of $a^-(f)$:
\begin{equation}\label{yjqdse6ytqsd}
a^-(f)g^{(n)}=\mathbb P_{n-1}l^-(f)\mathbb T_n
g^{(n)},\quad g^{(n)}\in\mathcal F_n(H),\end{equation}
where $\mathbb T_n\in\mathcal L(H^{\otimes n})$ is defined by
\begin{equation}\label{yfdqfqdr}
\mathbb T_n:=\mathbf 1+T_1+T_1T_2+\dots+T_{1}T_2\dotsm T_{n-1}.\end{equation}

\begin{proposition}\label{yedri751r}
For each $f\in H$, $a^+(f), a^-(f)\in \mathcal L(\mathcal F_{\mathrm{fin}} (H))$.
\end{proposition}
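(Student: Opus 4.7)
The plan is to reduce the assertion to a level-wise bound. Continuity on the topological direct sum $\mathcal F_{\mathrm{fin}}(H)$ is equivalent to boundedness of the restrictions $a^+(f):\mathcal F_n(H)\to\mathcal F_{n+1}(H)$ and $a^-(f):\mathcal F_n(H)\to\mathcal F_{n-1}(H)$ for every $n\ge 0$. Since $a^-(f)=a^+(Jf)^*{\restriction}\,\mathcal F_{\mathrm{fin}}(H)$, $\|Jf\|_H=\|f\|_H$, and formula \eqref{yjqdse6ytqsd} already guarantees that $a^-(f)$ sends $\mathcal F_n(H)$ into $\mathcal F_{n-1}(H)$, once the bound is established for $a^+(f)$ at every level the analogous bound for $a^-(f)$ follows automatically by taking the Hilbert-space adjoint between two finite-particle spaces, with the same operator norm.

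The heart of the argument is therefore an estimate of the form $\|a^+(f)g^{(n)}\|_{\mathcal F_{n+1}(H)}\le C_n\|f\|_H\|g^{(n)}\|_{\mathcal F_n(H)}$. Using that $\mathbb P_{n+1}$ projects onto $\ker(\mathcal P_{n+1})^{\perp}$, hence $\mathcal P_{n+1}\mathbb P_{n+1}=\mathcal P_{n+1}$, I would rewrite
\[
\|a^+(f)g^{(n)}\|^2_{\mathcal F_{n+1}(H)}=(\mathcal P_{n+1}(f\otimes g^{(n)}),f\otimes g^{(n)})_{H^{\otimes(n+1)}},
\]
and dominate the right-hand side by $\|f\|_H^2\,(\mathcal P_n g^{(n)},g^{(n)})_{H^{\otimes n}}$ through the operator inequality
\[
\mathcal P_{n+1}\le\|\mathbb T_{n+1}\|\,(\mathbf 1_H\otimes\mathcal P_n)\qquad\text{on }H^{\otimes(n+1)},
\]
which gives the bound with $C_n^2\le\|\mathbb T_{n+1}\|\le n+1$.

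To establish the operator inequality I would first record the standard factorisation $\mathcal P_{n+1}=(\mathbf 1_H\otimes\mathcal P_n)\mathbb T_{n+1}=\mathbb T_{n+1}^*(\mathbf 1_H\otimes\mathcal P_n)$, obtained by decomposing $S_{n+1}$ into right cosets of the subgroup stabilising the first slot (and using the Yang--Baxter equation \eqref{braid} to recombine reduced products). Multiplying the two forms together then yields
\[
\mathcal P_{n+1}^2=(\mathbf 1_H\otimes\mathcal P_n)\mathbb T_{n+1}\mathbb T_{n+1}^*(\mathbf 1_H\otimes\mathcal P_n)\le\|\mathbb T_{n+1}\|^2(\mathbf 1_H\otimes\mathcal P_n)^2,
\]
after which operator monotonicity of the square root on $[0,\infty)$ (L\"owner's theorem) strips the squares away to produce the desired non-squared bound.

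The main obstacle, as I see it, is precisely this last passage from the squared to the unsquared inequality. A naive Cauchy--Schwarz estimate applied directly to $(\mathbb T_{n+1}^*(f\otimes\mathcal P_n g^{(n)}),f\otimes g^{(n)})$ would introduce a spurious factor $\|g^{(n)}\|_{H^{\otimes n}}$, which in the delicate case $\|T\|=1$ is not controlled by $\|g^{(n)}\|_{\mathcal F_n(H)}$ because $\mathcal P_n$ may fail to be bounded below on $\mathcal F_n(H)$. Invoking operator monotonicity of the square root sidesteps this: the bound is proved at the level of operators, so every quadratic form stays measured in the deformed inner product throughout.
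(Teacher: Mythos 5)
Your argument is correct, but it takes a genuinely different route from the paper's. The paper disposes of the proposition in two lines of soft analysis: since $a^-(Jf)$ is everywhere defined on $\mathcal F_{\mathrm{fin}}(H)$ and adjoint to $a^+(f)$, each restriction $a^+(f)\colon\mathcal F_n(H)\to\mathcal F_{n+1}(H)$ is an everywhere-defined closed operator between Hilbert spaces, hence bounded by the closed graph theorem; no norm estimate is produced. You instead prove the quantitative Bo\.zejko--Speicher-type bound $\|a^+(f)\restriction\mathcal F_n(H)\|\le\|\mathbb T_{n+1}\|^{1/2}\|f\|_H\le\sqrt{n+1}\,\|f\|_H$, resting on the coset factorisation $\mathcal P_{n+1}=(\mathbf 1_H\otimes\mathcal P_n)\mathbb T_{n+1}=\mathbb T_{n+1}^*(\mathbf 1_H\otimes\mathcal P_n)$ --- a fact the paper never states explicitly but which underlies its formula \eqref{yjqdse6ytqsd} and is proved in \cite{BozSpe} --- followed by the L\"owner--Heinz step
$\mathcal P_{n+1}^2\le\|\mathbb T_{n+1}\|^2(\mathbf 1_H\otimes\mathcal P_n)^2$, hence $\mathcal P_{n+1}\le\|\mathbb T_{n+1}\|\,(\mathbf 1_H\otimes\mathcal P_n)$. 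That step is valid (both sides are positive and $t\mapsto t^{1/2}$ is operator monotone), and your diagnosis of why a naive Cauchy--Schwarz would fail when $\|T\|=1$ is exactly right: $\|g^{(n)}\|_{H^{\otimes n}}$ is not controlled by $\|g^{(n)}\|_{\mathcal F_n(H)}$ there. What your approach buys is an explicit growth rate of the level-wise norms (useful, e.g., for domain questions about the unbounded field operators); what the paper's buys is brevity and independence from the factorisation lemma. Two small points of tidiness: the trivial level $n=0$ should be recorded, and the passage to $a^-(f)$ is cleanest phrased as ``$a^-(f)\restriction\mathcal F_{n+1}(H)$ is the Hilbert-space adjoint of the bounded operator $a^+(Jf)\restriction\mathcal F_n(H)$, hence everywhere defined with the same norm'' --- you do not need to invoke \eqref{yjqdse6ytqsd} for that.
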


\begin{proof}
It is sufficient to prove that, for each $n\in\mathbb N$, the operators $a^+(f):\mathcal F_n(H)\to\mathcal F_{n+1}(H)$ and $a^-(f):\mathcal F_{n+1}(H)\to\mathcal F_{n}(H)$ are bounded. Since $a^-(Jf)$ is the adjoint of $a^+(f)$, both operators $a^+(f)$ and $a^-(f)$ are closed. Now the statement follows from the closed graph theorem.
\end{proof}

By analogy with Subsection \ref{utkfwedt7}, we will now introduce, for each $f^{(2)}\in H^{\otimes 2}$,  operators
$a^{+-}(f^{(2)})$, $a^{++}(f^{(2)})$, and $a^{--}(f^{(2)})$ on $\mathcal F_{\mathrm{fin}} (H)$. For any $f_1,f_2\in H$ and $g^{(n)}\in\mathcal F_n(H)$ with $n\ge1$, we get, by \eqref{yre6ueu} and \eqref{yjqdse6ytqsd},
$$
a^+(f_1)a^-(f_2)g^{(n)}=\mathbb P_nl^{+-}(f_1\otimes f_2)\mathbb T_ng^{(n)}. $$
Hence, for each $f^{(2)}\in H^{\otimes 2}$, we define a linear operator $a^{+-}(f^{(2)})$ on $\mathcal F_{\mathrm{fin}} (H)$ by setting
$$a^{+-}(f^{(2)})\Omega:=0,\quad a^{+-}(f^{(2)})g^{(n)}:=\mathbb P_nl^{+-}(f^{(2)})\mathbb T_ng^{(n)},\quad g^{(n)}\in\mathcal F_n(H),\ n\in\mathbb N.$$
Note that, for a fixed $G\in \mathcal F_{\mathrm{fin}} (H)$, the mapping
\begin{equation}\label{xe5w8}
 H^{\otimes 2}\ni f^{(2)}\mapsto a^{+-}(f^{(2)})G\in\mathcal F (H)
 \end{equation}
is continuous.

Let a sequence $(A_n)_{n=1}^\infty$ and an operator $A$ be from $\mathcal L(\mathcal F_{\mathrm{fin}} (H))$. As usual, we will say that $A_n$ strongly converges to $A$ on $\mathcal F_{\mathrm{fin}} (H)$  if for each fixed $G\in \mathcal F_{\mathrm{fin}} (H)$, we have $\lim_{n\to\infty}A_nG= AG$ in the topology of $ \mathcal F_{\mathrm{fin}} (H)$.

Then the continuity of the mapping \eqref{xe5w8} implies
the decomposition
$$a^{+-}(f^{(2)}) =\sum_{i,j}\langle f^{(2)},e_i\otimes e_j\rangle a^+(e_i)a^-(e_j),$$
where the series strongly converges on
$\mathcal F_{\mathrm{fin}} (H)$.
 This also immediately implies
\begin{equation}\label{f65e}
a^{+-}(f^{(2)})^*=a^{+-}(\mathbb Sf^{(2)}).
\end{equation}

Similarly, for each $f^{(2)}\in H^{\otimes 2}$, we define a linear operator $a^{++}(f^{(2)})$ on $\mathcal F_{\mathrm{fin}} (H)$ by
\begin{equation}\label{r6w563i}
a^{++}(f^{(2)})\Omega:=\mathbb P_2f^{(2)},\quad a^{++}(f^{(2)})g^{(n)}=\mathbb P_{n+2}(l^{++}(f^{(2)}) g^{(n)}),\quad g^{(n)}\in\mathcal F_n(H),\ n\in\mathbb N.
\end{equation}

 Finally, to construct an operator $a^{--}(f^{(2)})$, we proceed as follows. For $f_1,f_2\in H$ and $g^{(n)}\in\mathcal F_n(N)$, $n\ge 2$, we get
\begin{align*}
&a^-(f_1)a^-(f_2)g^{(n)}=a^-(f_1)\mathbb P_{n-1}l^-(f_2)\mathbb T_ng^{(n)}\\
&\quad= a^-(f_1)(l^-(f_2)\restriction_H\otimes\mathbb P_{n-1})\mathbb T_ng^{(n)}\\
&\quad= \mathbb P_{n-2}l^-(f_1)\mathbb T_{n-1}(l^-(f_2)\restriction_H\otimes\mathbb P_{n-1})\mathbb T_ng^{(n)}\\
&\quad= \mathbb P_{n-2}l^-(f_1)(l^-(f_2)\restriction_H\otimes(\mathbb T_{n-1}\mathbb P_{n-1}))\mathbb T_ng^{(n)}\\
&\quad= \mathbb P_{n-2}l^-(f_1)l^-(f_2)(\mathbf 1_H\otimes(\mathbb T_{n-1}\mathbb P_{n-1}))\mathbb T_ng^{(n)}\\
&\quad= \mathbb P_{n-2}l^{--}(f_1\otimes f_2)(\mathbf 1_H\otimes(\mathbb T_{n-1}\mathbb P_{n-1}))\mathbb T_ng^{(n)}.
\end{align*}
Thus, for $f^{(2)}\in H^{\otimes 2}$, we define a linear operator $a^{--}(f^{(2)})$ by
\begin{align*}
&a^{--}(f^{(2)})\Omega:=0,\quad a^{--}(f^{(2)})g:=0,\quad g\in H,\\
&a^{--}(f^{(2)})g^{(n)}:=\mathbb P_{n-2}l^{--}(f^{(2)})(\mathbf 1_H\otimes(\mathbb T_{n-1}\mathbb P_{n-1}))\mathbb T_ng^{(n)},\quad g^{(n)}\in\mathcal F_n(H),\quad n\ge2.
\end{align*}
We easily see that  the above statements related to the operator $a^{+-}(f^{(2)})$ remain true (with obvious changes) for $a^{++}(f^{(2)})$ and $a^{--}(f^{(2)})$. In particular,
\begin{align*}
a^{++}(f^{(2)}) &=\sum_{i,j}\langle f^{(2)},e_i\otimes e_j\rangle a^+(e_i)a^+(e_j),\\
a^{--}(f^{(2)}) &=\sum_{i,j}\langle f^{(2)},e_i\otimes e_j\rangle a^-(e_i)a^-(e_j),
\end{align*}
where the series strongly converge on
$\mathcal F_{\mathrm{fin}} (H)$. Hence,
\begin{equation}\label{te5w53}
a^{++}(f^{(2)})^*=a^{--}(\mathbb Sf^{(2)}),\quad f^{(2)}\in H^{\otimes 2}.\end{equation}

By using formulas \eqref{f65e} and \eqref{te5w53}, analogously to the proof of
Proposition \ref{yedri751r}, we conclude the following proposition.

\begin{proposition}
For each $f^{(2)}\in H^{\otimes 2}$, we have
$$a^{+-}(f^{(2)}),\, a^{++}(f^{(2)}),\, a^{--}(f^{(2)})\in \mathcal L(\mathcal F_{\mathrm{fin}} (H)).$$
\end{proposition}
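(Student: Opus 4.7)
The plan is to reduce the statement to repeated applications of the closed graph theorem, exactly as in the proof of Proposition~\ref{yedri751r}, but now for operators that shift the particle number by $0$, $+2$, or $-2$. Since $\mathcal F_{\mathrm{fin}}(H)$ carries the topological direct sum topology, continuity of a linear operator on $\mathcal F_{\mathrm{fin}}(H)$ is equivalent to boundedness of each of its restrictions $\mathcal F_n(H)\to\mathcal F_m(H)$. Thus it suffices to verify, for each $n$, that $a^{+-}(f^{(2)})\!\restriction\!\mathcal F_n(H)$, $a^{++}(f^{(2)})\!\restriction\!\mathcal F_n(H)$, and $a^{--}(f^{(2)})\!\restriction\!\mathcal F_n(H)$ are bounded operators into the corresponding target spaces $\mathcal F_n(H)$, $\mathcal F_{n+2}(H)$, $\mathcal F_{n-2}(H)$, respectively.

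The first step is to check, directly from the defining formulas, that each of these three operators is everywhere defined on $\mathcal F_{\mathrm{fin}}(H)$ and that the image of $\mathcal F_n(H)$ lies in the claimed target space. This is immediate from the presence of the projection $\mathbb P_{n}$, $\mathbb P_{n+2}$, or $\mathbb P_{n-2}$ at the outer end of each definition.

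The second step is to invoke the adjoint relations \eqref{f65e} and \eqref{te5w53}, which in particular show that each of the three operators has an everywhere-defined formal adjoint of the same type (namely $a^{+-}(\mathbb S f^{(2)})$, $a^{--}(\mathbb S f^{(2)})$, $a^{++}(\mathbb S f^{(2)})$, respectively), with the adjoint mapping the appropriate $\mathcal F_m(H)$ back into $\mathcal F_n(H)$. Consequently, the Hilbert-space adjoint of each restriction $A:=a^{\bullet\bullet}(f^{(2)})\!\restriction\!\mathcal F_n(H)\to\mathcal F_m(H)$ is densely defined; in fact $A^*$ contains an everywhere-defined operator of the same form acting $\mathcal F_m(H)\to\mathcal F_n(H)$. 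Since $A^*$ is always closed, and a closed operator defined on the whole Hilbert space $\mathcal F_m(H)$ is automatically bounded by the closed graph theorem, $A^*$ is bounded, and hence so is $A=A^{**}$.

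No step here is genuinely delicate---once the adjoint identities are in place, everything reduces to the closed graph argument used for Proposition~\ref{yedri751r}. The only point that requires a moment of care is the bookkeeping of domains and target spaces for $a^{--}$, where the definition treats $\mathcal F_0(H)$ and $\mathcal F_1(H)$ as mapped to $0$; this is harmless since on these two spaces the operator is trivially bounded, and the argument above applies unchanged for $n\ge 2$.
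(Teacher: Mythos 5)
Your proposal is correct and follows essentially the same route as the paper, which likewise deduces the result from the adjoint identities \eqref{f65e} and \eqref{te5w53} together with the closed-graph-theorem argument of Proposition~\ref{yedri751r} applied componentwise on the graded pieces $\mathcal F_n(H)$. The only cosmetic difference is that you apply the closed graph theorem to $A^*$ and pass to $A=A^{**}$, whereas the paper's pattern is to note that $A$ itself is closed (being contained in, hence equal to, the adjoint of an everywhere-defined operator) and apply the theorem to $A$ directly; the two are interchangeable.
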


Assume that there exists an operator $\widetilde T\in\mathcal L(H^{\otimes 2})$ that satisfies the following identity:
\begin{equation}\label{re643}
\langle Tf_1\otimes f_2,f_3\otimes f_4\rangle=\langle \widetilde T f_3\otimes f_1, f_4\otimes f_2\rangle,\quad f_1,f_2,f_3,f_4\in H.\end{equation}
 Observe that identity \eqref{re643} does not necessarily identify a bounded linear operator $\widetilde T$, but in all known examples $\widetilde T$ indeed exists. Note also that if $\widetilde T$ exists, then it is obviously unique.

The following theorem states the commutation relations that the creation and annihilation operators satisfy   on the $T$-deformed Fock space.

\begin{theorem}\label{rdw6uged}
 For any $f,g\in H$,
\begin{equation}\label{e5uw732w7}
a^-(f)a^+(g)=a^{+-}(\widetilde T f\otimes g)+\langle f,g\rangle.
\end{equation}

Further let $f^{(2)}\in H^{\otimes 2}$. Then
\begin{equation}\label{fetdr56}
a^{++}(f^{(2)})=\mathbf 0\ \Leftrightarrow\  f^{(2)}\in \ker(\mathbf 1+T)
\end{equation}
 and
\begin{equation}\label{cftyqwf6}
a^{--}(f^{(2)})=\mathbf 0\ \Leftrightarrow\  \mathbb Sf^{(2)}\in \ker(\mathbf 1+T).
\end{equation}

Moreover, if $f^{(2)}\in \ker(\mathbf 1-T^2)$, then
\begin{equation}\label{ydqs6ri}
a^{++}(f^{(2)})=a^{++}(Tf^{(2)}), \end{equation}
and if $\mathbb Sf^{(2)}\in \ker(\mathbf 1-T^2)$, then
 \begin{equation}\label{rtdSC}
 a^{--}(f^{(2)})=a^{--}(\widehat Tf^{(2)}), \end{equation}
where
\begin{equation}\label{yqds6wsq}
\widehat T:=\mathbb ST\mathbb S. \end{equation}
\end{theorem}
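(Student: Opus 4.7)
The five statements will be proved in order. Statements (2)--(5) all reduce to (2) and its dual (3) via Lemma \ref{ker-gen} and the adjoint relation \eqref{te5w53}; statement (1) is essentially a direct computation extending the Bożejko--Speicher formula \eqref{yjqdse6ytqsd}.

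For the commutation relation \eqref{e5uw732w7}, the plan is to act on a fixed $g^{(n)}\in\mathcal F_n(H)$ and compute $a^-(f)a^+(g)g^{(n)}=\mathbb P_n\, l^-(f)\,\mathbb T_{n+1}(g\otimes g^{(n)})$. The first step is to observe that each summand $T_1T_2\cdots T_k$ of $\mathbb T_{n+1}$ acts on a vector of the form $g\otimes g^{(n)}$ as $T_1(\mathbf 1_H\otimes(T_1\cdots T_{k-1}))$ (with $T_i$ on the right understood inside $H^{\otimes n}$), yielding the splitting
$$\mathbb T_{n+1}(g\otimes g^{(n)})=g\otimes g^{(n)}+T_1\bigl(g\otimes \mathbb T_n g^{(n)}\bigr).$$
Applying $l^-(f)$, the first summand contributes $\langle f,g\rangle g^{(n)}$. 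The key calculation is then the pointwise identity
$$l^-(f)\,T(h_1\otimes h_2)=\langle \widetilde T(f\otimes h_1),h_2\rangle_2,$$
which follows by expanding both sides against an arbitrary $u\in H$ via the symmetric bilinear form $\langle\cdot,\cdot\rangle$ and applying the defining identity \eqref{re643} of $\widetilde T$. Propagating this through an arbitrary $v^{(n)}\in H^{\otimes n}$ in the remaining tensor factors gives $l^-(f)T_1(g\otimes v^{(n)})=l^{+-}(\widetilde T(f\otimes g))v^{(n)}$; substituting $v^{(n)}=\mathbb T_n g^{(n)}$ and applying $\mathbb P_n$ reproduces $a^{+-}(\widetilde T(f\otimes g))g^{(n)}+\langle f,g\rangle g^{(n)}$.

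For the kernel characterization \eqref{fetdr56}, the forward implication is immediate from $a^{++}(f^{(2)})\Omega=\mathbb P_2 f^{(2)}$ and $\ker\mathbb P_2=\ker\mathcal P_2=\ker(\mathbf 1+T)$ by \eqref{KerPn}. Conversely, if $f^{(2)}\in\ker(\mathbf 1+T)$, then for every $g^{(n)}\in\mathcal F_n(H)$ the vector $l^{++}(f^{(2)})g^{(n)}=f^{(2)}\otimes g^{(n)}$ satisfies $(\mathbf 1+T_1)(f^{(2)}\otimes g^{(n)})=(\mathbf 1+T)f^{(2)}\otimes g^{(n)}=0$, hence belongs to $\ker\mathcal P_{n+2}$, so $\mathbb P_{n+2}$ kills it (one checks this first for simple tensors $f^{(2)}$ and extends by continuity). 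The dual statement \eqref{cftyqwf6} then follows by substituting $\mathbb S f^{(2)}$ for $f^{(2)}$ in \eqref{te5w53} and using $\mathbb S^2=\mathbf 1$ to obtain $a^{--}(f^{(2)})=a^{++}(\mathbb S f^{(2)})^*$. The two redundancy statements reduce to these kernel characterizations: by linearity, \eqref{ydqs6ri} is equivalent to $a^{++}((\mathbf 1-T)f^{(2)})=0$, which by \eqref{fetdr56} holds iff $(\mathbf 1-T)f^{(2)}\in\ker(\mathbf 1+T)=(\mathbf 1-T)\ker(\mathbf 1-T^2)$ by Lemma \ref{ker-gen}, and this is guaranteed by the hypothesis. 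For \eqref{rtdSC}, the same reduction via \eqref{cftyqwf6} combined with the intertwining $\mathbb S\widehat T=T\mathbb S$ (immediate from \eqref{yqds6wsq} and $\mathbb S^2=\mathbf 1$) gives $\mathbb S(\mathbf 1-\widehat T)f^{(2)}=(\mathbf 1-T)\mathbb S f^{(2)}$, which again lies in $\ker(\mathbf 1+T)$ by Lemma \ref{ker-gen} under the hypothesis $\mathbb S f^{(2)}\in\ker(\mathbf 1-T^2)$.

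The main technical obstacle is the basis calculation in Part 1 identifying $l^-(f)T(h_1\otimes h_2)$ with the contraction $\langle\widetilde T(f\otimes h_1),h_2\rangle_2$: one must carefully distinguish the bilinear pairing $\langle\cdot,\cdot\rangle$ (which features in the definitions of $l^-$, $l^{+-}$, and $\langle\cdot,\cdot\rangle_2$) from the sesquilinear inner product $(\cdot,\cdot)_H$, and exploit the symmetry of $\langle\cdot,\cdot\rangle$ together with \eqref{re643} to transport $T$ across and convert it into $\widetilde T$. Once this identity is pinned down, everything else is a clean combination of Lemma \ref{ker-gen}, the kernel formula \eqref{KerPn}, and the adjoint relation \eqref{te5w53}.
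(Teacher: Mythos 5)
Your proof is correct, and for parts \eqref{fetdr56}--\eqref{rtdSC} it follows essentially the same route as the paper: the forward direction of \eqref{fetdr56} via $a^{++}(f^{(2)})\Omega=\mathbb P_2f^{(2)}$, the dual statement \eqref{cftyqwf6} via \eqref{te5w53}, and the reduction of \eqref{ydqs6ri}, \eqref{rtdSC} to the kernel characterizations through Lemma \ref{ker-gen} and the intertwining $\mathbb S\widehat T=T\mathbb S$. The only variation there is in the converse of \eqref{fetdr56}: you invoke \eqref{KerPn} to see that $f^{(2)}\otimes g^{(n)}\in\ker(\mathbf 1+T_1)\subset\ker\mathcal P_{n+2}$, whereas the paper gets the same conclusion in one line from the identity $a^{++}(f^{(2)})=a^{++}(\mathbb P_2f^{(2)})$, itself a consequence of \eqref{vydr6ei6}; both are sound. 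The genuine difference is in part \eqref{e5uw732w7}, which the paper does not prove at all --- it cites Bo\.zejko--Speicher --- while you reconstruct the argument: the splitting $\mathbb T_{n+1}(g\otimes g^{(n)})=g\otimes g^{(n)}+T_1(g\otimes\mathbb T_ng^{(n)})$ and the contraction identity $l^-(f)T_1(g\otimes v^{(n)})=l^{+-}(\widetilde T(f\otimes g))v^{(n)}$ are both correct (the latter checks out against \eqref{re643} after pairing with an arbitrary $u\in H$ and using the symmetry of $\langle\cdot,\cdot\rangle$). One step in that derivation deserves justification that you do not supply: your starting point
\begin{equation*}
a^-(f)a^+(g)g^{(n)}=\mathbb P_n\,l^-(f)\,\mathbb T_{n+1}(g\otimes g^{(n)})
\end{equation*}
silently drops the projection $\mathbb P_{n+1}$ sitting between $a^+(g)$ and $a^-(f)$, since \eqref{yjqdse6ytqsd} only applies to vectors of $\mathcal F_{n+1}(H)$ and $g\otimes g^{(n)}$ need not lie there. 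This is legitimate because $\mathbb P_n l^-(f)\mathbb T_{n+1}$ annihilates $\ker\mathbb P_{n+1}$, which follows from the recursion $\mathcal P_{n+1}=(\mathbf 1_H\otimes\mathcal P_n)\mathbb T_{n+1}$ of \cite{BozSpe} (if $\mathcal P_{n+1}h=0$ then $\mathbb T_{n+1}h\in H\otimes\ker\mathcal P_n$, and $l^-(f)$ maps this into $\ker\mathbb P_n$); that recursion is not stated in the present paper, so you should either cite it or prove it. With that point filled in, your argument is a complete and more self-contained proof than the one given here.
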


\begin{proof} Formula \eqref{e5uw732w7} follows from \cite{BozSpe} (where it is written through an orthonormal basis in $H^{\otimes 2}$), see also \cite{Krolak} for the definition of the operator $\widetilde T$ in the basis-free form.

By \eqref{vydr6ei6} and \eqref{r6w563i}, we have
$$ a^{++}(f^{(2)})=a^{++}(\mathbb P_2f^{(2)}).$$
Hence, if $f^{(2)}\in \ker(\mathbf 1+T)=\ker(\mathbb P_2)$, then $a^{++}(f^{(2)})=\mathbf 0$, and if $f^{(2)}\not\in \ker(\mathbf 1+T)$, then \eqref{r6w563i} implies $a^{++}(f^{(2)})\Omega\ne0$.
Thus, \eqref{fetdr56} holds.
Formula \eqref{cftyqwf6} follows from \eqref{te5w53} and \eqref{fetdr56}.

 Formula \eqref{ydqs6ri}  follows from  \eqref{fetdr56} and  \eqref{KerrT}. Finally, by \eqref{KerrT} and \eqref{yqds6wsq},
$$\mathbb S \ker(\mathbf 1+T)=\big\{f^{(2)}-\widehat T f^{(2)}\mid \mathbb Sf^{(2)}\in \ker(\mathbf 1-T^2)\big\}.$$
Hence, formula \eqref{rtdSC} follows from \eqref{cftyqwf6}.
\end{proof}

\begin{remark}\label{d56w3ef} In view of \eqref{KerrT} and \eqref{fetdr56}, formulas
\eqref{ydqs6ri} and \eqref{rtdSC} describe all possible commutation relations between two creation operators or two annihilation operators, respectively.
\end{remark}

\begin{remark}
If the operator $T$ is  unitary, then $\ker(\mathbf 1-T^2)=H^{\otimes 2}$, hence equalities \eqref{ydqs6ri}, \eqref{rtdSC} hold for all $f^{(2)}\in H^{\otimes 2}$, in particular, for all $f^{(2)}\in\mathcal F_2(H)$.
\end{remark}

For  $A\in\mathcal L(H^{\otimes 2})$, we write
\begin{equation}\label{indexation}
A_{ij}^{kl}:=\langle Ae_i\otimes e_j,e_k\otimes e_l\rangle.
\end{equation}
Note that
\begin{equation}
 \label{indexation1}\widetilde T_{ij}^{kl}=T_{jl}^{ik},\quad
\widehat T_{ij}^{kl}=\overline{T_{ji}^{lk}}.
\end{equation}

The following corollary in immediate.

\begin{corollary}\label{uqrd7r}
We have
\begin{equation}\label{tesw5uwd}
a^-(e_i)a^+(e_j)=\sum_{k,l}\widetilde T_{ij}^{kl}a^+(e_k)a^-(e_l)+\delta_{ij},
\end{equation}
where $\delta_{ij}$ is the Kronecker delta. Furthermore, if $e_i\otimes e_j\in\ker(\mathbf 1-T^2)$, then
\begin{align}\label{rte64d}
a^+(e_i)a^+(e_j)&=\sum_{k,l} T_{ij}^{kl}a^+(e_k)a^+(e_l),\\
a^-(e_j)a^-(e_i)&=\sum_{k,l} \widehat T_{ji}^{kl}a^-(e_k)a^-(e_l).\label{d54w7ws}
\end{align}
In formulas \eqref{tesw5uwd}--\eqref{d54w7ws}, the series converge strongly on $\mathcal F_{\mathrm{fin}}(H)$.
\end{corollary}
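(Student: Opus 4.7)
The plan is to derive all three identities directly from Theorem~\ref{rdw6uged} combined with the basis expansions established at the end of Subsection~\ref{uerqdqdf}. Throughout the argument, the crucial (but easy) point is that the chosen orthonormal basis $(e_i)$ lies in the real form $H_{\mathbb R}$, so $Je_i = e_i$ and the bilinear pairing $\langle\cdot,\cdot\rangle$ agrees with the Hermitian inner product on basis vectors: $\langle e_i,e_j\rangle = \delta_{ij}$ and $\langle e_i\otimes e_j, e_k\otimes e_l\rangle = \delta_{ik}\delta_{jl}$.

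For the mixed relation \eqref{tesw5uwd}, I would apply \eqref{e5uw732w7} with $f=e_i$, $g=e_j$, and then expand $a^{+-}(\widetilde T e_i\otimes e_j)$ using the strongly convergent series
$$a^{+-}(h^{(2)}) = \sum_{k,l}\langle h^{(2)},e_k\otimes e_l\rangle\, a^+(e_k)a^-(e_l).$$
By the definition \eqref{indexation}, $\langle \widetilde T e_i\otimes e_j,e_k\otimes e_l\rangle = \widetilde T_{ij}^{kl}$, and $\langle e_i,e_j\rangle = \delta_{ij}$, which yields \eqref{tesw5uwd}.

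For the creation-creation relation \eqref{rte64d}, I would first rewrite $a^+(e_i)a^+(e_j) = a^{++}(e_i\otimes e_j)$, which follows from the series representation of $a^{++}$ together with the identity $\langle e_i\otimes e_j,e_k\otimes e_l\rangle = \delta_{ik}\delta_{jl}$. Under the hypothesis $e_i\otimes e_j\in\ker(\mathbf 1-T^2)$, formula \eqref{ydqs6ri} gives $a^{++}(e_i\otimes e_j) = a^{++}(T e_i\otimes e_j)$, and expanding the right-hand side in the basis via the same series produces $\sum_{k,l}T_{ij}^{kl}\,a^+(e_k)a^+(e_l)$.

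The annihilation-annihilation relation \eqref{d54w7ws} is the one point requiring minor care. I would write $a^-(e_j)a^-(e_i) = a^{--}(e_j\otimes e_i)$ via the $a^{--}$ series, and then apply \eqref{rtdSC}. For this I must check $\mathbb S(e_j\otimes e_i)\in\ker(\mathbf 1-T^2)$; but since $\mathbb S(e_j\otimes e_i) = J(e_i\otimes e_j) = e_i\otimes e_j$ (as the basis vectors are real), this is precisely the hypothesis. Thus $a^{--}(e_j\otimes e_i) = a^{--}(\widehat T(e_j\otimes e_i))$, and expanding in the basis gives $\sum_{k,l}\widehat T_{ji}^{kl}\,a^-(e_k)a^-(e_l)$. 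The strong convergence of all three series on $\mathcal F_{\mathrm{fin}}(H)$ is inherited from the corresponding statements for $a^{+-}$, $a^{++}$, $a^{--}$. There is no real obstacle here; this is exactly why the authors prefaced the statement with ``The following corollary is immediate.''
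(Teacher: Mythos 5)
Your proposal is correct and is exactly the argument the paper has in mind: the paper offers no written proof, declaring the corollary ``immediate'' from Theorem~\ref{rdw6uged} and the strongly convergent basis expansions of $a^{+-}$, $a^{++}$, $a^{--}$, which is precisely what you carry out. Your explicit verification that $\mathbb S(e_j\otimes e_i)=e_i\otimes e_j$ (so the hypothesis of \eqref{rtdSC} reduces to $e_i\otimes e_j\in\ker(\mathbf 1-T^2)$) is the one detail worth spelling out, and you handle it correctly.
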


\subsection{Wick algebras}

We finish this section with a short discussion of Wick algebras. Assume that
the operator $\widetilde T$ has the following property: for any $i,j$, only a finite number of $\widetilde T_{ij}^{kl}$ are not equal to zero. (We will say that the operator $\widetilde T$ has a finite matrix.)
Let $\mathbf A$ denote the complex algebra generated by the operators $a^+(e_i)$, $a^-(e_j)$ ($i,j\ge1$) and the identity operator.
 Then the commutation relation \eqref{tesw5uwd} implies that each element of this algebra can be represented as a linear combination of the identity operator and products of creation and annihilation operators in the Wick form:
$$a^+(e_{i_1})\dotsm a^+(e_{i_k})a^-(e_{j_1})\dotsm a^-(e_{j_l}), \quad k,l\ge0,\ k+l\ge1,$$
i.e., all creation operators are on the right and all annihilation operators are on the left. This is why one calls $\mathbf A$ a Wick algebra, see e.g.\ \cite{JoProSa,jorgensenschmittwernerJFA1995,Krolak, Krolak2}.

In the case where the  matrix of the operator $\widetilde T$ is not finite, one can proceed as follows. First, let us recall that if $\mathcal H_1$ and $\mathcal H_2$ are Hilbert spaces, then $\mathcal L(\mathcal H_1,\mathcal H_2)$, the space of all bounded linear operators from $\mathcal H_1$ into $\mathcal H_2$, is complete with respect to the strong convergence of bounded linear operators. Furthermore, an immediate consequence of the uniform boundedness principle states that, if $(A_n)_{n=1}^\infty$ and $(B_n)_{n=1}^\infty$ are  sequences in $\mathcal L(\mathcal H_1,\mathcal H_2)$ and $\lim_{n\to\infty}A_n=A$, $\lim_{n\to\infty}B_n=B$, then
$\lim_{n\to\infty}A_nB_n=AB$ (all limits being understood in the strong sense.)  These statements immediately imply the following lemma.

\begin{lemma}\label{dse5w6ue}
Let $\mathcal A\subset\mathcal L(\mathcal F_{\mathrm{fin}}(H))$. Let $\overline {\mathcal A}$ denote the closure of $\mathcal A$ with respect to the strong convergence on $\mathcal F_{\mathrm{fin}}(H)$.
Then $\overline {\mathcal A}\subset\mathcal L(\mathcal F_{\mathrm{fin}}(H))$.
Furthermore, if $\mathcal A$ is an algebra (with respect to  addition and product of operators), then $\overline{\mathcal A}$ is also an algebra.
\end{lemma}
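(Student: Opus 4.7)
The key structural observation is that $\mathcal F_{\mathrm{fin}}(H)$ is the strict inductive limit of the Hilbert spaces $F_N := \bigoplus_{j=0}^N \mathcal F_j(H)$, so that continuity of a linear map $A : \mathcal F_{\mathrm{fin}}(H) \to \mathcal F_{\mathrm{fin}}(H)$ amounts to: for each $N$ there exists $M = M(N)$ with $A(F_N) \subset F_M$ and $A\restriction F_N : F_N \to F_M$ bounded as a Hilbert-space operator. Strong convergence $A_n \to A$ on $\mathcal F_{\mathrm{fin}}(H)$ translates, for each fixed $G \in \mathcal F_{\mathrm{fin}}(H)$, into $A_n G \to AG$ in some common $F_{M(G)}$.

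First I would prove the inclusion $\overline{\mathcal A} \subset \mathcal L(\mathcal F_{\mathrm{fin}}(H))$. Let $(A_k) \subset \mathcal A$ converge strongly to a linear map $A$, and fix $N$. Since each $A_k$ is continuous and each $F_M$ is closed in $\mathcal F_{\mathrm{fin}}(H)$, the sets
\begin{equation*}
V_M := \{G \in F_N : A_k G \in F_M \text{ for all } k \ge 1\}, \qquad M = 0,1,2,\dots,
\end{equation*}
are closed subspaces of $F_N$. For every $G \in F_N$ the convergent sequence $(A_k G)_k$ is bounded in $\mathcal F_{\mathrm{fin}}(H)$, hence sits inside a single $F_{M(G)}$; thus $F_N = \bigcup_M V_M$. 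By the Baire category theorem some $V_{M^*}$ has non-empty interior, and, being a linear subspace, it must equal $F_N$, giving $A_k(F_N) \subset F_{M^*}$ for all $k$. The restrictions $A_k\restriction F_N : F_N \to F_{M^*}$ then form a strongly convergent sequence of bounded Hilbert-space operators, and the first recalled fact supplies a bounded limit $A\restriction F_N \in \mathcal L(F_N, F_{M^*})$. As $N$ is arbitrary, $A \in \mathcal L(\mathcal F_{\mathrm{fin}}(H))$.

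For the algebra assertion, linearity of strong limits is immediate, so only closure under products needs attention. Take $A_k \to A$ and $B_k \to B$ strongly with $A_k, B_k \in \mathcal A$, and fix $G \in \mathcal F_{\mathrm{fin}}(H)$. Choose $N$ so that $B_k G, BG \in F_N$ for all $k$ and $B_k G \to BG$ in the Hilbert norm of $F_N$. Applying the previous step to $(A_k)$ yields $M^*$ with $A_k(F_N) \subset F_{M^*}$, and the uniform boundedness principle on $\mathcal L(F_N, F_{M^*})$ gives $C := \sup_k \|A_k\restriction F_N\| < \infty$. Then
\begin{equation*}
\|A_k B_k G - A B G\|_{F_{M^*}} \le C\, \|B_k G - B G\|_{F_N} + \|(A_k - A) B G\|_{F_{M^*}} \to 0,
\end{equation*}
so $A_k B_k \to AB$ strongly; since $A_k B_k \in \mathcal A$, it follows that $AB \in \overline{\mathcal A}$.

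The main obstacle is that $\mathcal F_{\mathrm{fin}}(H)$ is not itself a Hilbert space, so the Hilbert-space facts recalled just before the lemma cannot be invoked on $\mathcal F_{\mathrm{fin}}(H)$ directly. The Baire category step is exactly what bridges this gap, producing a common target $F_{M^*}$ on which the classical statements apply verbatim.
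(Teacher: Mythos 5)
Your proof is correct and follows the same route as the paper, which simply invokes the two recalled facts (completeness of $\mathcal L(\mathcal H_1,\mathcal H_2)$ under strong convergence and strong continuity of products of strongly convergent bounded sequences) and declares that they "immediately imply" the lemma. Your Baire-category localization of each strongly convergent sequence to a common finite partial sum $F_{M^*}$ is exactly the detail the paper leaves implicit, and it is what makes the reduction to the Hilbert-space facts rigorous.
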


Define the algebra $\mathbf A$ just as in the case where $\widetilde T$ had a finite matrix. Let $\mathbf W$ denote the subset of $\mathbf A$ that consists of all elements of $\mathbf A$ that can be represented as a linear combination of the identity operator and products of creation and annihilation operators in the Wick form. (Note that $\mathbf W$ is not anymore an algebra.) Let $\overline{\mathbf A}$ and $\overline{\mathbf W}$
denote the closures of $\mathbf A$ and $\mathbf W$ with respect to the strong convergence on $\mathcal F_{\mathrm{fin}}(H)$. Then, by Lemma~\ref{dse5w6ue}, $\overline {\mathbf A}\subset\mathcal L(\mathcal F_{\mathrm{fin}}(H))$ and $\mathbf A$ is an algebra. By formula \eqref{tesw5uwd}, we get $\overline{\mathbf A}=\overline{\mathbf W}$. Hence, in this case we may also think of $\mathbf A$ as a Wick algebra.

\section{Multicomponent quantum systems}\label{e4q4gdwyfd}

We will now discuss the  commutation relations for multicomponent quantum systems, in particular, for non-Abelian anyons.

Let $X:=\mathbb R^d$ ($d\in\mathbb N$) and let $V:=\mathbb C^m$, where $m\in\mathbb N$, $m\ge2$. We choose $H=L^2(X\to V)$, the $L^2$-space of $V$-valued functions on $X$. Here, as a reference measure, we chose the Lebesgue measure $dx$ on the Borel $\sigma$-algebra of $X$. Note that the space $H$ is the complexification of $L^2(X\to\mathbb R^m)$, the $L^2$-space of $\mathbb R^m$-valued functions on $X$. Note also that $H$ can  be naturally identified with the tensor product $L^2(X)\otimes V$, where $L^2(X)$ is the $L^2$-space of complex-valued functions on $X$.

Let
\begin{equation}\label{dr6e6ue}
X^{(2)}:=\{(x,y)\in X^2\mid x^1\ne y^1\},
\end{equation} where $x^i$ denotes the $i$th coordinate of $x$. Note that
$X^2\setminus X^{(2)}=\{(x,y)\in X^2\mid x^1= y^1\}$ is a set of zero $dx\,dy$-measure. Hence,
$$H^{\otimes 2}=L^2(X^{(2)}\to V^{\otimes 2}) =
L^2(X^{(2)})\otimes V^{\otimes 2}.$$
Similarly, for $n\ge3$, we have
$$H^{\otimes n}=L^2(X^{(n)}\to V^{\otimes n}) =
L^2(X^{(n)})\otimes V^{\otimes n},$$
where $X^{(n)}:=\{(x_1,\dots,x_n)\in X^n\mid x_i^1\ne x_j^1\text{ if }i\ne j\}$.

Below, for $(x,y)\in X^{(2)}$, we will write $x<y$ or $x>y$ if $x^1<y^1$ or $x^1>y^1$, respectively.

Consider $\mathcal L(V^{\otimes 2})$, the space of linear operators on $V^{\otimes 2}$, equivalently $m^2\times m^2$ matrices with complex entries. Consider a measurable mapping
$$X^{(2)}\ni(x,y)\mapsto C_{x,y}\in \mathcal L(V^{\otimes 2})$$
that satisfies the following assumptions: for each $(x,y)\in X^{(2)}$, $\|C_{x,y}\|\le1$ and $C_{x,y}^*=C_{y,x}$.  Define a linear operator $T$ on $H^{\otimes 2}$ by \eqref{T-C-map}.
As easily seen, the operator $T$ is bounded with $\|T\|\le1$ and self-adjoint.

\begin{lemma}\label{5e3ss}
The operator $T$ satisfies the Yang--Baxter equation \eqref{braid} if and only if the following equation  holds on $V^{\otimes 3}$ for a.a.\ $(x,y,z)\in X^{(3)}$:
\begin{equation}
C_{x,y}^{1,2}C_{x,z}^{2,3}C_{y,z}^{1,2}=C_{y,z}^{2,3}C_{x,z}^{1,2}C_{x,y}^{2,3}
\label{YBxyz}
\end{equation}
 Here $C_{v,w}^{k,k+1}$, $k=1,2$, denotes
the operator $C_{v,w}$ acting on the $k$th and $(k+1)$th components of the tensor product $ V^{\otimes 3}$.
\end{lemma}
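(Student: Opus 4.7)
The plan is to prove the equivalence by direct computation: apply both sides of the Yang--Baxter equation $T_1T_2T_1=T_2T_1T_2$ to an arbitrary element $f^{(3)}\in H^{\otimes 3}=L^2(X^{(3)}\to V^{\otimes 3})$, read off the resulting pointwise identity on $V^{\otimes 3}$, and observe that it is exactly \eqref{YBxyz} up to a relabeling of variables.

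First I would unfold the action of $T_1=T\otimes\mathbf 1_H$ and $T_2=\mathbf 1_H\otimes T$ on a function $f^{(3)}(x,y,z)$. From \eqref{T-C-map}, $T$ swaps the two spatial arguments and multiplies by the operator $C$ acting on the corresponding tensor factors of $V^{\otimes 2}$. Writing the intermediate steps carefully, one obtains
\begin{align*}
(T_1T_2T_1 f^{(3)})(x,y,z)&=C_{x,y}^{1,2}C_{x,z}^{2,3}C_{y,z}^{1,2}\,f^{(3)}(z,y,x),\\
(T_2T_1T_2 f^{(3)})(x,y,z)&=C_{y,z}^{2,3}C_{x,z}^{1,2}C_{x,y}^{2,3}\,f^{(3)}(z,y,x),
\end{align*}
where each equality is obtained by tracking, at each step, both the permutation of spatial variables and the $V^{\otimes 2}$-valued multiplier (with the correct superscript pair indicating on which components of $V^{\otimes 3}$ it acts). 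Here the only subtle point is that when, e.g., $T_1$ is applied after $T_2$, the arguments of $C$ must be evaluated at the already-permuted spatial variables; this is where the ``spectral'' character of the equation comes from.

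Once these two explicit formulas are in hand, the equivalence is immediate. If \eqref{YBxyz} holds for a.a.\ $(x,y,z)\in X^{(3)}$, then the two right-hand sides above coincide as elements of $L^2(X^{(3)}\to V^{\otimes 3})$, so $T_1T_2T_1=T_2T_1T_2$. Conversely, assume the YBE \eqref{braid} holds. For any fixed $v\in V^{\otimes 3}$ and any measurable $A\subset X^{(3)}$ of finite measure, take $f^{(3)}(x,y,z):=\mathbf 1_A(z,y,x)\,v$; then $f^{(3)}(z,y,x)=\mathbf 1_A(x,y,z)\,v$, and equating the two displayed formulas on $A$ gives
\[
\bigl(C_{x,y}^{1,2}C_{x,z}^{2,3}C_{y,z}^{1,2}-C_{y,z}^{2,3}C_{x,z}^{1,2}C_{x,y}^{2,3}\bigr)v=0
\]
for a.a.\ $(x,y,z)\in A$. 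Letting $v$ range over a countable dense subset of $V^{\otimes 3}$ and $A$ over a countable exhausting family yields \eqref{YBxyz} on a set of full measure in $X^{(3)}$.

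There is no real obstacle here; the whole argument is a bookkeeping exercise. The only place requiring care is the first step, namely keeping track of which tensor slots in $V^{\otimes 3}$ each operator $C_{\cdot,\cdot}$ acts on after each permutation of the spatial variables, which is why the upper indices $^{1,2}$ and $^{2,3}$ in \eqref{YBxyz} appear attached to different pairs $(x,y)$, $(x,z)$, $(y,z)$ on the two sides.
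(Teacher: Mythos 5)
Your proposal is correct and follows essentially the same route as the paper: both compute $(T_1T_2T_1f^{(3)})(x,y,z)$ and $(T_2T_1T_2f^{(3)})(x,y,z)$ explicitly, arriving at the products $C_{x,y}^{1,2}C_{x,z}^{2,3}C_{y,z}^{1,2}$ and $C_{y,z}^{2,3}C_{x,z}^{1,2}C_{x,y}^{2,3}$ acting on $f^{(3)}(z,y,x)$, and your displayed formulas check out. The paper works on elementary tensors $g\otimes v$ and leaves the converse implicit, whereas you spell out the passage from the operator identity back to the a.e.\ pointwise identity; that extra care is harmless and arguably a small improvement, but the argument is the same.
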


\begin{remark}
Equation \eqref{YBxyz} is a spectral quantum Young--Baxter equation, see e.g.\ \cite[Section 6]{LM} and the references therein.
\end{remark}

\begin{proof}[Proof of Lemma \ref{5e3ss}] For the reader's convenience, we will prove this rather obvious lemma.
For $g\in L^2(X^{(3)})$ and $v\in V^{\otimes 3}$, we have
\begin{align*}
&(T\otimes\mathbf 1_H)(g\otimes v)(x,y,z)=g(y,x,z)C_{x,y}^{1,2}v,\\
&(\mathbf 1_H\otimes T)(T\otimes\mathbf 1_H)(g\otimes v)(x,y,z)=C_{y,z}^{2,3}(T\otimes\mathbf 1_H)(g\otimes v)(x,z,y)=g(z,x,y)C_{y,z}^{2,3}C^{1,2}_{x,z}v,\\
&(T\otimes \mathbf 1_H)(\mathbf 1\otimes T)(T\otimes \mathbf 1_H)(g\otimes v)(x,y,z)=C^{1,2}_{x,y}(\mathbf 1_H\otimes T)(T\otimes\mathbf 1_H)(g\otimes v)(y,x,z)\\& \quad= g(z,y,x)C_{x,y}^{1,2}C_{x,z}^{2,3}C_{y,z}^{1,2}v,
\end{align*}
and analogously
$$(\mathbf 1_H\otimes T)(T\otimes \mathbf 1_H)(\mathbf 1_H\otimes T)(g\otimes v)(x,y,z)=g(z,y,x)C_{x,y}^{1,2}C_{x,z}^{2,3}C_{y,z}^{1,2}v.\qedhere
 $$
\end{proof}

\begin{theorem}\label{vytr7i5}
Let $n\ge2$ and let $f^{(n)}\in H^{\otimes n}$. Then $f^{(n)}\in \mathcal F_n(H)$ if and only if, for each $i\in\{1,\dots,n-1\}$ and a.a.\ $(x_1,\dots,x_n)\in X^{(n)}$, we have
\begin{align}
&f^{(n)}(x_1,\dots,x_n)-C_{x_i,x_{i+1}}^{i,i+1}f^{(n)}(x_1,\dots,x_{i-1},x_{i+1},x_i,x_{i+2},\dots,x_n)\notag\\
&\quad \in V^{\otimes (i-1)}\otimes \operatorname{ran}(\mathbf 1_{V^{\otimes 2}}-C_{x_i,x_{i+1}}C_{x_i,x_{i+1}}^*)\otimes V^{\otimes(n-i-1)}.\label{rs5w5}\end{align}
Furthermore,  if condition \eqref{rs5w5} is satisfied for some $i\in\{1,\dots,n-1\}$ and $(x_1,\dots,x_n)\in X^{(n)}$, then it is automatically satisfied for this $i$ and $(x_1,\dots,x_{i-1},x_{i+1},x_i,x_{i+2},\dots,x_n)\in X^{(n)}$.
\end{theorem}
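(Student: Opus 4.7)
The plan is to invoke Theorem~\ref{se5sw5yw5wq}, which characterizes $\mathcal F_n(H)$ as the set of $f^{(n)} \in H^{\otimes n}$ satisfying $(\mathbf 1 - T_i) f^{(n)} \in \overline{\operatorname{ran}(\mathbf 1 - T_i^2)}$ for every $i \in \{1, \dots, n-1\}$, and then to translate both sides of this membership into pointwise conditions on $X^{(n)}$ using the specific form \eqref{T-C-map} of $T$.

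First I would compute $T_i$ and $T_i^2$ pointwise. Directly from \eqref{T-C-map},
$$(T_i f^{(n)})(x_1,\dots,x_n) = C^{i,i+1}_{x_i,x_{i+1}}\, f^{(n)}(x_1,\dots,x_{i-1},x_{i+1},x_i,x_{i+2},\dots,x_n),$$
which already supplies exactly the expression on the left-hand side of \eqref{rs5w5} after subtracting from $f^{(n)}$. Iterating this identity and using $C_{y,x} = C_{x,y}^*$ shows that $T_i^2$ is the multiplication operator by the operator-valued function $(x,y) \mapsto C_{x,y} C_{x,y}^*$ acting on slots $(i, i+1)$, so $\mathbf 1 - T_i^2$ is multiplication by $(x,y) \mapsto \mathbf 1_{V^{\otimes 2}} - C_{x,y} C_{x,y}^*$ on those two slots and the identity on the remaining slots.

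The main analytic step is identifying $\overline{\operatorname{ran}(\mathbf 1 - T_i^2)}$ pointwise. Since $V^{\otimes 2}$ is finite-dimensional and $A(x,y) := \mathbf 1 - C_{x,y} C_{x,y}^*$ is a measurable self-adjoint operator-valued function, the spectral projection $P(x,y)$ onto $\operatorname{ran} A(x,y) = (\ker A(x,y))^{\perp}$ is again measurable in $(x,y)$ by finite-dimensional Borel functional calculus. Let $M_A$ and $M_P$ denote the corresponding multiplication operators on $H^{\otimes n}$ acting on slots $(i, i+1)$. Then $M_P$ is an orthogonal projection commuting with $M_A$, $\ker M_A$ is the set of $g^{(n)}$ with $g^{(n)}(x_1,\dots,x_n) \in V^{\otimes(i-1)} \otimes \ker A(x_i,x_{i+1}) \otimes V^{\otimes(n-i-1)}$ for a.a.\ $(x_1,\dots,x_n)$, and self-adjointness yields $\overline{\operatorname{ran}(M_A)} = (\ker M_A)^{\perp} = \operatorname{ran}(M_P)$. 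The latter is precisely the set of $g^{(n)}$ whose pointwise values lie in $V^{\otimes(i-1)} \otimes \operatorname{ran}(\mathbf 1 - C_{x_i,x_{i+1}} C^*_{x_i,x_{i+1}}) \otimes V^{\otimes(n-i-1)}$ almost everywhere, which combined with the first step gives \eqref{rs5w5}.

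Finally, the "furthermore" assertion reduces to a purely algebraic fact in $V^{\otimes 2}$: for any bounded operator $A$ and vectors $u, v$, if $u - Av = (\mathbf 1 - A A^*) w$ for some $w$, then applying $A^*$ gives $A^* u - A^*A v = (\mathbf 1 - A^* A) A^* w$, which rearranges to $v - A^* u = (\mathbf 1 - A^*A)(v - A^* w)$, exhibiting the required preimage. Setting $A = C_{x_i,x_{i+1}}$, so that $A^* = C_{x_{i+1},x_i}$, and applying this identity on slots $(i, i+1)$ translates the condition at $(x_1,\dots,x_n)$ into the condition at $(x_1,\dots,x_{i-1},x_{i+1},x_i,x_{i+2},\dots,x_n)$. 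The main obstacle is the measurability-and-closure argument in the third step: one must confirm that the pointwise condition indeed captures the \emph{closure} of the range of the decomposable multiplication operator $\mathbf 1 - T_i^2$, not just its range. Finite-dimensionality of $V^{\otimes 2}$ (giving closed pointwise ranges and a measurable spectral projection) is what makes this clean.
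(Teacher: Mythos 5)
Your proposal is correct and follows essentially the same route as the paper: reduce to the case $n=2$ via Theorem~\ref{se5sw5yw5wq}, compute $\mathbf 1 - T_i^2$ as pointwise multiplication by $\mathbf 1 - C_{x,y}C_{x,y}^*$, identify $\overline{\ran(\mathbf 1 - T_i^2)}$ with the set of functions whose values lie a.e.\ in the pointwise ranges, and deduce the final symmetry claim from the equivalence $u - Av \in \ran(\mathbf 1 - AA^*) \Leftrightarrow v - A^*u \in \ran(\mathbf 1 - A^*A)$. The only differences are cosmetic: you supply the measurable-spectral-projection detail that the paper dismisses with ``we easily conclude,'' and you prove the last equivalence by exhibiting an explicit preimage rather than via the paper's orthogonality argument (its Lemmas~\ref{ftyqfdyfdx} and~\ref{xtes67e54}).
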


\begin{remark}
In view of the last statement of Theorem \ref{vytr7i5}, in oder to check whether  a given $f^{(n)}\in H^{\otimes n}$ belongs to $\mathcal F_n(H)$, it is sufficient to check  condition  \eqref{rs5w5} for all $i=1,\dots,n-1$ and a.a.\ $(x_1,\dots,x_n)\in X^{(n)}$ with $x_1<x_2<\dots<x_n$.

\end{remark}

In order to prove Theorem \ref{vytr7i5}, we will need the following two lemmas.

\begin{lemma}\label{ftyqfdyfdx}
Let $C\in\mathcal L(V^{\otimes 2})$ and let $w\in V^{\otimes 2}$.
 Then  $w\in \ker(\mathbf 1-CC^*)$ if and only if $C^*w\in \ker(\mathbf 1-C^*C)$. Moreover, the mappings
$$C^*:\ker(\mathbf 1-CC^*)\to \ker(\mathbf 1-C^*C),\quad C:\ker(\mathbf 1-C^*C)\to \ker(\mathbf 1-CC^*) $$
are bijective and inverse of each other.
\end{lemma}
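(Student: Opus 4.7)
The lemma unpacks into two direct computations from the defining identities of the two kernels, and there is no substantial obstacle. First I would settle the forward implication of the biconditional: if $w\in\ker(\mathbf 1-CC^*)$, then $CC^*w=w$; applying $C^*$ to both sides gives $C^*CC^*w=C^*w$, i.e.\ $(\mathbf 1-C^*C)(C^*w)=0$, and hence $C^*w\in\ker(\mathbf 1-C^*C)$. This simultaneously shows that the restricted map $C^*:\ker(\mathbf 1-CC^*)\to\ker(\mathbf 1-C^*C)$ is well-defined, i.e.\ actually lands in the claimed target space.

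Since the hypothesis on $C$ is symmetric under the interchange $C\leftrightarrow C^*$ (using $(C^*)^*=C$), the same one-line argument with the roles of $C$ and $C^*$ swapped shows that $v\in\ker(\mathbf 1-C^*C)$ implies $Cv\in\ker(\mathbf 1-CC^*)$. This gives well-definedness of the restricted map $C:\ker(\mathbf 1-C^*C)\to\ker(\mathbf 1-CC^*)$, and it also supplies the reverse implication of the biconditional, read through the bijection: any $w$ with $C^*w\in\ker(\mathbf 1-C^*C)$ is recovered in $\ker(\mathbf 1-CC^*)$ by applying $C$ to $C^*w$.

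Finally I would verify that the two restricted maps are mutually inverse. For $w\in\ker(\mathbf 1-CC^*)$ the defining identity gives $C\circ C^*(w)=CC^*w=w$, and for $v\in\ker(\mathbf 1-C^*C)$ it gives $C^*\circ C(v)=C^*Cv=v$. These two identities deliver bijectivity and the inverse relation in one stroke, closing the lemma. The only bookkeeping point worth being careful about is the symmetric passage between $C$ and $C^*$, but beyond that every step is a direct rewriting of the kernel equations.
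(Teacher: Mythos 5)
Your handling of the forward implication, the well-definedness of the two restricted maps, and their mutual inversion is correct and is essentially identical to the paper's proof: both arguments consist of applying $C^*$ (resp.\ $C$) to the defining identities $w=CC^*w$ and $v=C^*Cv$ and reading off that the restricted maps land in the right kernels and invert one another.

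The genuine gap is in your treatment of the reverse implication of the biconditional. You assert that if $C^*w\in\ker(\mathbf 1-C^*C)$ then $w$ is ``recovered in $\ker(\mathbf 1-CC^*)$ by applying $C$ to $C^*w$.'' But $C(C^*w)=CC^*w$, and all your bijection yields is that $CC^*w$ lies in $\ker(\mathbf 1-CC^*)$ --- not that $CC^*w=w$, which is precisely what membership of $w$ in that kernel means. This step cannot be repaired, because that direction of the biconditional is false in general: for any nonzero $w\in\ker(C^*)=(\operatorname{ran}C)^\perp$ one has $C^*w=0\in\ker(\mathbf 1-C^*C)$ trivially, yet $CC^*w=0\ne w$, so $w\notin\ker(\mathbf 1-CC^*)$; such $w$ exist whenever $C$ is not invertible (e.g.\ $C=\mathbf 0$). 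To be fair, the paper's own proof silently omits this direction as well --- it proves only the forward inclusion and the bijectivity --- and in the places where the lemma is later invoked (Lemma~\ref{xtes67e54} and Proposition~\ref{t5we64u4}) only the forward implication and the surjectivity of $C^*$ onto $\ker(\mathbf 1-C^*C)$ are actually used. So the defect lies in the lemma's ``if and only if'' phrasing; but where the paper leaves that direction unaddressed, your write-up asserts a false step in its support.
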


\begin{proof}
Let $w\in \ker(\mathbf 1-CC^*)$, $w\ne0$. Then $w=CC^*w$, hence $C^*w\ne0$. Furthermore, $C^*w=C^*CC^*w$, hence $C^*w\in  \ker(\mathbf 1-C^*C)$. Therefore,
\begin{equation}\label{t54w335w}
C^*:\ker(\mathbf 1-CC^*)\to \ker(\mathbf 1-C^*C)\end{equation}
 is an injective mapping.  Swapping $C$ and $C^*$, we conclude that
\begin{equation}\label{xetrw5y4i}
C:\ker(\mathbf 1-C^*C)\to \ker(\mathbf 1-CC^*)\end{equation}
is an injective mapping. Finally, for $w\in \ker(\mathbf 1-CC^*)$, we have $w=CC^*w$ and for $v\in \ker(\mathbf 1-C^*C)$, we have $v=C^*Cv$. Hence, both mappings \eqref{t54w335w} and
\eqref{xetrw5y4i} are bijective and inverse of each other.
\end{proof}

\begin{lemma}\label{xtes67e54}
Let the conditions of Lemma {\rm \ref{ftyqfdyfdx}} be satisfied. Then, for any $u,v\in V^{\otimes 2}$, we have $u-Cv\in\operatorname{ran}(\mathbf 1-CC^*)$ if and only if $v-C^*u\in \operatorname{ran}(\mathbf 1-C^*C)$.
\end{lemma}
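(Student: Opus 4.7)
The plan is to reduce both membership conditions to orthogonality conditions (which is valid because $V^{\otimes 2}$ is finite-dimensional, so $\operatorname{ran}(\mathbf 1-CC^*)=\ker(\mathbf 1-CC^*)^\perp$ and similarly with $C$ and $C^*$ swapped, since $\mathbf 1-CC^*$ and $\mathbf 1-C^*C$ are self-adjoint). Then I will use the bijection between the two kernels supplied by Lemma~\ref{ftyqfdyfdx} to translate one orthogonality condition into the other.

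In detail: first I would rewrite $u-Cv\in\operatorname{ran}(\mathbf 1-CC^*)$ as $(u-Cv,w')_{V^{\otimes 2}}=0$ for all $w'\in\ker(\mathbf 1-CC^*)$. By Lemma~\ref{ftyqfdyfdx}, the map $C:\ker(\mathbf 1-C^*C)\to\ker(\mathbf 1-CC^*)$ is a bijection, so every such $w'$ has the form $Cw$ with $w\in\ker(\mathbf 1-C^*C)$. Next I would compute
\[
(u-Cv,Cw)_{V^{\otimes 2}}=(C^*u-C^*Cv,w)_{V^{\otimes 2}}=(C^*u,w)_{V^{\otimes 2}}-(v,C^*Cw)_{V^{\otimes 2}}=(C^*u-v,w)_{V^{\otimes 2}},
\]
where in the last step I used that $C^*Cw=w$ because $w\in\ker(\mathbf 1-C^*C)$. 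Thus $u-Cv\perp\ker(\mathbf 1-CC^*)$ if and only if $C^*u-v\perp\ker(\mathbf 1-C^*C)$, which by the same self-adjointness argument is equivalent to $v-C^*u\in\operatorname{ran}(\mathbf 1-C^*C)$.

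There is no real obstacle: the proof is purely linear-algebraic once one has the bijection of kernels from Lemma~\ref{ftyqfdyfdx}, and the key computational identity $(u-Cv,Cw)=(C^*u-v,w)$ on $\ker(\mathbf 1-C^*C)$ does all the work. Note that the argument would go through verbatim on a general Hilbert space provided one replaces $\operatorname{ran}(\mathbf 1-CC^*)$ and $\operatorname{ran}(\mathbf 1-C^*C)$ by their closures, since self-adjointness still gives $\overline{\operatorname{ran}(A)}=\ker(A)^\perp$; but in the present finite-dimensional setting no closure is needed.
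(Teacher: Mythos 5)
Your proof is correct and takes essentially the same route as the paper's: both reduce the range-membership condition to orthogonality against the kernel and then invoke the kernel bijection from Lemma~\ref{ftyqfdyfdx}, your identity $(u-Cv,Cw)=(C^*u-v,w)$ being the mirror image of the paper's step $0=(u,CC^*w)-(Cv,w)=(C^*u-v,C^*w)$. The only cosmetic difference is that you parametrize $\ker(\mathbf 1-CC^*)$ as $C\,\ker(\mathbf 1-C^*C)$ and treat the equivalence symmetrically, whereas the paper applies $C^*$ to elements of $\ker(\mathbf 1-CC^*)$ and dismisses the converse as obvious.
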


\begin{proof} Assume $u-Cv\in\operatorname{ran}(\mathbf 1-CC^*)$.
Then,
$$(u,w)-(Cv,w)=(u-Cv,w)=0\quad\text{for all }w\in\ker(\mathbf 1-CC^*).$$
Since $w=CC^*w$, we conclude:
$$0=(u,CC^*w)-(Cv,w)=(C^*u-v,C^*w)\quad\text{for all }w\in\ker(\mathbf 1-CC^*).$$
Hence, by Lemma \ref{ftyqfdyfdx},
 $$(C^*u-v,w)=0\quad\text{for all }w\in\ker(\mathbf 1-C^*C),$$
which implies  $C^*u-v\in \operatorname{ran}(\mathbf 1-C^*C)$. The converse implication is obvious.
\end{proof}

We can now proceed with the proof of Theorem \ref{vytr7i5}.

\begin{proof}[Proof of Theorem \ref{vytr7i5}]  In view of Theorem~\ref{se5sw5yw5wq}, it is sufficient to prove the result for $n=2$.
By the definition of $T$, we have, for each $f^{(2)}\in H^{\otimes 2}$,
$$\big((\mathbf 1-T^2)f^{(2)}\big)(x,y)=f^{(2)}(x,y)-C_{x,y}C_{y,x}f^{(2)}(x,y)=(\mathbf 1-C_{x,y}C_{x,y}^*)f^{(2)}(x,y).$$
From here we easily conclude that
$$\overline{\operatorname{ran}(\mathbf 1-T^2)}=\big\{f^{(2)}\in H^{\otimes 2}\mid
f(x,y)\in\operatorname{ran}(\mathbf 1-C_{x,y}C_{x,y}^*)\text{ for a.a.\ }(x,y)\in X^{(2)}\big\}.$$
Theorem~\ref{se5sw5yw5wq} now implies that, for each $f^{(2)}\in H^{\otimes 2}$, we have $f^{(2)}\in\mathcal F_2(H)$ if and only if
\begin{equation}\label{fyd5u6w6u4}
f(x,y)-C_{x,y}f(y,x)\in\operatorname{ran}(\mathbf 1-C_{x,y}C_{x,y}^*)\quad\text{for a.a.\ }(x,y)\in X^{(2)},\end{equation}
It follows from Lemma \ref{xtes67e54}  that  if condition \eqref{fyd5u6w6u4} is satisfied for some $(x,y)\in X^{(2)}$, then it is automatically satisfied for the point $(y,x)\in X^{(2)}$.\end{proof}

The following immediate corollary gives an explicit form of $\mathfrak P$, the orthogonal projection of $H^{\otimes 2}$ onto $\mathcal F_2(H)$ (compare with Proposition~\ref{rw5uwu83xx}).

\begin{corollary}\label{rewa5w}
For $(x,y)\in X^{(2)}$, $x<y$, denote by $P_{x,y}$ the orthogonal projection of the space $V^{\otimes 2}\oplus V^{\otimes 2}$ onto the subspace
$$\big\{(u,v)\in V^{\otimes 2}\oplus V^{\otimes 2}\mid u-C_{x,y}v\in\operatorname{ran}(\mathbf 1-C_{x,y}C_{x,y}^*)\big\}.$$
Further for
$P_{x,y}(u,v)=(w_1,w_2)$, with $w_1,w_2\in V^{\otimes 2}$, we denote
$P^1_{x,y}(u,v):=w_1$ and $P^2_{x,y}(u,v):=w_2$, i.e., $P^1_{x,y}(u,v)$ and $P^2_{x,y}(u,v)$ are the first and second $V^{\otimes 2}$-coordinates of the vector $P_{x,y}(u,v)$. Then $\mathfrak P$, the orthogonal projection of $H^{\otimes 2}$ onto $\mathcal F_2(H)$, has the following form: for $(x,y)\in X^{(2)}$ with $x<y$,
$$(\mathfrak Pf^{(2)})(x,y)=P_{x,y}^1\big(f^{(2)}(x,y),f^{(2)}(y,x)\big),\quad (\mathfrak Pf^{(2)})(y,x)=P_{x,y}^2\big(f^{(2)}(x,y),f^{(2)}(y,x)\big).$$
\end{corollary}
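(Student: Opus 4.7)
The plan is to recognise that both the ambient space $H^{\otimes 2}$ and the condition describing $\mathcal F_2(H)$ are \emph{fibered} over the set of unordered pairs $\{x,y\}$, and therefore $\mathfrak P$ reduces to a fiberwise orthogonal projection that is exactly the $P_{x,y}$ built in the statement.

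\medskip

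First, set $X^{(2)}_+:=\{(x,y)\in X^{(2)}\mid x<y\}$. Since Lebesgue measure on $X^2$ is swap-invariant and $X^{(2)}=X^{(2)}_+\sqcup\{(x,y)\mid x>y\}$ up to a null set, a direct computation of inner products shows that the map
$$U:H^{\otimes 2}\to L^2\big(X^{(2)}_+\to V^{\otimes 2}\oplus V^{\otimes 2}\big),\quad (Uf^{(2)})(x,y):=\bigl(f^{(2)}(x,y),\,f^{(2)}(y,x)\bigr),$$
is a unitary isomorphism. Under $U$, the Hilbert space $H^{\otimes 2}$ becomes the direct integral of the constant fibre $V^{\otimes 2}\oplus V^{\otimes 2}$ over $X^{(2)}_+$.

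\medskip

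Second, I translate the description of $\mathcal F_2(H)$ into this picture. Theorem~\ref{vytr7i5} for $n=2$ states that $f^{(2)}\in\mathcal F_2(H)$ iff $f^{(2)}(x,y)-C_{x,y}f^{(2)}(y,x)\in\operatorname{ran}(\mathbf 1-C_{x,y}C_{x,y}^*)$ for a.a.\ $(x,y)\in X^{(2)}$, and Lemma~\ref{xtes67e54} guarantees that this condition at $(x,y)$ is equivalent to the same condition at $(y,x)$. Thus it suffices to impose it only on $X^{(2)}_+$, and
$$U\mathcal F_2(H)=\Big\{F\in L^2\big(X^{(2)}_+\to V^{\otimes 2}\oplus V^{\otimes 2}\big)\;\Big|\; F(x,y)\in S_{x,y}\text{ for a.a.\ }(x,y)\in X^{(2)}_+\Big\},$$
where $S_{x,y}:=\{(u,v)\mid u-C_{x,y}v\in\operatorname{ran}(\mathbf 1-C_{x,y}C_{x,y}^*)\}$ is the subspace onto which $P_{x,y}$ projects by definition.

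\medskip

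Third, I invoke the standard fact about direct-integral subspaces: the orthogonal projection onto the subspace of sections taking values almost everywhere in a measurable family of closed subspaces is the corresponding fiberwise orthogonal projection. Hence $U\mathfrak P U^{*}$ acts as $F(x,y)\mapsto P_{x,y}F(x,y)$, and applying $U^*$ and reading off the two coordinates of $P_{x,y}\bigl(f^{(2)}(x,y),f^{(2)}(y,x)\bigr)$ gives exactly the two formulas in the statement.

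\medskip

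The only point requiring any care is ensuring that the fiber subspaces $S_{x,y}$ depend measurably on $(x,y)$, so that the family $\{P_{x,y}\}$ defines a bounded operator on the direct integral. This is immediate from the measurability of $(x,y)\mapsto C_{x,y}$ (and hence of $(x,y)\mapsto\mathbf 1-C_{x,y}C_{x,y}^*$), since $S_{x,y}$ is defined by the measurable linear equation $u-C_{x,y}v\perp\ker(\mathbf 1-C_{x,y}C_{x,y}^*)$. Beyond this, the argument is pure bookkeeping, which is why the corollary is stated as immediate.
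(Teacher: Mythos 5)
Your argument is correct and matches the paper's intent: the paper states this corollary as immediate from Theorem~\ref{vytr7i5} (with Lemma~\ref{xtes67e54} allowing the restriction to $x<y$), and your fiberwise/direct-integral unpacking via the unitary $U$ is precisely the bookkeeping that makes ``immediate'' rigorous. The only point worth a word more than you give it is the measurability of $(x,y)\mapsto P_{x,y}$, which in this finite-dimensional setting follows, e.g., by writing the projection onto $\ker(\mathbf 1-C_{x,y}C_{x,y}^*)$ as a pointwise limit of measurable matrix-valued functions.
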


Let us now describe $\ker(\mathbf 1+T)=\ker(\mathfrak P)$.

\begin{proposition}\label{t5we64u4}
We have
\begin{align}
&\ker(\mathbf 1+T)=\big\{ f^{(2)}-Tf^{(2)}\mid f^{(2)}\in H^{\otimes 2} \text{\ \rm   and }  f^{(2)}(x,y)\in \ker(\mathbf 1-C_{x,y}C_{x,y}^*)\notag\\
&\qquad\quad\text{\rm for a.a. }(x,y)\in X^{(2)}\big\}\label{vyre6w64u}\\
&\quad=\big\{f^{(2)}\in H^{\otimes 2}\mid \text{\rm  $f^{(2)}(x,y)\in\ker(\mathbf 1-C_{x,y}C_{x,y}^*)$ if $x<y$}\notag\\
&\qquad  \quad\text{\rm and $f^{(2)}(x,y)=-C_{x,y}f^{(2)}(y,x)$ if $x>y$} \text{ \rm for a.a.\ $(x,y)\in X^{(2)}$}\big\}.\label{diygudd}
\end{align}
Formula \eqref{diygudd} remains true if we swap the conditions $x<y$ and $x>y$.
\end{proposition}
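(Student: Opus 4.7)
The plan is to build the proof on top of Lemma \ref{ker-gen}, which reduces everything to understanding $\ker(\mathbf 1-T^2)$. From the pointwise formula $((\mathbf 1-T^2)f^{(2)})(x,y)=(\mathbf 1-C_{x,y}C_{x,y}^*)f^{(2)}(x,y)$ (already derived in the proof of Theorem \ref{vytr7i5} using $C_{y,x}=C_{x,y}^*$), one reads off that $f^{(2)}\in\ker(\mathbf 1-T^2)$ if and only if $f^{(2)}(x,y)\in\ker(\mathbf 1-C_{x,y}C_{x,y}^*)$ for a.a.\ $(x,y)\in X^{(2)}$. Plugging this characterization into $\ker(\mathbf 1+T)=(\mathbf 1-T)\ker(\mathbf 1-T^2)$ yields formula \eqref{vyre6w64u} immediately.

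For formula \eqref{diygudd}, I would prove both inclusions between the set $A$ in \eqref{vyre6w64u} and the set $B$ in \eqref{diygudd}. For $A\subseteq B$, given $g^{(2)}=(\mathbf 1-T)f^{(2)}$ with $f^{(2)}$ taking values in $\ker(\mathbf 1-C_{x,y}C_{x,y}^*)$, I would compute $g^{(2)}(x,y)=f^{(2)}(x,y)-C_{x,y}f^{(2)}(y,x)$ and split cases. When $x<y$, the vector $f^{(2)}(y,x)$ lies in $\ker(\mathbf 1-C_{y,x}C_{y,x}^*)=\ker(\mathbf 1-C_{x,y}^*C_{x,y})$, so by Lemma \ref{ftyqfdyfdx} its image under $C_{x,y}$ lies in $\ker(\mathbf 1-C_{x,y}C_{x,y}^*)$, whence $g^{(2)}(x,y)$ does as well. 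When $x>y$, a direct calculation using $C_{y,x}=C_{x,y}^*$ and $f^{(2)}(x,y)\in\ker(\mathbf 1-C_{x,y}C_{x,y}^*)$ gives $-C_{x,y}g^{(2)}(y,x)=g^{(2)}(x,y)$.

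For $B\subseteq A$, given $g^{(2)}\in B$, I would first upgrade the $x<y$ kernel condition to all of $X^{(2)}$: when $x>y$, rewrite $g^{(2)}(x,y)=-C_{x,y}g^{(2)}(y,x)$ and apply Lemma \ref{ftyqfdyfdx} to $g^{(2)}(y,x)\in\ker(\mathbf 1-C_{x,y}^*C_{x,y})$. Then I would verify that $(\mathbf 1+T)g^{(2)}=0$ pointwise, which is immediate when $x>y$ and, when $x<y$, follows by applying the defining relation at the point $(y,x)$ (where $y>x$) and using the kernel property of $g^{(2)}(x,y)$. Once $g^{(2)}\in\ker(\mathbf 1+T)$ is established, the computation in the proof of Lemma \ref{ker-gen} gives $g^{(2)}=(\mathbf 1-T)\bigl(\tfrac12 g^{(2)}\bigr)$ with $\tfrac12 g^{(2)}\in\ker(\mathbf 1-T^2)$, so $g^{(2)}\in A$. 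The final "swap" statement is then automatic: the argument just given shows that both conditions in \eqref{diygudd} automatically entail the kernel condition on the entire $X^{(2)}$ together with $g^{(2)}(x,y)=-C_{x,y}g^{(2)}(y,x)$ on the entire $X^{(2)}$, a description symmetric under $(x,y)\leftrightarrow(y,x)$.

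The main obstacle will be bookkeeping rather than conceptual: keeping the roles of $C_{x,y}C_{x,y}^*$ versus $C_{x,y}^*C_{x,y}$ straight as one swaps the arguments, and making sure that Lemma \ref{ftyqfdyfdx} is invoked in the correct direction each time. Everything else reduces to checking the two defining identities of $B$ pointwise and using the kernel-range dichotomy from Lemma \ref{ker-gen}.
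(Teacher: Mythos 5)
Your proposal is correct and follows essentially the same route as the paper: both rest on Lemma \ref{ker-gen}, the pointwise identification of $\ker(\mathbf 1-T^2)$ with the functions valued in $\ker(\mathbf 1-C_{x,y}C_{x,y}^*)$, and Lemma \ref{ftyqfdyfdx} to transfer conditions between $(x,y)$ and $(y,x)$. The only (cosmetic) difference is that the paper reaches \eqref{diygudd} by first writing $\ker(\mathbf 1+T)=\ker(\mathbf 1+T)\cap\ker(\mathbf 1-T^2)$ as the symmetric set of functions satisfying both conditions everywhere and then halving the conditions, whereas you verify the two set inclusions directly by pointwise computation; your closing remark that both conditions of \eqref{diygudd} entail the full symmetric description is exactly the paper's intermediate step \eqref{intermed}.
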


\begin{remark}\label{yqtfdt7edqr}
Formula \eqref{diygudd} can be interpreted as follows: $\ker(\mathbf 1+T)$ consists of all functions of the form $f^{(2)}-Tf^{(2)}$, where $f^{(2)}\in H^{\otimes 2}$ satisfies the following assumption: for a.a.\ $(x,y)\in X^{(2)}$, $f^{(2)}(x,y)\in\ker(\mathbf 1-C_{x,y}C_{x,y}^*)$ if $x<y$ and $f^{(2)}(x,y)=0$ if $x>y$.
\end{remark}

\begin{proof}[Proof of Proposition \ref{t5we64u4}] Formula \eqref{vyre6w64u} follows immediately from \eqref{KerrT} and the equality $(\mathbf 1-T^2)f^{(2)}(x,y)=(\mathbf 1-C_{x,y}C_{x,y}^*)f^{(2)}(x,y)$.

Due to the inclusion $\ker(\mathbf 1+T)\subset\ker(\mathbf 1-T^2)$, we get
$$\ker(\mathbf 1+T)=\ker(\mathbf 1+T)\cap \ker(\mathbf 1-T^2),$$
or equivalently
\begin{align}
&\ker(\mathbf 1+T)=\big\{f^{(2)}\in H^{\otimes 2}\mid \text{\rm  $f^{(2)}(x,y)\in\ker(\mathbf 1-C_{x,y}C_{x,y}^*)$ }\notag\\
&\qquad  \quad\text{\rm and $f^{(2)}(x,y)=-C_{x,y}f^{(2)}(y,x)$},\text{ \rm for a.a.\ $(x,y)\in X^{(2)}$}\big\}.\label{intermed}
\end{align}
By Lemma~\ref{ftyqfdyfdx},  if  the relation
$$
f^{(2)}(x,y)=C_{x,y}C_{x,y}^*f^{(2)}(x,y),\quad f^{(2)}(y,x)=-C_{x,y}^*f^{(2)}(x,y).
$$
holds for $x<y$,  then it also holds for $x>y$. Hence,  formula \eqref{diygudd} follows from \eqref{intermed}.
\end{proof}

According to the general considerations in  Subsection \ref{uerqdqdf}, we can now construct  creation and annihilation operators in the $T$-deformed Fock space $\mathcal F(H)$. It should be noticed that the operator $\mathbb T_n$ given by formula \eqref{yfdqfqdr} (and used for the annihilation operators in formula \eqref{yjqdse6ytqsd}) has now the following form:
\begin{align*}
&(\mathbb T_nf^{(n)})(x_1,\dots,x_n)=f^{(n)}(x_1,\dots,x_n)\\
&\quad + \sum_{k=2}^n C_{x_1,x_2}^{1,2}C_{x_1,x_3}^{2,3}\dotsm C_{x_1,x_k}^{k-1,k}f^{(n)}(x_2,x_3,\dots,x_k,x_1,x_{k+1},\dots,x_n),\quad f^{(n)}\in H^{\otimes n}.
\end{align*}

Recall that, in Subsection \ref{uerqdqdf}, for the given operator $T\in\mathcal L(H^{\otimes 2})$, we defined the operator $\widetilde T$ through equality \eqref{re643} and the operator $\widehat T$ by \eqref{yqds6wsq}.   Similarly, for a linear operator $C\in\mathcal L(V^{\otimes 2})$, we define linear operators $\widetilde C, \widehat C\in \mathcal L(V^{\otimes 2})$.  (Note that, in the finite-dimensional setting, the operator $\widetilde C$ always exists.)

\begin{lemma}\label{yre6i64}
For $f^{(2)}\in H^{\otimes 2}$, we have
\begin{align}
(\widetilde T f^{(2)})(x,y)&=\widetilde C_{y,x}f^{(2)}(y,x),\label{tdfw}\\
 (\widehat T f^{(2)})(x,y)&=\widehat C_{y,x}f^{(2)}(y,x).\label{es4a}\end{align}
\end{lemma}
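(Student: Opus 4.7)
The plan is to verify both formulas by direct computation, using the defining relations for $\widetilde T, \widetilde C$ and for $\widehat T, \widehat C$, and the pointwise action of $T$ and $\mathbb S$ on $H^{\otimes 2}=L^2(X^2\to V^{\otimes 2})$.

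For \eqref{tdfw}, the strategy is to check that the right-hand side of \eqref{tdfw}, viewed as an operator, satisfies the defining identity \eqref{re643} for $\widetilde T$. By uniqueness of $\widetilde T$ (noted right after \eqref{re643}), this determines the operator. So I would take simple tensors $f_1,f_2,f_3,f_4\in H$ and compute both sides of \eqref{re643} as integrals over $X^2$ of bilinear pairings on $V^{\otimes 2}$: on the left one gets
$$\langle Tf_1\otimes f_2,f_3\otimes f_4\rangle=\int_{X^2}\langle C_{x,y}(f_1(y)\otimes f_2(x)),f_3(x)\otimes f_4(y)\rangle_V\,dx\,dy,$$
while on the right, substituting the proposed formula into $\langle \widetilde T f_3\otimes f_1,f_4\otimes f_2\rangle$ and swapping the dummy variables $x\leftrightarrow y$, one obtains
$$\int_{X^2}\langle \widetilde C_{x,y}(f_3(x)\otimes f_1(y)),f_4(y)\otimes f_2(x)\rangle_V\,dx\,dy.$$
Pointwise for each fixed $(x,y)$, the defining identity for $\widetilde C_{x,y}$ (applied to the four vectors $f_1(y),f_2(x),f_3(x),f_4(y)\in V$) equates the two integrands. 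This proves \eqref{tdfw} on a total set of simple tensors, hence everywhere by linearity and continuity.

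For \eqref{es4a}, I would exploit the definition $\widehat T=\mathbb STS$ from \eqref{yqds6wsq} and the analogous $\widehat C=\mathbb S_V C\mathbb S_V$ on $V^{\otimes 2}$ (where $\mathbb S_V$ is built from complex conjugation $J_V$ and the flip $\tau_V$ on $V^{\otimes 2}$ in the same manner as $\mathbb S$). The key intermediate computation is the pointwise formula
$$(\mathbb S F)(x,y)=\mathbb S_V F(y,x),\qquad F\in H^{\otimes 2},$$
which follows by checking the identity on simple tensors $F=f\otimes g$, using that $J$ acts pointwise as $J_V$ and that the flip $\tau_H$ on $H\otimes H$ corresponds pointwise to swapping arguments and applying $\tau_V$. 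Given this, I compute $\widehat TF$ in two steps: first $(T\mathbb SF)(x,y)=C_{x,y}(\mathbb SF)(y,x)=C_{x,y}\mathbb S_V F(x,y)$, and then
$$(\widehat TF)(x,y)=(\mathbb S(T\mathbb SF))(x,y)=\mathbb S_V(T\mathbb SF)(y,x)=\mathbb S_V C_{y,x}\mathbb S_V F(y,x)=\widehat C_{y,x}F(y,x),$$
which is \eqref{es4a}.

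The main obstacle is purely bookkeeping: tracking the antilinear complex conjugations $J,J_V$ and the two distinct flips $\tau_H,\tau_V$ consistently, and in particular establishing the pointwise expression for $\mathbb S$. Once that is in hand, both identities reduce to an application of the corresponding finite-dimensional definition at almost every point of $X^2$, and there is no analytic subtlety beyond a change of variables in the integral.
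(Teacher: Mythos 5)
Your proposal is correct and takes essentially the same route as the paper's proof: for \eqref{tdfw} the paper likewise verifies the defining identity \eqref{re643} on simple tensors by writing both sides as integrals over $X^2$, applying the pointwise defining relation for $\widetilde C_{x,y}$ and swapping the integration variables, and for \eqref{es4a} it performs the same three-step pointwise computation of $\mathbb S$, then $T\mathbb S$, then $\mathbb S T\mathbb S$ on simple tensors $\varphi_1 u_1\otimes\varphi_2 u_2$. The only cosmetic difference is that you isolate the pointwise formula $(\mathbb S F)(x,y)=\mathbb S_V F(y,x)$ as an explicit intermediate step, which the paper uses implicitly.
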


\begin{proof}
For $i=1,2,3,4$, let $f_i(x)=\varphi_i(x)u_i$, where $\varphi_i\in L^2(X)$ and $u_i\in V$.
Then
\begin{align*}
 \langle Tf_1\otimes f_2,f_3\otimes f_4\rangle&=
 \int_{X}f_1(y)f_2(x)f_3(x)f_4(y)\langle
 C_{x,y}u_1\otimes u_2,u_3,\otimes u_4\rangle\,dx\,dy\\
 &=
 \int_{X}f_1(y)f_2(x)f_3(x)f_4(y)\langle
 \widetilde C_{x,y}u_3\otimes u_1,u_4,\otimes u_2\rangle\,dx\,dy\\
 &=
 \int_{X}f_1(x)f_2(y)f_3(y)f_4(x)\langle
 \widetilde C_{y,x}u_3\otimes u_1,u_4,\otimes u_2\rangle\,dx\,dy\\
 &=
 \int_{X} (f_3\otimes f_1)(y,x)(f_4\otimes f_2)(x,y)
 \langle
 \widetilde C_{y,x}u_3\otimes u_1,u_4,\otimes u_2\rangle\,dx\,dy,
\end{align*}
which proves \eqref{tdfw}.

To prove \eqref{es4a}, we proceed as follows:
\begin{align*}
&(\mathbb S f_1\otimes f_2)(x,y)=\overline{\varphi_1(y)\varphi_2(x)}\, J(u_2\otimes u_1),\\
&(T\mathbb S f_1\otimes f_2)(x,y)= \overline{\varphi_1(x)\varphi_2(y)}\, C_{x,y} J(u_2\otimes u_1),\\
&(\mathbb S T\mathbb S f_1\otimes f_2)(x,y)= \varphi_1(y)\varphi_2(x) \widehat C_{y,x} u_1\otimes u_2,
\end{align*}
which implies \eqref{es4a}.
\end{proof}

To specialize the result of Theorem \ref{rdw6uged} to our current setting, it will be convenient for us to introduce formal operators of creation and annihilation at point $x\in X$. Let $f\in H=L^2(X\to V)$. Then
\begin{equation}\label{yte567}
f(x)=(\varphi_1(x),\varphi_2(x),\dots,\varphi_m(x)),\end{equation} where $\varphi_1,\varphi_2,\dots,\varphi_m\in L^2(X)$. For $i=1,2,\dots,m$ and $\varphi\in L^2(X)$, we denote
$$a^+_i(\varphi):=a^+(0,\dots,0,\underbrace{\varphi}_{\text{$i$th place}},0,\dots,0).$$
Then, for $f\in L^2(X\to V)$ of the form \eqref{yte567}, we get
\begin{equation}\label{dutre56e}
a^+(f)=\sum_{i=1}^m a^+_i(\varphi_i). \end{equation}
For  $i=1,2,\dots,m$ and $x\in X$, we formally define a creation operator $a^+_i(x)$ that satisfies
\begin{equation}\label{e4563uw}
a^+_i(\varphi)=\int_X \varphi(x) a^+_i(x)\,dx\quad \text{for all }\varphi\in L^2(X).\end{equation}
Thus, $a^+_i(x)$ can be formally interpreted as an operator-valued distribution. Next, we define a vector $a^+(x)$ of operator-valued distributions by
$$a^+(x):=(a^+_1(x),a_2^+(x),\dots,a^+_m(x)).$$
In other words, $a^+(x)$ has $n$ components, each of which is an operator-valued distribution.

We will formally operate with $a^+(x)$ as a usual vector from $V$. So, for a vector $v=(v_1,\dots,v_m)\in V$, the $\langle\cdot,\cdot\rangle$ product of $v$ and $a^+(x)$ is given by
$$\langle v,a^+(x)\rangle=\sum_{i=1}^m v_ia^+_i(x).$$
Hence,  for a fixed $x\in X$ and  a function $f(x)$ of the form \eqref{yte567}, we have
\begin{equation}\label{vcrt5w5uwu6}
\langle f(x),a(x)\rangle=\sum_{i=1}^m \varphi_i(x)a_i^+(x).\end{equation}
In view of formulas \eqref{dutre56e}--\eqref{vcrt5w5uwu6}, we get
$$a^+(f)=\int_X\langle f(x),a^+(x)\rangle\,dx.$$
We similarly define $a^-(x)$ satisfying
$$a^-(f)=\int_X\langle f(x),a^-(x)\rangle\,dx.$$

Next, for $x,y\in X$, we may formally use the tensor product of the `vectors' $a^+(x)$ and $a^-(y)$:
$$a^+(x)\otimes a^-(y)=\big(a_i^+(x) a_j^-(y)\big)_{i,j=1,\dots,m}.$$
Hence, for $f\in L^2(X\to V)$ of the form \eqref{yte567} and $g\in L^2(X\to V)$ of the form
$$g(y)=(\psi_1(y),\psi_2(y),\dots,\psi_m(y)),$$
 we get
\begin{align}
&\int_{X^2} \big\langle (f\otimes g)(x,y),a^+(x)\otimes a^-(y)\big\rangle\,dx\,dy\notag\\
&\quad=\sum_{i,j=1}^m\int_{X^2}\varphi_i(x)\psi_j(y) a_i^+(x) a_j^-(y) \,dx\,dy\notag\\
&\quad=\sum_{i,j=1}^m \int_X\varphi_i(x)a_i^+(x)\,dx\,\int_X\psi_j(y)a^-_j(y)\,dy\notag\\
&\quad=\sum_{i,j=1}^m a_i^+(\varphi_i)a_j^-(\psi_j)=a^+(f)a^-(g).\notag
\end{align}
Hence, for an arbitrary $f^{(2)}\in H^{\otimes 2}$, we will write
$$a^{+-}(f^{(2)})=\int_{X^2}\big\langle f^{(2)}(x,y), a^+(x)\otimes a^-(y)\big\rangle\,dx\,dy.$$
We will use similar notations for $a^{++}(f^{(2)})$,  $a^{--}(f^{(2)})$, and for a product $a^-(f)a^+(g)$ with $f,g\in H$.

Let
$$X^{(2)}\ni (x,y)\mapsto M_{x,y}\in\mathcal L(V^{\otimes 2})$$ be a measurable mapping with $\|M_{x,y}\|\le1$. Then, we will write, for $f^{(2)}\in H^{\otimes 2}$,
$$  \int_{X^2}\big\langle M_{x,y}f^{(2)}(x,y), a^+(x)\otimes a^-(y)\big\rangle\,dx\,dy = \int_{X^2}\big\langle f^{(2)}(x,y), M_{x,y}^Ta^+(x)\otimes a^-(y)\big\rangle\,dx\,dy, $$
where $A^T$ denotes the transpose of a matrix  $A$.

\begin{theorem} \label{yjsqrd6uq}
For any $f,g\in H$, we have
\begin{align}
&\int_{X^2}\big\langle (f\otimes g)(x,y), a^-(x)\otimes a^+(y)\big\rangle \,dx\,dy \notag\\
&\quad=\int_{X^2} \big\langle (f\otimes g)(x,y),\widetilde C^{\,T}_{x,y}a^+(y)\otimes a^-(x)\big\rangle\,dx\,dy+\int_X \langle f(x),g(x)\rangle\,dx.\label{s5342}
\end{align}

Further assume that $\ker(\mathbf 1+T)\ne\{0\}$ and let $f^{(2)}\in H^{\otimes 2}$.  If for a.a.\ $(x,y)\in X^{(2)}$, $f^{(2)}(x,y)\in \ker(\mathbf 1-C_{x,y}C_{x,y}^*)$, then
\begin{equation}\label{yqdrwd}
\int_{X^2}\big\langle f^{(2)}(x,y),a^+(x)\otimes a^+(y)\big\rangle
\,dx\,dy= \int_{X^2}\big\langle f^{(2)}(x,y), C^{T}_{y,x} a^+(y)\otimes a^+(x)\big\rangle \,dx\,dy,\end{equation}
and if for  a.a.\ $(x,y)\in X^{(2)}$, $(\mathbb Sf^{(2)})(x,y)\in \ker(\mathbf 1-C_{x,y}C_{x,y}^*)$, then
\begin{equation}\label{eri7se}
\int_{X^2}\big\langle f^{(2)}(x,y),a^-(x)\otimes a^-(y)\big\rangle
\,dx\,dy= \int_{X^2}\big\langle f^{(2)}(x,y), \widehat C^{\,T}_{x,y} a^-(y)\otimes a^-(x)\big\rangle \,dx\,dy.\end{equation}
\end{theorem}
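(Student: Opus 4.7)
The plan is to deduce the three identities directly from Theorem~\ref{rdw6uged} specialised to the current setting, using Lemma~\ref{yre6i64} to convert $\widetilde T$ and $\widehat T$ into $\widetilde C$ and $\widehat C$, and then translating everything into the formal notation $a^\pm(x)$ introduced just before the statement.

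First I would dispose of \eqref{s5342}. Starting from $a^-(f)a^+(g)=a^{+-}(\widetilde T\,f\otimes g)+\langle f,g\rangle$ (formula~\eqref{e5uw732w7}) I would write the left-hand side in the distributional form $\int_{X^2}\langle (f\otimes g)(x,y),a^-(x)\otimes a^+(y)\rangle\,dx\,dy$ and expand $a^{+-}(\widetilde T\,f\otimes g)=\int_{X^2}\langle(\widetilde T\,f\otimes g)(x,y),a^+(x)\otimes a^-(y)\rangle\,dx\,dy$. By Lemma~\ref{yre6i64} the integrand becomes $\langle \widetilde C_{y,x}(f\otimes g)(y,x),a^+(x)\otimes a^-(y)\rangle$; swapping the integration variables $(x,y)\leftrightarrow(y,x)$ and then moving $\widetilde C_{x,y}$ across the pairing via its transpose (the agreed convention just above the theorem) yields exactly the first term on the right-hand side of \eqref{s5342}. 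The scalar term is $\langle f,g\rangle=\int_X\langle f(x),g(x)\rangle\,dx$, finishing this part.

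For \eqref{yqdrwd} I would invoke \eqref{ydqs6ri}: whenever $f^{(2)}\in\ker(\mathbf 1-T^2)$ one has $a^{++}(f^{(2)})=a^{++}(Tf^{(2)})$. The hypothesis $f^{(2)}(x,y)\in\ker(\mathbf 1-C_{x,y}C_{x,y}^*)$ for a.a.\ $(x,y)$ is precisely the pointwise reformulation of $f^{(2)}\in\ker(\mathbf 1-T^2)$ — this was already computed inside the proof of Theorem~\ref{vytr7i5}, where the identity $(\mathbf 1-T^2)f^{(2)}(x,y)=(\mathbf 1-C_{x,y}C_{x,y}^*)f^{(2)}(x,y)$ is derived. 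Writing both sides of $a^{++}(f^{(2)})=a^{++}(Tf^{(2)})$ in distributional form, using $(Tf^{(2)})(x,y)=C_{x,y}f^{(2)}(y,x)$, swapping integration variables on the right and moving $C_{y,x}$ across the pairing via $C_{y,x}^{T}$ gives \eqref{yqdrwd}. Formula \eqref{eri7se} is proved in exactly the same way, starting from \eqref{rtdSC}: the hypothesis $\mathbb S f^{(2)}(x,y)\in\ker(\mathbf 1-C_{x,y}C_{x,y}^*)$ is the pointwise form of $\mathbb S f^{(2)}\in\ker(\mathbf 1-T^2)$, and Lemma~\ref{yre6i64} supplies $(\widehat T f^{(2)})(x,y)=\widehat C_{y,x}f^{(2)}(y,x)$, after which the same variable swap and transpose manoeuvre delivers the stated identity.

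The only genuinely delicate point is bookkeeping: one must be careful that the variable swap $(x,y)\leftrightarrow(y,x)$ changes $C_{x,y}$ to $C_{y,x}$ (and likewise for $\widetilde C$, $\widehat C$) and that the transpose convention adopted just before the theorem is applied on the correct side of the pairing. Once this is handled, each identity reduces to a one-line calculation. I do not foresee any serious technical obstacle beyond this indexing care, since the strong convergence issues needed to make sense of $a^{+-}$, $a^{++}$, $a^{--}$ and hence of the formal integrals are already settled in Subsection~\ref{uerqdqdf}.
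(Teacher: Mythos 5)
Your proposal is correct and follows essentially the same route as the paper, whose proof is the one-line observation that the theorem follows from Theorem~\ref{rdw6uged}, Proposition~\ref{t5we64u4} and Lemma~\ref{yre6i64}; you simply make explicit the variable-swap and transpose bookkeeping that the paper leaves implicit. The only cosmetic difference is that you identify the hypothesis with $f^{(2)}\in\ker(\mathbf 1-T^2)$ via the pointwise identity from the proof of Theorem~\ref{vytr7i5} rather than citing formula~\eqref{vyre6w64u}, which is the same computation.
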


\begin{proof}
The statement follows immediately from Theorem \ref{rdw6uged}, formula \eqref{vyre6w64u} from Proposition~\ref{t5we64u4} and Lemma~\ref{yre6i64}. \end{proof}

 \begin{remark}
 Let $A$ be a measurable subset of $X^2$ and assume that a function $f^{(2)}\in H^{\otimes 2}$ vanishes outside the set $A$. Then it is natural to write
$$a^{++}(f^{(2)})=\int_A f^{(2)}(x,y)a^+(x)a^+(y),\quad  a^{--}(f^{(2)})=\int_A f^{(2)}(x,y)a^+(x)a^+(y).$$
In view of \eqref{diygudd}  (see also Remark~\ref{yqtfdt7edqr}), formulas \eqref{yqdrwd} and \eqref{eri7se} can be {\it equivalently\/} written as follows.
Let $f^{(2)}\in H^{\otimes 2}$ be such that $f^{(2)}(x,y)=0$ for all $(x,y)\in X^{(2)}$ with $x>y$. If for a.a.\ $(x,y)\in X^{(2)}$ with $x<y$, we have
 $f^{(2)}(x,y)\in\ker(\mathbf 1-C_{x,y}C_{x,y}^*)$, then
\begin{equation}\label{h8t86}
\int_{\{x<y\}}\big\langle f^{(2)}(x,y),a^+(x)\otimes a^+(y)\big\rangle
\,dx\,dy= \int_{\{x<y\}}\big\langle f^{(2)}(x,y), C^{T}_{y,x} a^+(y)\otimes a^+(x)\big\rangle \,dx\,dy,
\end{equation}
and if for a.a\ $(x,y)\in X^{(2)}$ with $x>y$, we have $\mathbb S f^{(2)}(x,y)\in\ker(\mathbf 1-C_{x,y}C_{x,y}^*)$, then
\begin{equation}\label{tew6u54we}
\int_{\{x<y\}}\big\langle f^{(2)}(x,y),a^-(x)\otimes a^-(y)\big\rangle
\,dx\,dy= \int_{\{x<y\}}\big\langle f^{(2)}(x,y), \widehat C^{\,T}_{x,y} a^-(y)\otimes a^-(x)\big\rangle \,dx\,dy.
\end{equation}
If we swap  the conditions $x<y$ and $x>y$, the above results will remain true.
 \end{remark}

Theorem \ref{yjsqrd6uq} (and formulas \eqref{h8t86}, and \eqref{tew6u54we}) can be easily understood by using formal commutation relations between the creation and annihilation operators at point.

\begin{corollary}\label{formalCR}
The following formal commutation relations hold.

{\rm (i)}
 For all $(x,y)\in X^{(2)}$, we have
$$ a^-(x)\otimes a^+(y)=\widetilde C^{\,T}_{x,y}a^+(y)\otimes a^-(x)+\delta(x-y)\Delta.$$
Here $\Delta:=(\delta_{ij})_{i,j=1,\dots,m}$ with $\delta_{ij}$ being the Kronecker delta, so that for any $f,g\in H$,
$$\int_{X^2} \big\langle (f\otimes g)(x,y), \delta(x-y)\Delta\big\rangle
\,dx\,dy=\int_X \langle f(x),g(x)\rangle\,dx.$$

{\rm (ii)}
For each $(x,y)\in X^{(2)}$ and a vector $v\in\ker(\mathbf 1-C_{x,y}C_{x,y}^*)$, we have
$$\langle v,a^+(x)\otimes a^+(y)\rangle=\langle C_{y,x}v, a^+(y)\otimes a^+(x)\big\rangle=\langle v,C^{T}_{y,x} a^+(y)\otimes a^+(x)\big\rangle,  $$
and for each $(x,y)\in X^{(2)}$ and a vector $v\in V^{\otimes 2}$ such that $\mathbb Sv\in \ker(\mathbf 1-C_{x,y}C_{x,y}^*)$, we have
$$\langle v,a^-(x)\otimes a^-(y)\rangle=\langle \widehat C_{x,y}v, a^-(y)\otimes a^-(x)\big\rangle=\langle v,\widehat C^{\,T}_{x,y} a^-(y)\otimes a^-(x)\big\rangle.  $$
Here $\mathbb S$ acts on the space $V^{\otimes 2}$.
\end{corollary}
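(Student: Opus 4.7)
The plan is to observe that Corollary~\ref{formalCR} is essentially a reformulation of Theorem~\ref{yjsqrd6uq} in the formal language of creation/annihilation operators at a point, so the proof amounts to smearing the formal identities against test functions and reading off the corresponding rigorous identities already established.

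For part~(i), I would multiply both sides of the putative formal relation $a^-(x)\otimes a^+(y)=\widetilde C^{\,T}_{x,y}a^+(y)\otimes a^-(x)+\delta(x-y)\Delta$ against an arbitrary test kernel $(f\otimes g)(x,y)$ with $f,g\in H$ and integrate over $X^2$. The left-hand side, by the conventions introduced just before Theorem~\ref{yjsqrd6uq} (via formulas \eqref{dutre56e}--\eqref{vcrt5w5uwu6} and their $\otimes$-analogue), equals $\int_{X^2}\langle (f\otimes g)(x,y),a^-(x)\otimes a^+(y)\rangle\,dx\,dy$. The first summand of the right-hand side is already in the desired form to apply equation~\eqref{s5342}. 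The second summand is defined by the prescription given within the statement itself, namely $\int_{X^2}\langle (f\otimes g)(x,y),\delta(x-y)\Delta\rangle\,dx\,dy=\int_X\langle f(x),g(x)\rangle\,dx$. Matching the terms yields precisely equation \eqref{s5342}, which is true by Theorem~\ref{yjsqrd6uq}; since $(f\otimes g)$ ranges over a total subset of $H^{\otimes 2}$, the formal identity is justified.

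For part~(ii), the middle equalities $\langle v,a^+(x)\otimes a^+(y)\rangle=\langle v,C^{T}_{y,x}a^+(y)\otimes a^+(x)\rangle$ and $\langle v,a^-(x)\otimes a^-(y)\rangle=\langle v,\widehat C^{\,T}_{x,y}a^-(y)\otimes a^-(x)\rangle$ are again verified by smearing. For the first, I would pick an arbitrary $f^{(2)}\in H^{\otimes 2}$ with $f^{(2)}(x,y)\in\ker(\mathbf 1-C_{x,y}C_{x,y}^*)$ for a.a.\ $(x,y)$, integrate the pointwise identity against $f^{(2)}$, and recognise the resulting equation as \eqref{yqdrwd}; conversely, since \eqref{yqdrwd} holds for all such $f^{(2)}$, the formal identity with $v=f^{(2)}(x,y)$ holds pointwise a.e., which is the content of the formal claim. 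The annihilation-annihilation version is identical, using \eqref{eri7se} instead. The outer equalities in each chain, $\langle C_{y,x}v,a^+(y)\otimes a^+(x)\rangle=\langle v,C^{T}_{y,x}a^+(y)\otimes a^+(x)\rangle$ and its $\widehat C$-analogue, follow purely from the linear-algebra identity $\langle Av,w\rangle=\langle v,A^{T}w\rangle$ valid in the bilinear extension $\langle\cdot,\cdot\rangle$ on $V^{\otimes 2}$ (see the convention introduced at formula~\eqref{indexation} and above equation~(3.8)).

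The main obstacle is not technical but notational: one must keep straight the distinction between the bilinear pairing $\langle\cdot,\cdot\rangle$ and the Hilbert-space inner product, and correctly transpose operator-valued coefficients between the test-function side and the operator-valued-distribution side. Once this bookkeeping is in place, the proof reduces to matching terms in Theorem~\ref{yjsqrd6uq}; no new estimate or extension is required.
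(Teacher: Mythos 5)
Your proposal is correct and matches the paper's intent: the corollary is stated there without proof precisely because it is the formal, pointwise rewriting of Theorem~\ref{yjsqrd6uq}, obtained by smearing against $(f\otimes g)$ (respectively against kernels $f^{(2)}$ valued in $\ker(\mathbf 1-C_{x,y}C_{x,y}^*)$) and using the transposition identity for the bilinear pairing, exactly as you describe. No gap.
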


In the case where the operator $T$ is unitary, formulas \eqref{yqdrwd}, \eqref{eri7se} hold for all $f^{(2)}\in H^{\otimes 2}$. Thus,  the commutation relations take the following  form.

\begin{corollary}\label{formalCRU}
Let $T$ be unitary. Then,  for all $(x,y)\in X^{(2)}$, we formally have:
\begin{align*}
&a^-(x)\otimes a^+(y)=\widetilde C^{\,T}_{x,y}a^+(y)\otimes a^-(x)+\delta(x-y)\Delta,\\
&a^+(x)\otimes a^+(y)= C^{T}_{y,x} a^+(y)\otimes a^+(x),\\
&a^-(x)\otimes a^-(y)=\widehat C^{\,T}_{x,y} a^-(y)\otimes a^-(x).
\end{align*}
\end{corollary}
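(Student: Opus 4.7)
The plan is to derive Corollary \ref{formalCRU} as an immediate specialization of Theorem \ref{yjsqrd6uq} and Corollary \ref{formalCR}, the only work being to observe that the side conditions of type ``$f^{(2)}(x,y)\in\ker(\mathbf 1-C_{x,y}C_{x,y}^*)$'' become vacuous when $T$ is unitary.

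First I would translate the unitarity of $T$ into a pointwise condition on $C_{x,y}$. For $f^{(2)}\in H^{\otimes 2}$ one computes, as in the proof of Theorem \ref{vytr7i5}, that $((\mathbf 1-T^2)f^{(2)})(x,y)=(\mathbf 1_{V^{\otimes 2}}-C_{x,y}C_{x,y}^*)f^{(2)}(x,y)$. Unitarity of $T$ means $T^*T=\mathbf 1$, and since $T$ is self-adjoint this is equivalent to $T^2=\mathbf 1$, hence $\mathbf 1_{V^{\otimes 2}}-C_{x,y}C_{x,y}^*=\mathbf 0$ for a.a.\ $(x,y)\in X^{(2)}$. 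In particular, $\ker(\mathbf 1-C_{x,y}C_{x,y}^*)=V^{\otimes 2}$ for a.a.\ $(x,y)$.

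Next I would apply Corollary \ref{formalCR}. Part~(i) requires no assumption on $T$ and yields the first relation $a^-(x)\otimes a^+(y)=\widetilde C^{\,T}_{x,y}a^+(y)\otimes a^-(x)+\delta(x-y)\Delta$ verbatim. Part~(ii) imposes the conditions $v\in\ker(\mathbf 1-C_{x,y}C_{x,y}^*)$ (for the creation--creation relation) and $\mathbb S v\in\ker(\mathbf 1-C_{x,y}C_{x,y}^*)$ (for the annihilation--annihilation relation), both of which now hold for every $v\in V^{\otimes 2}$ by the previous paragraph. Hence the latter two relations hold without restriction on $v$, which is precisely the content of the remaining two formal identities.

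There is no genuine obstacle here; the only thing to keep straight is the equivalence, for self-adjoint bounded $T$, between ``$T$ unitary'', ``$T^2=\mathbf 1$'', and ``$C_{x,y}$ unitary on $V^{\otimes 2}$ for a.a.\ $(x,y)$'', together with the resulting triviality of the kernel conditions. Once these are noted, the corollary is merely a restatement of Corollary \ref{formalCR} with the restrictions removed.
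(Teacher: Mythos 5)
Your proposal is correct and follows essentially the same route as the paper: the text preceding Corollary~\ref{formalCRU} simply notes that for unitary $T$ the kernel conditions in Theorem~\ref{yjsqrd6uq} (equivalently, in Corollary~\ref{formalCR}(ii)) are satisfied by every $f^{(2)}$, since $T^2=\mathbf 1$ forces $C_{x,y}C_{x,y}^*=\mathbf 1$ a.e. Your explicit verification of the pointwise equivalence between unitarity of $T$ and unitarity of $C_{x,y}$ is exactly the (implicit) content of the paper's one-line justification.
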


\section{Examples}
\label{sec-examp}

In this section, we will consider several particular examples of the operator $T$ associated with a multicomponent quantum system  and explicitly compute the corresponding Fock space and commutation relations between creation and annihilation operators. In all but the very last example, the operator $T$ will be constructed through a single linear operator $C$ on $V^{\otimes2}$  which satisfies the (constant) Yang--Baxter equation on $V^{\otimes3}$:
\begin{equation}  \label{YBnumer}
C^{1,2} C^{2,3} C^{1,2} = C^{2,3} C^{1,2} C^{2,3}.
\end{equation}
  We restrict ourselves to  $V = \mathbb{C}^2$, in which case all solutions of the (equivalent form of the) Yang--Baxter equation are classified in \cite{hiet}, see also the earlier PhD thesis \cite{Lyubashenko}.

We will denote by $(e_1,e_2)$ the standard orthonormal basis of $V=\mathbb{C}^2$, and by $(e_{11},e_{12},e_{21},e_{22})$, with $e_{ij}:=e_i\otimes e_j$, the corresponding orthonormal basis of $V^{\otimes 2}$.  In this basis, we will identify linear operators on $V\otimes V$ with matrices acting on column vectors.  By \eqref{indexation1}, if
\begin{equation}\label{ctrs6uw}
\text{if }C=\left(\begin{matrix}
c_{11}^{11} & c_{11}^{12} & c_{11}^{21} & c_{11}^{22}\\
c_{12}^{11} & c_{12}^{12} & c_{12}^{21} & c_{12}^{22}\\
c_{21}^{11} & c_{21}^{12} & c_{21}^{21} & c_{21}^{22}\\
c_{22}^{11} & c_{22}^{12} & c_{22}^{21} & c_{22}^{22}
\end{matrix}\right),\text{ then }\widetilde C=\left(\begin{matrix}
c_{11}^{11} & c_{12}^{11}& c_{11}^{12}& c_{12}^{12} \\
 c_{21}^{11}& c^{11}_{22}  &c_{21}^{12} & c^{12}_{22}\\
 c_{11}^{21}&c_{12}^{21} &c_{11}^{22}  & c_{12}^{22} \\
  c_{21}^{21}&c_{22}^{21}  & c_{21}^{22}   & c_{22}^{22}
\end{matrix}\right). \end{equation}

For a function $f^{(n)}\in H^{\otimes n}$, we will denote by $f^{(n)}_{i_1i_2\dots i_n}\in L^2(X^{(n)})$ the $e_{i_1}\otimes e_{i_2}\otimes\dots\otimes e_{i_n}$ coordinate of $f^{(n)}$, where  $i_1,i_2,\dots,i_n\in\{1,2\}$.


\subsection{Spatially constant $C$}\label{C-const}

We start with the case where $C_{xy}$ is independent of spatial variables $%
x,y$, i.e., $C_{xy}=C$ for a fixed matrix $C= C^\ast$, $\|C\|\leq 1$. Then the operator $T$ satisfies  equation \eqref{braid} if and only if
the matrix $C$ satisfies requation \eqref{YBnumer}.


\begin{example}\label{ra4tqq}
Let us consider the  operator $C$  given by the matrix
\begin{equation*}
C=\left(
\begin{matrix}
k & 0 & 0 & 0 \\
0 & 0 & q & 0 \\
0 & \bar{q} & 0 & 0 \\
0 & 0 & 0 & k%
\end{matrix}%
\right).
\end{equation*}%
Here $k\in (-1,1)$ and $q\in\mathbb{C}$, $|q|=1$. Then \begin{equation*}
\mathbf 1-C^{2}=\left(
\begin{matrix}
1-k^{2} & 0 & 0 & 0 \\
0 & 0 & 0 & 0 \\
0 & 0 & 0 & 0 \\
0 & 0 & 0 & 1-k^{2}%
\end{matrix}%
\right),
\end{equation*}%
which implies
\begin{align}
\ran(\mathbf 1-C^2)&=\ls\{e_{11},e_{22}\},\label{xesa5yw3}\\
\ker(\mathbf 1-C^2)&=\ls\{e_{12},e_{21}\}.\label{we4qy}
\end{align}
Here $\ls$ denotes the linear span.
For $f^{(2)}\in H^{\otimes2}$,
\begin{align*}
f^{(2)}(x,y)-Cf^{(2)}(y,x)&=\big(f_{11}^{(2)}(x,y)-kf_{11}^{(2)}(y,x)\big)e_{11}+\big(f_{12}^{(2)}(x,y)-\bar q f_{21}^{(2)}(y,x)\big)e_{12}\\
&\quad+\big(f_{21}^{(2)}(x,y)-q f_{12}^{(2)}(y,x)\big)e_{21}+\big(f_{22}^{(2)}(x,y)-kf_{221}^{(2)}(y,x)\big)e_{22}.
\end{align*}
Hence, by \eqref{xesa5yw3}, the condition $f^{(2)}(x,y)-Cf^{(2)}(y,x)\in \ran(\mathbf 1-C^2)$ is equivalent to
\begin{equation}\label{rts54ywq5}
f_{21}^{(2)}(x,y)=q f^{(2)}_{12}(y,x).\end{equation}

By  Theorem~\ref{vytr7i5}, we now get the following explicit description of $\mathcal F_n(H)$.
Define
\begin{equation}\label{se53w272}
Q(1,2):=\bar q,\quad Q(2,1):=q.
\end{equation}
 Then for $n\ge 2$, $\mathcal F_n(H)$   consists of all functions $f^{(n)}\in H^{\otimes n}$ that satisfy a.e.\ the following symmetry condition:
\begin{equation}\label{s4q53y2}
f^{(n)}_{i_1\dots i_n}(x_1,\dots,x_n)=Q(i_k,i_{k+1})f^{(n)}_{i_1\dots i_{k-1}i_{k+1}i_ki_{k+2}\dots i_n}(x_1,\dots,x_{k-1},x_{k+1},x_{k},x_{k+2},\dots,x_n) \end{equation}
 for $k\in\{1,\dots,n-1\}$ and $i_1,\dots,i_n\in\{1,2\}$ with $i_k\ne i_{k+1}$.
In particular, the function $f^{(n)}\in\mathcal F_n(H)$ is completely identified
by its coordinates $f^{(n)}_{i_1\dots i_n}$ with $i_1\le i_2\le \dots\le i_n$.

By using Corollary \ref{rewa5w} and \eqref{xesa5yw3}, one can easily calculate $\mathfrak P$, the orthogonal projection  of $H^{\otimes 2}$ onto $\mathcal F_2(H)$:
\begin{align}
(\mathfrak P f^{(2)})(x,y)&=f_{11}^{(2)}(x,y)e_{11}+f_{22}^{(2)}(x,y)e_{22}+
\frac12\big(f_{12}^{(2)}(x,y)+\bar q f_{21}^{(2)}(y,x)\big)e_{12}\notag\\
&\quad+\frac12\big(f_{21}^{(2)}(x,y)+ q f_{12}^{(2)}(y,x)\big)e_{21}.\label{rsw54aq4}
\end{align}

To obtain the commutation relations between creation and annihilation
operators, we use Theorem \ref{yjsqrd6uq} and \eqref{we4qy}. Additionally to \eqref{se53w272}, set also
$$
Q(1,1)=Q(2,2):=k.$$
By \eqref{ctrs6uw}, we get $\widetilde C=C^*$. Hence, for all $\varphi,\psi\in L^2(X)$,
\begin{align}
&a^-_i(\varphi)a_i^+(\psi)=Q(i,j)a_i^+(\psi)a^-_i(\varphi)+\delta_{ij}\langle \varphi,\psi\rangle,\quad i,j\in\{1,2\},\notag\\
&a^+_i(\varphi)a_i^+(\psi)=Q(j,i)a_i^+(\psi)a^+_i(\varphi),\quad i\ne j,\notag\\
&a^-_i(\varphi)a_i^-(\psi)=Q(j,i)a_i^-(\psi)a^-_i(\varphi),\quad i\ne j.
\label{qds5qsd}
\end{align}

It can be shown that in this case there exists the universal $C^*$-algebra $\mathbf A$ generated by $a_i^{+}(\varphi)$, $a_i^{-}(\varphi)$, $i=1,2$, $\varphi\in L^2(X)$. Let also $\mathbf A_1$, $\mathbf A_2$ be the $C^*$-subalgebras of
$\mathbf A$ generated by $a_1^{+}(\varphi)$, $a_1^{-}(\varphi)$ and
$a_2^{+}(\varphi)$, $a_2^{-}(\varphi)$, respectively. Note that each $\mathbf{A}_i$ (i=1,2) is the $C^*$-algebras generated by the $k$-deformed commutation relations with $k\in (-1,1)$, see \cite{BS}.

One can construct the tensor product $\mathbf A_1\otimes\mathbf A_2$ and consider its Rieffel deformation, denoted by $\mathbf A_1\otimes_q \mathbf A_2$, see \cite{rieffel}. Then it turns out that the Fock representation of $\mathbf A$ can be realized as the composition of the canonical surjection $\Phi\colon \mathbf A\rightarrow \mathbf A_1\otimes_q \mathbf A_2$ and the Fock representation of $\mathbf A_1\otimes_q \mathbf A_2$.
This approach will give us a deeper insight into the structure of the Fock representation of $\mathbf{A}$.

Below we will use the fact that any irreducible representation of $\mathbf{A}$ that possesses a vacuum vector annihilated by $a_i^{-}(\varphi)$, $i=1,2$, $\varphi\in L^2(X)$, is unitarily equivalent to the Fock representation, see \cite{jorgensenschmittwernerJFA1995}.

Let $K:=L^2(X)$. Construct the
 Fock space
$\mathcal{F}(K)=\bigoplus_{n=0}^\infty \mathcal F_n(K)$
corresponding to the Fock representation of the $k$-deformed commutation relations, and denote by $\Psi$ the vacuum vector in $\mathcal{F}(K)$, see \cite{BS} for details.
Let $a^{+}(\varphi)$, $a^{-}(\varphi)$ ($\varphi\in K$) be the corresponding  creation and annihilation operators on $\mathcal{F}(K)$.
We construct a unitary operator $U\colon\mathcal{F}(K)\rightarrow\mathcal{F}(K)$ by
\[
U\Psi=\Psi,\quad U \varphi^{(n)}=q^n \varphi^{(n)},\quad \varphi^{(n)}\in\mathcal{F}_n(K),\ n\in\mathbb{N}.
\]
Obviously,
\[
U a^{+}(\varphi)= q a^{+}(\varphi) U,\quad U a^{-}(\varphi)= \bar{q} a^{-}(\varphi) U.
\]
Define the space $\mathcal{F}:=\mathcal{F}(K)\otimes\mathcal{F}(K)$ and bounded linear operators operators $a_i^+(\varphi)$, $a_i^-(\varphi)$ ($\varphi\in K$) on $\mathcal F$ by
\begin{align}
a_1^{+}(\varphi)&:=a^{+}(\varphi)\otimes\mathbf{1}_{\mathcal{F}(K)},\quad a_2^{+}(\varphi):=U\otimes a^{+}(\varphi),\nonumber\\
a_1^{-}(\varphi)&:=a^{-}(\varphi)\otimes\mathbf{1}_{\mathcal{F}(K)},\quad a_2^{-}(\varphi):=U^*\otimes a^{-}(\varphi).\label{k_qFock}
\end{align}
It is easy to verify that these operators satisfy the commutation relation \eqref{qds5qsd}.  The  family $\big(a^{+}(\varphi),\, a^{-}(\varphi)\big)_{\varphi\in K}$ is irreducible on $\mathcal{F}(K)$, see \cite{jorgensenschmittwernerJFA1995}. Then the Schur Lemma implies that
the family  $\big(a_i^{+}(\varphi),\, a_i^{-}(\varphi)\big)_{\varphi\in K,\, i=1,2}$ is irreducible on $\mathcal{F}$. Evidently,
for $\Omega=\Psi\otimes\Psi$,  we have
$
a_i^{-}(\varphi)\Omega =0$ for all $\varphi\in K$ and $i=1,2$.
Thus, as noted above, the operators defined by (\ref{k_qFock}) determine the Fock representation of the commutation relations \eqref{qds5qsd}.

As a corollary of our description we get the boundedness of the Fock representation of \eqref{qds5qsd}. Indeed, as follows from  \cite{BS},  for  each $\varphi\in K$, the operator $a^{+}(\varphi)$ is bounded and $\|a^{+}(\varphi)\|=\frac{\|f\|}{\sqrt{1-k}}$. Hence,
\[
\|a_i^{+}(\varphi)\|=\frac{\|\varphi\|}{\sqrt{1-k}},\quad i=1,2.
\]
\end{example}

\begin{example}
Consider the following operator $C$ related to the Pusz--Woronowicz twisted canonical commutation relations \cite{pw} (see also \cite{BBB}),
\begin{equation}\label{xesa45w}
C=\left(
\begin{matrix}
\mu^2 & 0 & 0 & 0 \\
0 & 0 & \mu & 0 \\
0 & \mu & \mu^2-1 & 0 \\
0 & 0 & 0 & \mu^2%
\end{matrix}%
\right),\quad \mu\in (0,1).
\end{equation}
We then get
$$\mathbf 1-C^2=\left(\begin{matrix}
1-\mu^4&0&0&0\\
0&1-\mu^2&-\mu(\mu^2-1)&0\\
0&-\mu(\mu^2-1)&1-\mu^2-(\mu^2-1)^2&0\\
0&0&0&1-\mu^4
\end{matrix}\right).$$
An easy calculation then shows that
\begin{align}
\ker(\mathbf 1-C^2)&=\ls\{-\mu e_{12}+e_{21}\},\label{xetsdetse5}\\
\ran(\mathbf 1-C^2)&=\ls\big\{e_{11},e_{22}, e_{12}+\mu e_{21}\big\}\label{xssssres}.
\end{align}
By \eqref{xssssres}, for a function $f^{(2)}\in H^{\otimes 2}$, we have
$$f^{(2)}(x,y)-Cf^{(2)}(y,x)\in \ran(\mathbf 1-C^2)$$
if and only if $\mu\tilde f^{(2)}_{12}=\tilde f^{(2)}_{21}$. Here,
$$\tilde f^{(2)}_{ij}(x,y):=\frac12\big(f^{(2)}_{ij}(x,y)+f^{(2)}_{ij}(y,x)\big).$$
Hence, by Theorem~\ref{vytr7i5}, for $n\ge 2$, $\mathcal F_n(H)$ consists of all functions $f^{(n)}\in H^{\otimes n}$ that satisfy a.e.\ the following symmetry condition:
$$\mu S_k^{(n)} f^{(n)}_{i_1\dots i_{k-1}12i_{k+2}\dots i_n}=S_k^{(n)}f^{(n)}_{i_1\dots i_{k-1}21i_{k+2}\dots i_n} $$
for $k\in\{1,\dots,n-1\}$ and $i_1,\dots,i_{k-1},i_{k+2},\dots,i_n\in\{1,2\}$. Here, for $f^{(n)}\in H^{\otimes n}$ and $i_1,\dots,i_n\in\{1,2\}$,
\begin{align*}
&(S_k^{(n)}f^{(n)}_{i_1,\dots,i_n})(x_1,\dots,x_n)\\
&\quad:= \frac12\big(
f^{(n)}_{i_1,\dots,i_n}(x_1,\dots,x_n)+f^{(n)}_{i_1,\dots,i_n}(x_1,\dots,x_{k-1},x_{k+1},x_{k},x_{k+2},\dots,x_n)
\big).\end{align*}

By using Corollary \ref{rewa5w} and \eqref{xssssres}, we get, for $f^{(2)}\in H^{\otimes 2}$
\begin{align*}
\mathfrak Pf^{(2)}&=f^{(2)}_{11}e_{11}+f^{(2)}_{22}e_{22}+
\big(f^{(2)}_{12}+{\textstyle\frac\mu{1+\mu^2}}(-\mu\tilde f^{(2)}_{12}+\tilde f^{(2)}_{21})\big)e_{12}\\
&\quad+\big(f^{(2)}_{21}-{\textstyle\frac1{1+\mu^2}}(-\mu\tilde f^{(2)}_{12}+\tilde f_{21})\big)e_{21}.
\end{align*}

By \eqref{ctrs6uw} and \eqref{xesa45w},
$$\widetilde C=\left(
\begin{matrix}
\mu^2& 0&0&0\\
0&0&\mu&0\\
0&\mu&0&0\\
\mu^2-1&0&0&\mu^2
\end{matrix}
\right).$$

Hence, by Theorem \ref{yjsqrd6uq} and \eqref{xetsdetse5}, we have, for all $\varphi,\psi\in L^2(X)$,
\begin{align*}
&a_1^-(\varphi)a_1^+(\psi)=\mu^2a_1^+(\psi)a_1^-(\varphi)+\langle\varphi,\psi\rangle,\\
&a_2^-(\varphi)a_2^+(\psi)=(\mu^2-1)a_1^+(\psi)a_1^-(\varphi)
+\mu^2a_2^+(\psi)a_2^-(\varphi)+\langle\varphi,\psi\rangle,\\
&a_1^-(\varphi)a_2^+(\psi)=\mu a_2^+(\psi)a_1^-(\varphi),\\
&a_2^-(\varphi)a_1^+(\psi)=\mu a_1^+(\psi)a_2^-(\varphi),\\
&a_2^+(\varphi)a_1^+(\psi)+a_2^+(\psi)a_1^+(\varphi)=\mu\big(
a_1^+(\varphi)a_2^+(\psi)+a_1^+(\psi)a_2^+(\varphi)\big),\\
&a_1^-(\varphi)a_2^-(\psi)+a_1^-(\psi)a_2^-(\varphi)=\mu\big(
a_2^-(\varphi)a_1^-(\psi)+a_2^-(\psi)a_1^-(\varphi)\big).
\end{align*}
\end{example}

\begin{example}
Consider   the operator $C$ given by the matrix
\begin{equation*}
C=\left(
\begin{matrix}
0 & 0 & 0 & q \\
0 & k & 0 & 0 \\
0 & 0 & k & 0 \\
\bar{q} & 0 & 0 & 0%
\end{matrix}%
\right)
\end{equation*}%
where $q\in\mathbb{C}$, $|q|=1$ and $k\in [-1,1]$.
Then
\begin{equation*}
\mathbf 1-C^{2}=\left(
\begin{matrix}
0 & 0 & 0 & 0 \\
0 & 1-k^{2} & 0 & 0 \\
0 & 0 & 1-k^{2} & 0 \\
0 & 0 & 0 & 0%
\end{matrix}%
\right) .
\end{equation*}%

First assume $|k|<1$. Then
\begin{align}\ran(\mathbf 1-C^{2})&=\ls\{e_{12},e_{21}\},\label{s4wq4yw}\\
 \ker(\mathbf 1-C^{2})&=\ls\{e_{11},e_{22}\}.\label{xw4aq42w7}\end{align}
Just as in Example \ref{ra4tqq}, let $Q(1,2):=\bar q$ and $Q(2,1):=q$. By  Theorem~\ref{vytr7i5} and \eqref{s4wq4yw},  $\mathcal F_n(H)$ consists of all functions $f^{(n)}\in H^{\otimes n}$ that satisfy a.e.\ the following symmetry condition:
\begin{align}
& f^{(n)}_{i_1\dots i_{k-1}i_ki_ki_{k+2}\dots i_n} (x_1,\dots,x_n)\notag\\
&\quad=Q(i_k,j_k)	f^{(n)}_{i_1\dots i_{k-1}j_kj_ki_{k+2}\dots i_n} (x_1,\dots,x_{k-1},x_{k+1},x_k,x_{k+2},\dots,x_n)\label{xerqy453q}\end{align}
for $k\in\{1,\dots,n-1\}$ and $i_1,\dots,i_{k-1},i_k,j_k,i_{k+2},\dots,i_n\in\{1,2\}$, $i_k\ne j_k$,  compare with  \eqref{s4q53y2}.
Similarly  to \eqref{rsw54aq4}, we can easily find the explicit form of $\mathfrak P$.

Furthermore, by \eqref{ctrs6uw},
\begin{equation*}
\widetilde C=\left(
\begin{matrix}
0 & 0 & 0 & k \\
0 & \bar q & 0 & 0 \\
0 & 0 & q & 0 \\
k & 0 & 0 & 0%
\end{matrix}%
\right),
\end{equation*}%
which, by Theorem \ref{yjsqrd6uq} and \eqref{xw4aq42w7}, implies the commutation relations, for any $\varphi,\psi\in L^2(X)$,
\begin{align}
&a^-_1(\varphi)a_1^+(\psi)=ka_2^+(\psi)a^-_2(\varphi)+\langle \varphi,\psi\rangle,\notag\\
&a^-_2(\varphi)a_2^+(\psi)=ka_1^+(\psi)a^-_1(\varphi)+\langle \varphi,\psi\rangle,\notag\\
&a_1^-(\varphi)a_2^+(\psi)=\bar q a_1^+(\psi)a_2^-(\varphi),\notag\\
&a_2^-(\varphi)a_1^+(\psi)=qa_2^+(\psi)a_1^-(\varphi),\notag\\
&a_1^+(\varphi)a_1^+(\psi)=qa_2^+(\psi)a_2^+(\varphi),\notag\\
&a_1^-(\varphi)a_1^-(\psi)=\bar q a_2^-(\psi)a_2^-(\varphi).\label{qydfcfrdy}
\end{align}
Note that the commutation relations \eqref{qydfcfrdy} have a more complex structure than the commutation relations \eqref{qds5qsd}.

In the case $k=\pm1$, the matrix $C$ is unitary, and so $\mathbf 1-C^2=\mathbf 0$.
To describe $\mathcal F_n(H)$, in addition to \eqref{xerqy453q}, the following symmetry condition must be satisfied:
$$f^{(n)}_{i_1\dots i_n}(x_1,\dots,x_n)=k f^{(n)}_{i_1\dots i_{k-1}i_{k+1}i_ki_{k+2}\dots i_n}(x_1,\dots,x_{k-1},x_{k+1},x_{k},x_{k+2},\dots,x_n) $$
 for $k\in\{1,\dots,n-1\}$ and $i_1,\dots,i_n\in\{1,2\}$ with $i_k\ne i_{k+1}$.
Recall also that $\mathbb P_n=\frac1{n!}\mathcal P_n$ in this case. Additionally to the commutation relations \eqref{qydfcfrdy}, it also holds that
\begin{align*}
&a_1^+(\varphi)a_2^+(\psi)=ka_2^+(\psi)a_1^+(\varphi),\\
&a_1^-(\varphi)a_2^-(\psi)=ka_2^-(\psi)a_1^-(\varphi).
\end{align*}

\end{example}

\subsection{Non-Abelian anyon quantum systems}

In this section, we will discuss the case where the operator $C$  depends on spatial variables $(x,y)\in X^{(2)}$ in a special way and determines a non-Abelian anyon quantum system when $d=2$, see \cite{GoMa}.

Recall \eqref{dr6e6ue}. For $x,y\in X^{(2)}$, we will write $x<y$ and $x>y$ if $x^1<y^1$ and $x^1>y^1$, respectively.  Let $C$ be a unitary
operator on $V\otimes V$ and we define $C_{x,y}$ by formula~\eqref{sr4q4w2}.
By \eqref{T-C-map}, we get $T^2=\mathbf1$, hence $T$ is a unitary operator.

\begin{lemma}\label{drs5ywuy}
The operator $T$ satisfies the Yang--Baxter equation \eqref{braid} on $H^{\otimes 3}$  if and only if the operator $C$ satisfies the Yang--Baxter equation \eqref{YBnumer} on $V^{\otimes 3}$.
\end{lemma}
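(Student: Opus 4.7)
The plan is to invoke Lemma \ref{5e3ss}, which reduces the Yang--Baxter equation for $T$ on $H^{\otimes 3}$ to the spectral quantum Yang--Baxter equation
\begin{equation*}
C_{x,y}^{1,2}C_{x,z}^{2,3}C_{y,z}^{1,2}=C_{y,z}^{2,3}C_{x,z}^{1,2}C_{x,y}^{2,3}
\end{equation*}
at a.a.\ $(x,y,z)\in X^{(3)}$. Since $C_{u,v}$ depends only on the sign of $u^1-v^1$ and equals either $C$ or $C^*=C^{-1}$, the set $X^{(3)}$ decomposes, up to a null set, into six positive-measure regions indexed by the total order of $x^1,y^1,z^1$. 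The spectral QYBE has to be checked on each of these six regions separately.

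The ``only if'' direction is immediate: restrict to the region $\{x^1<y^1<z^1\}$, on which all three operator-valued coefficients equal $C$, so the spectral QYBE becomes exactly the (constant) YBE $C^{1,2}C^{2,3}C^{1,2}=C^{2,3}C^{1,2}C^{2,3}$ for $C$ on $V^{\otimes 3}$.

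For the ``if'' direction, I would carry out a short case analysis over the six orderings. The region $\{z^1<y^1<x^1\}$ gives the YBE for $C^*$, which is equivalent to YBE for $C$ by taking adjoints (noting $(ABC)^*=C^*B^*A^*$). In the four remaining regions, the three factors are a mix of $C$ and $C^*$; here I would use the unitarity $C^*=C^{-1}$ and multiply each side of the spectral QYBE by suitable $C^{1,2}$ or $C^{2,3}$ factors to cancel the inverses. As a representative sample, in the region $\{y^1<x^1<z^1\}$ the equation reads
\begin{equation*}
(C^*)^{1,2}C^{2,3}C^{1,2}=C^{2,3}C^{1,2}(C^*)^{2,3};
\end{equation*}
multiplying both sides on the left by $C^{1,2}$ and on the right by $C^{2,3}$ converts it into $C^{2,3}C^{1,2}C^{2,3}=C^{1,2}C^{2,3}C^{1,2}$, i.e.\ the YBE for $C$. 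The other three mixed orderings are handled by analogous one-line manipulations (or, equivalently, by taking inverses of a previously established case).

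The only step that requires any real work is this case-by-case verification of the four ``mixed'' orderings, but unitarity of $C$ makes each of them collapse to YBE for $C$, so no genuine obstacle arises. The full proof is essentially a bookkeeping exercise through the six regions, together with one appeal to Lemma \ref{5e3ss}.
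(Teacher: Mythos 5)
Your proposal is correct and follows essentially the same route as the paper: reduce to the spectral quantum Yang--Baxter equation via Lemma~\ref{5e3ss}, observe that on $\{x^1<y^1<z^1\}$ it is exactly \eqref{YBnumer}, and handle the mixed orderings by multiplying through by $C^{1,2}$ or $C^{2,3}$ and using unitarity to cancel the adjoints. The paper works out the case $x<z<y$ as its representative mixed ordering and declares the rest similar, exactly as you do with $y<x<z$.
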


\begin{proof}
Recall Lemma \ref{5e3ss}. In view of \eqref{sr4q4w2}, for $x<y<z$, formula \eqref{YBxyz} becomes   \eqref{YBnumer}.
If $x<z<y$, \eqref{sr4q4w2} obtains the form
\begin{equation}  \label{YBnumerC}
C^{1,2}C^{2,3}(C^{1,2})^* = (C^{2,3})^* C^{1,2} C^{2,3}.
\end{equation}
Multiplying equality \eqref{YBnumerC} by $C^{23}$ from the left and by $C^{12}$ from the right, we arrive at \eqref{YBnumer}. The other remaining cases are similar.
\end{proof}

\begin{remark}

\end{remark}

The next statement is Corollary \ref{ysd6qwe} applied to our case.

\begin{proposition}\label{ctsw6u}
\label{uniFock} For each $n\ge 2$, the space $\mathcal{F}_n (H)$ consists of all functions $f^{(n)}\in H^{\otimes n}$ that satisfy a.e.\ the following symmetry condition:
\begin{equation}  \label{Csymmetry}
f^{(n)}(x_1, \dots , x_n) = C^{k,k+1}_{x_k, x_{k+1}} f(x_1,\dots,x_{k-1}, x_{k+1}, x_k,x_{k+2},\dots , x_n)
\end{equation}
for each $k\in\{1,\dots,n-1\}$.
\end{proposition}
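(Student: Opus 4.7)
The strategy is to apply Corollary \ref{ysd6qwe} directly, after checking that the operator $T$ defined via \eqref{T-C-map} and \eqref{sr4q4w2} is unitary in this setting. Self-adjointness of $T$ is already built into the general framework via $C_{x,y}^*=C_{y,x}$, so what remains is $T^2=\mathbf 1$. This follows from the unitarity of $C$: for a.a.\ $(x,y)\in X^{(2)}$ the formula \eqref{sr4q4w2} together with $CC^*=C^*C=\mathbf 1_{V^{\otimes 2}}$ gives $C_{x,y}C_{y,x}=C_{x,y}C_{x,y}^*=\mathbf 1_{V^{\otimes 2}}$, and then a short computation shows
\[
(T^2f^{(2)})(x,y)=C_{x,y}(Tf^{(2)})(y,x)=C_{x,y}C_{y,x}f^{(2)}(x,y)=f^{(2)}(x,y).
\]

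Having established that $T$ is unitary, Corollary \ref{ysd6qwe} tells us that
\[
\mathcal F_n(H)=\bigcap_{i=1}^{n-1}\ker(T_i-\mathbf 1).
\]
The second step is to unpack what $T_i f^{(n)}=f^{(n)}$ means concretely for $f^{(n)}\in H^{\otimes n}=L^2(X^{(n)}\to V^{\otimes n})$. A direct calculation analogous to the one in the proof of Lemma \ref{5e3ss} yields
\[
(T_if^{(n)})(x_1,\dots,x_n)=C_{x_i,x_{i+1}}^{i,i+1}f^{(n)}(x_1,\dots,x_{i-1},x_{i+1},x_i,x_{i+2},\dots,x_n),
\]
so the fixed-point condition $T_if^{(n)}=f^{(n)}$ is precisely the symmetry relation \eqref{Csymmetry} for that particular $i$.

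Combining these two observations gives the proposition. There is no real obstacle here; the result is essentially a translation of Corollary \ref{ysd6qwe} into the concrete multicomponent language, once unitarity of $T$ is verified. One small bookkeeping point worth noting in the write-up is that $X^{(n)}$ has full measure in $X^n$, so the a.e.\ qualifier in \eqref{Csymmetry} is the natural one and imposes the stated condition on all coordinate swaps $(x_k,x_{k+1})\in X^{(2)}$ without issue.
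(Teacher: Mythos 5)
Your proposal is correct and follows exactly the paper's route: the paper simply notes that $T^2=\mathbf 1$ (hence $T$ is unitary) and presents the proposition as Corollary~\ref{ysd6qwe} applied to this setting. You have merely filled in the routine verifications (unitarity of $T$ from that of $C$, and the explicit form of the fixed-point condition $T_if^{(n)}=f^{(n)}$), which the paper leaves implicit.
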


Also recall that, in this case, the orthogonal projection of $H^{\otimes n}$ onto $\mathcal F_n(H)$ satisfies $\mathbb P_n=\frac1{n!}\,\mathcal P_n$.

\begin{example}\label{fytr7r4}
Consider $C$ of the form
\begin{equation}\label{cxeswa}
C=\left(
\begin{matrix}
q_1 & 0 & 0 & 0 \\
0 & 0 & q_3 & 0 \\
0 & q_2 & 0 & 0 \\
0 & 0 & 0 & q_4%
\end{matrix}
\right),
\end{equation}
where $q_1,q_2,q_3,q_4\in\mathbb C$ are of modulus 1.
Define a complex-valued function $Q$ a.e.\ on $(\{1,2\}\times X)^2$ by
\begin{align*}
&Q(1,x,1,y):=\begin{cases}q_1,&\text{if }x<y,\\
\bar q_1,&\text{if }x>y,\end{cases}\qquad Q(2,x,2,y):=\begin{cases}q_4,&\text{if }x<y,\\
\bar q_4,&\text{if }x>y,\end{cases}\\
&Q(1,x,2,y):=\begin{cases}q_3,&\text{if }x<y,\\
\bar q_2,&\text{if }x>y,\end{cases}\qquad Q(2,x,1,y):=\begin{cases}q_2,&\text{if }x<y,\\
\bar q_3,&\text{if }x>y.\end{cases}
\end{align*}
Note that the function $Q$ Hermitian:
$$Q(i,x,j,y)=\overline{Q(j,y,i,y)}.$$
Then, by Proposition \ref{ctsw6u} and \eqref{cxeswa}, for each $n\ge 2$, the space $\mathcal{F}_n (H)$ consists of all functions $f^{(n)}\in H^{\otimes n}$ that satisfy a.e.\ the following symmetry condition:
\begin{align*}
&f^{(n)}_{i_1\dots i_n}(x_1,\dots,x_n)\\
&\quad=Q(i_k,x_k,i_{k+1},x_{k+1})f^{(n)}_{i_1\dots i_{k-1}i_{k+1}i_k i_{k+2}\dots i_n}(x_1,\dots,x_{k-1},x_{k+1},x_k,x_{k+2},\dots,x_n)
\end{align*}
for all $i_1,\dots,i_n\in\{1,2\}$ and $k\in\{1,\dots,n-1\}$.

By \eqref{ctrs6uw} and \eqref{cxeswa}, we get
$C=\widetilde C^T$ and $C^*=\big(\widetilde C^*\big)^T$.
Hence, by Corollary \ref{formalCRU}, we obtain the following formal commutation relations:
\begin{align*}
&a^-_i(x)a^+_j(y)=Q(i,x,j,y)a^+_j(y)a^-_i(x)+\delta(x-y)\delta_{ij},\\
&a^+_i(x)a^+_j(y)=Q(j,y,i,x)a^+_j(y)a^+_i(x),\\
&a^-_i(x)a^-_j(y)=Q(j,y,i,x)a^-_j(y)a^-_i(x).
\end{align*}
\end{example}

\begin{remark}
Note that the commutation relations in Examples~\ref{ra4tqq}  and~\ref{fytr7r4} are governed by a single Hermitian function, $Q(i,j)$ in Example~\ref{ra4tqq} and $Q(i,x,j,y)$ in Example~\ref{fytr7r4}. Therefore, to construct these examples, one could use the theory of commutation relations deformed with a Hermitian, complex-valued function $Q$, whose modulus is bounded by 1, see \cite{BLW-Q}.
\end{remark}

Another example of a non-Abelian anyon quantum system will be discussed below as a special case of Example~\ref{zewq343u}.

\subsection{General spatial dependence}
We will now consider an example of a matrix $C_{x, y}$ with somewhat more complicated dependence on spatial
variables $x, y \in X$.

\begin{example}\label{zewq343u}
Let $Q_1,Q_2:X^{(2)}\to \mathbb C$ satisfy
$$Q_i(x,y)=\overline{Q_i(y,x)},\quad i=1,2,\quad |Q_1(x,y)|\le 1,\quad
|Q_2(x,y)|=1,\qquad  (x,y)\in X^2.$$

Let matrix $C_{x, y}$  have the form
\begin{equation*}
C_{x, y}=\left(
\begin{matrix}
0 & 0 & 0 & Q_1(x,y) \\
0 &  Q_2(x,y) & 0 & 0 \\
0 &  0 & Q_2(x,y)  & 0 \\
Q_1(x,y) & 0 & 0 & 0%
\end{matrix}%
\right).
\end{equation*}
Note that $C_{x,y}=C_{y,x}^*$.
A direct calculation shows that  $C_{x, y}$
satisfies the Yang--Baxter equation (\ref{YBxyz}).
For  $x, y\in X^{(2)}$, we have
$$\mathbf 1 - C_{x, y}C^*_{x, y}= \left(
\begin{matrix}
1-|Q_1(x,y)|^2 & 0 & 0 & 0 \\
0 &  0 & 0 & 0 \\
0 &  0 &0  & 0 \\
0 & 0 & 0 & 1-|Q_1(x,y)|^2%
\end{matrix}%
\right).$$
We denote
$$Y:=\big\{(x,y)\in X^2\mid Q_1(x,y)|^2=1\},\quad Z:=\{(x,y)\in X^2\mid Q_1(x,y)|^2<1\big\}.$$
Then, for $(x,y)\in Y$, $\mathbf 1 - C_{x, y}C^*_{x, y}=\mathbf 0$, and for all $(x,y)\in Z$,
$$\ker (\mathbf 1 - C_{x, y}C^*_{x, y})=\ls\{e_{12},e_{21}\},\quad \ran(\mathbf 1 - C_{x, y}C^*_{x, y})=\ls\{e_{11},e_{22}\}. $$
Hence, by Theorem~\ref{vytr7i5}, for $n\ge 2$, $\mathcal F_n(H)$ consists of all functions $f^{(n)}\in H^{\otimes n}$ that satisfy a.e.\ the following symmetry conditions:
\begin{align*}
&f^{(n)}_{i_1\dots i_{k-1}11i_{k+2}\dots i_n}(x_1,\dots,x_n)\\
&\quad =
Q_1(x_k,x_{k+1})f^{(n)}_{i_1\dots i_{k-1}22i_{k+2}\dots i_n}
(x_1,\dots,x_{k-1},x_{k+1},x_k,x_{k+2},\dots,x_n)\quad\text{if }(x_k,x_{k+1})\in Y,\\
 &f^{(n)}_{i_1\dots i_{k-1}12i_{k+2}\dots i_n}(x_1,\dots,x_n)\\
&\quad =
Q_2(x_k,x_{k+1})f^{(n)}_{i_1\dots i_{k-1}12i_{k+2}\dots i_n}
(x_1,\dots,x_{k-1},x_{k+1},x_k,x_{k+2},\dots,x_n),\\
&f^{(n)}_{i_1\dots i_{k-1}21i_{k+2}\dots i_n}(x_1,\dots,x_n)\\
&\quad =
Q_2(x_k,x_{k+1})f^{(n)}_{i_1\dots i_{k-1}21i_{k+2}\dots i_n}
(x_1,\dots,x_{k-1},x_{k+1},x_k,x_{k+2},\dots,x_n),
\end{align*}
for all $i_1,\dots,i_{k-1},i_{k+2},\dots,i_n\in\{1,2\}$ and $k\in\{1,\dots,n-1\}$.

In the case where the set $Z$ is empty (or of zero measure), the corresponding operator $T$ is unitary, hence $\mathbb P_n=\frac1{n!}\mathcal P_n$. If the set $Z$ is of positive measure, the form $(\mathfrak Pf^{(2)})(x,y)$ will depend on whether $(x,y)$ is a point of $Y$ or $Z$. In both cases, the explicit form of $(\mathfrak Pf^{(2)})(x,y)$ can be easily calculated by using Corollary \ref{rewa5w}. We leave the details to the interested reader.

By \eqref{ctrs6uw}, we get
\begin{equation*}
\widetilde C_{x, y}=\left(
\begin{matrix}
0 & 0 & 0 & Q_2(x,y) \\
0 &  Q_1(x,y) & 0 & 0 \\
0 &  0 & Q_1(x,y)  & 0 \\
Q_2(x,y) & 0 & 0 & 0%
\end{matrix}%
\right).
\end{equation*}
Hence, by Corollary \ref{formalCR}, we get the following formal commutation  relations:
\begin{align}
&a_1^-(x)a_1^+(y)=Q_2(x,y)a_2^+(y)a_2^-(x)+\delta(x-y),\notag\\
&a_2^-(x)a_2^+(y)=Q_2(x,y)a_1^+(y)a_1^-(x)+\delta(x-y),\notag\\
&a_1^-(x)a_2^+(y)=Q_1(x,y)a_1^+(y)a_2^-(x),\notag\\
&a_2^-(x)a_1^+(y)=Q_1(x,y)a_2^+(y)a_1^-(x),\notag\\
&a_1^+(x)a_2^+(y)=Q_2(x,y)a_1^+(y)a_2^+(x),\notag\\
&a_2^+(x)a_1^+(y)=Q_2(x,y)a_2^+(y)a_1^+(x),\notag\\
&a_1^+(x)a_1^+(y)=Q_1(x,y)a_2^+(y)a_2^+(x)\quad\text{if }(x,y)\in Y,\notag\\
&a_1^-(x)a_2^-(y)=Q_2(x,y)a_1^-(y)a_2^-(x),\notag\\
&a_2^-(x)a_1^-(y)=Q_2(x,y)a_2^-(y)a_1^-(x),\notag\\
&a_1^-(x)a_1^-(y)=Q_1(x,y)a_2^-(y)a_2^-(x)\quad\text{if }(x,y)\in Y.\label{tsw5w}\end{align}

Let us consider a special case of such a construction. Fix any $q_1,q_2\in\mathbb C$ with $|q_1|=|q_2|=1$ and define
$$Q_i(x,y)=\begin{cases}
q_i,&\text{if }x<y,\\ \bar q_i,&\text{if }x>y,\end{cases}\quad i=1,2.$$
With such a choice of functions $Q_1$, $Q_2$ and $d=2$, the above construction gives an example of a non-Abelian anyon quantum system with the operator
$$C= \left(
\begin{matrix}
0 & 0 & 0 & q_1 \\
0 &  q_2 & 0 & 0 \\
0 &  0 & q_2  & 0 \\
q_1 & 0 & 0 & 0%
\end{matrix}%
\right).$$
Note that, in this case, the commutation relations \eqref{tsw5w} hold for all $(x,y)\in X^{(2)}$.

Further examples of such a construction can be achieved by choosing
$$Q_1(x,y)=ke^{i\alpha(x-y)},\quad Q_2(x,y)=e^{i\beta(x-y)},$$
where $k\in[-1,1]$ and $\alpha,\beta\in\mathbb R$.

\end{example}

{\bf Acknowledgements}. AD, EL and DP are grateful to the London Mathematical Society for partially supporting the visit of DP to Swansea University and University of York. The authors are grateful to Marek Bo\.zejko, Ivan Feshchenko, Gerald Goldin, Alexey Kuzmin and Janusz Wysocza\'nski for useful discussions.

\end{document}